\pgfplotsset{compat = newest}
\theoremstyle{plain}
\newtheorem{proposition}{Proposition}
\newtheorem{theorem}{Theorem}
\newtheorem*{theorem*}{Theorem}
\newtheorem{lemma}{Lemma}
\newtheorem{definition}{Definition}
\theoremstyle{remark}
\newtheorem{example}{Example}
\newtheorem{remark}{Remark}
\newcommand{\N}{\mathbb{N}}
\newcommand{\R}{\mathbb{R}}
\newcommand{\EMLRP}{\mathcal{E}^{MLRP}}
\newcommand{\Ss}{\mathbf{S}}
\newcommand{\Rmnum}[1]{\expandafter\@slowromancap\romannumeral #1@}
\newcommand{\xc}[1]{{***\textcolor{red}{XC: #1}}***}
\newcommand{\pl}{\text{PARL}}
\title{	\textbf{On the Monotonicity of Information Costs}%
	\footnote{We thank Arjada Bardhi, Tommaso Denti, Teddy Kim, R.Vijay Krishna, Fei Li, Jeffrey Mensch, Luciano Pomatto,	Todd Sarver, Bruno Strulovici, Curtis Taylor, Jo\~{a}o Thereze, Udayan Vaidya, Kun Zhang, Mu Zhang, and seminar participants at KAEA-VSS Micro Seminar, FSU, UNC, Duke, Mannheim, ITAM, Iowa, Georgetown, UCF, UCR, and SEA 2024 for insightful comments. 
	}
	\vspace{20pt}	
}
\author{Xiaoyu Cheng\footnote{Department of Economics, Florida State University, Tallahassee, FL, USA. E-mail: \href{mailto:xcheng@fsu.edu}{xcheng@fsu.edu}.}\and Yonggyun Kim\footnote{Department of Economics, Florida State University, Tallahassee, FL, USA. E-mail: \href{mailto:ykim22@fsu.edu}{ykim22@fsu.edu}.}}
\begin{document}
	\maketitle
	
	\begin{abstract}
				
		We study the monotonicity of information costs: more informative experiments must be more costly. As criteria for informativeness, we consider the standard information orders introduced by \cite{blackwell1951comparison,blackwell1953equivalent} and \cite{lehmann1988comparing}. We provide simple necessary and sufficient conditions for a cost function to be monotone with respect to each order, grounded in their garbling characterizations.
		Finally, we examine several well-known cost functions from the literature through the lens of these conditions.
		
		\bigskip
		
		\noindent\textit{JEL Classification: } C78, D81, D82, D83
		
		\noindent\textit{Keywords: } Blackwell order, Lehmann order, Rational Inattention, Statistical Experiments
	\end{abstract}
	
	\newpage
	
	\section{Introduction}

        Recent developments in economic theory have expanded the scope of decision-making by modeling information as a costly choice variable. This approach puts information on par with other key decision variables in economic models: just as consumers select consumption bundles and producers choose input combinations, agents actively decide which information to acquire, taking into account the cost of doing so. This framework underlies a growing body of work across economics, including macroeconomics \citep{sims2003implications}, finance \citep{van2010information}, and games \citep{matvejka2012simple,yang2015coordination,Ravid2020}.%
        \footnote{See \cite{mackowiak2023rational} for a recent and comprehensive survey of the literature.}
        
        When modeling the cost of information, the most fundamental requirement is \emph{monotonicity}, that is, more informative options should entail higher costs. 
        Unlike consumption bundles or input combinations, for which the notion of “more” is straightforward, informativeness is more subtle and often defined indirectly. One natural approach is to assess informativeness by how much the information improves decision-making. Two classical information orders—introduced by \citet{blackwell1951comparison, blackwell1953equivalent} and \citet{lehmann1988comparing}—formalize this idea and are widely recognized as standard criteria for comparing the informativeness of statistical experiments. Specifically, an experiment is more informative in the Blackwell sense if it yields higher expected payoffs in all decision problems, whereas it is more informative in the Lehmann sense if it does so across various classes of monotone decision problems---such as those with preferences satisfying the single-crossing property \citep{milgrom1994monotone} or the interval dominance order property \citep{quah2009comparative}.

        The generality of the Blackwell order, while being its greatest strength, is also its primary limitation. By requiring one experiment to dominate another across all decision problems, the criterion is often too demanding, rendering many pairs of experiments incomparable. This has motivated researchers to seek finer information orders by narrowing the class of decision problems considered. The Lehmann order is a leading example, focusing on the broad and economically significant class of monotone decision problems including auctions and screening problems.\footnote{The Lehmann order is applied in auctions \citep{persico2000information,bobkova2024information}, finance \citep{bond2023ordering}, and screening \citep{kim2023a,asseyer2025information}. It also extends to decisions under ambiguity \citep{li2020information}. } This refinement enables meaningful comparisons of experiments that are otherwise incomparable under the Blackwell order, providing a more discerning tool for this important set of economic applications. 

        This paper develops a unified framework for characterizing the monotonicity of information costs with respect to both the Blackwell and Lehmann orders using first-order conditions.
        Our primary contribution is the first general characterization of Lehmann monotonicity. This result fills an important gap identified above, as previous analyses have focused primarily on Blackwell monotonicity. Our characterization yields tractable first-order conditions that make the verification of Lehmann monotonicity computationally and conceptually transparent and identify broad families of information costs that satisfy it.
    
        As part of the unified framework, we also provide a first-order characterization of Blackwell monotonicity, complementing existing results in the literature. Our analysis offers a unified  geometric perspective on both notions of monotonicity and demonstrates how they can be studied within a common analytical structure. Although the results are developed in the context of information costs, the characterizations apply more broadly for any cardinal representation of experiments that respects the Blackwell or Lehmann information order.
                   
        Our unified approach builds on the garbling representations of the two information orders. For the Blackwell order, it is well known that one experiment is more informative than another if the latter can be obtained by a garbling of the former, that is, by introducing additional noise. The simplest such garbling is the one that replaces one signal with another with some probability, while leaving other signals unchanged. Since this operation reduces informativeness, Blackwell monotonicity requires that the cost function decreases in the direction of such signal replacements. Taking the replacement probability to zero yields a natural first-order condition, which we call \emph{decreasing in signal replacement}. Our characterization results (Theorems \ref{thm:binary-blackwell-monotone} and \ref{thm:BM-general}) show that this local condition is sufficient for global Blackwell monotonicity under additional natural conditions regulating the cost of equally informative experiments.\footnote{Specifically, \emph{permutation invariance} requires the costs remain unchanged when permuting the labels of the signals, while \emph{split invariance} requires that splitting a signal into two copies does not change the cost. Both operations do not change the informativeness of an experiment. The results also require the cost function to be absolutely continuous, and in some cases, differentiable.} 
                
        Following the same logic as for the Blackwell order, our analysis of Lehmann monotonicity builds on its characterization in terms of garbling. This representation has been established recently by \citet{kim2023a}. It was shown that, for experiments satisfying the monotone likelihood ratio property (MLRP), one experiment is Lehmann more informative than another if the latter can be obtained from the former through a generalized form of garbling---one that introduces noise reversely monotone with respect to the underlying states. This generalized garbling gives rise to a family of \emph{reverse signal replacement} operations, which replace a low (high) signal with a high (low) signal with some probability, but only in states below (above) a certain threshold. Taking the replacement probability to zero again yields a first-order condition, which we call \emph{decreasing in reverse signal replacement}. Our characterization results (Theorems \ref{thm:binary-lehmann-monotone} and \ref{thm:LM-general}) show that this local condition plays the same central role for Lehmann monotonicity. 
        
        Our primary technical contribution is a method for constructing informativeness-reducing paths between two comparable experiments. The key to proving that our local monotonicity conditions imply global monotonicity lies in this construction, which allows us to decompose arbitrary information loss, in the sense of the Blackwell and Lehmann orders, into continuous paths of local signal replacements and local reverse signal replacements, respectively. 
        The main challenge, particularly for the Lehmann order, is the requirement that all intermediate experiments along these paths must preserve the MLRP. This is a significant hurdle because the set of MLRP experiments is non-convex, making standard analytical tools that often rely on convexity inapplicable.
        
        Our path-construction method overcomes this challenge by ensuring the constructed paths lie entirely within the MLRP set, thereby circumventing the non-convexity problem. We achieve this by leveraging novel geometric insights into the structure of the garbling representations for both the Blackwell and Lehmann orders. Specifically, our unified analysis draws on the zonotope representation of the Blackwell order \citep{bertschinger2014blackwell} and the probability-probability (PP) plot representation of the Lehmann order \citep{jewitt2007information}.
        
        To illustrate the main ideas, we present the analysis in two stages. We first examine the special case of binary-signal experiments (Section \ref{sec:binary}), where the simpler structure clarifies how signal replacements serve as fundamental building blocks and how the MLRP can be preserved. We then extend the analysis to the more intricate case of arbitrary finite signal spaces (Section \ref{sec:finite}), demonstrating how the same core principles apply in full generality. 
        
        Finally, to illustrate the usefulness of our characterizations, we apply them to several widely studied classes of information cost functions. For likelihood-separable and posterior-separable costs, for which Blackwell monotonicity has been established \citep{caplin2015revealed, denti2022experimentalorder}, we derive conditions under which these costs also satisfy Lehmann monotonicity. This analysis identifies a broad subclass satisfying both notions of monotonicity, while also showing that Lehmann monotonicity does not hold in general. \citet{fosgerau2020discrete} introduce the class of Bregman information costs, which provide a rational inattention foundation for additive random utility models. These costs are not invariant to signal permutations and are thus known to violate Blackwell monotonicity globally. We further show that they could also violate our local monotonicity conditions and hence are neither Blackwell nor Lehmann monotone, even in a local sense.
        
        \paragraph{Related Literature}
        Our work contributes to the literature on information costs by unraveling the structure of \textbf{monotonicity}, a property considered to be a minimum requirement for any plausible information cost function. 
        
        This principle is typically formalized with respect to the information order by \cite{blackwell1951comparison,blackwell1953equivalent}, and much of the prior literature on the topic has focused on ensuring costs are Blackwell monotone.
        A significant body of research, for instance, treats Blackwell monotonicity as a foundational axiom, combining it with other properties to characterize specific classes of cost functions \citep{mensch2018cardinal,hebert2021neighborhood,pomatto2023a,baker2023,bloedel2021cost}. 
        Another strand of the literature takes a decision-theoretic approach, aiming to derive a representation of utility and information costs from observable choice data. In identifying such costs, Blackwell monotonicity emerges as a necessary consequence of rational, cost-minimizing behavior as in, e.g., \cite{deOliveira2017rationally} and \cite{chambers2020costly}.\footnote{This revealed preference framework has been broadly applied to derive testable implications of stochastic choice models and to characterize the behavioral foundations of various cost structures (e.g., \citealp{caplin2015revealed, denti2022posterior, lipnowski2022predicting}).} 
        
        Unlike in these works, which use monotonicity as an ingredient to derive a full functional form, our analysis isolates the property itself. We ask what, on its own, Blackwell monotonicity requires of a cost function. The result is a necessary and sufficient condition in the form of a simple, local first-order inequality, which provides a more fundamental understanding of this core property.
        
        The most closely related result to our characterization of Blackwell monotonicity is Claim 1 of \cite{ravid2022learning}, which likewise relies on directional derivatives of the cost function. A crucial distinction, however, arises from the domain of the cost function. Our framework defines cost directly on the space of experiments, yielding a prior-free characterization. In contrast, their analysis defines the cost function directly on the distribution of posterior means generated by an experiment. This approach is therefore inherently prior-dependent and implicitly treats any information not captured in the first moment of the posterior beliefs as costless. Due to this fundamental difference, the two characterizations are not directly translatable. 
        
        Subsequently, we extend our analysis to Lehmann monotonicity, a stronger notion of monotonicity that, to our knowledge, has not been systematically explored. 
        The information order by \cite{lehmann1988comparing} is the relevant criterion for the large and important class of monotone decision problems commonly found in economics \citep{quah2009comparative,Chi2015}.
        By characterizing necessary and sufficient conditions for Lehmann monotonicity, we provide a method of examining whether an existing cost function satisfies Lehmann monotonicity or not. For instance, while the literature typically assumes that posterior- or likelihood-separable costs have ``concave'' or ``sublinear'' primitive functions to ensure Blackwell monotonicity \citep{denti2022experimentalorder}, these conditions alone do not guarantee Lehmann monotonicity. Our analysis fills this gap by providing a new set of conditions under which these costs also satisfy the more demanding Lehmann monotonicity (Propositions \ref{prop:likelihood-separable} and \ref{prop:posterior-separable-cost-Lehmann-monotone}). 
		
	\section{Binary Experiments \label{sec:binary}}

        We begin our analysis by focusing on the special case of \emph{binary experiments}---those consisting of only two signals---to illustrate the key ideas of our characterization in a simpler setting. In this section, we provide necessary and sufficient conditions for information cost functions defined over binary experiments to satisfy Blackwell monotonicity and Lehmann monotonicity, respectively. Although this focus is restrictive, the results are not merely illustrative: they also offer practical value in applications where the underlying signals or actions are binary, a common benchmark in many economic models.\footnote{It is well known that in binary-action problems, if the information cost is Blackwell or Lehmann monotone, it is without loss to focus on binary experiments.}
        
        \subsection{Blackwell Monotonicity }
        \subsubsection{Preliminaries}
        Let $\Omega = \{\omega_1, \cdots, \omega_n \}$ be a finite set of states and $ \mathcal{S} = \{s_L, s_H \} $ be a binary set of signals. 
        A binary (statistical) experiment $f : \Omega \rightarrow \Delta (\mathcal{S})$ can be represented by an $n \times 2$ matrix: 
        \begin{equation*}
        	f = \begin{bmatrix}
        		1-f_1 &  f_{1} \\
        		\vdots & \vdots \\
        		1-f_{n} & f_{n}
        	\end{bmatrix},
        \end{equation*}
        where $f_i = f(s_H|\omega_i)$ is the probability of observing the high signal ($s_H$) in state $\omega_i$. 
        For convenience, in this section only, we write $f = [f_{1}, \cdots,  f_{n}]^{\intercal} \in \mathcal{E}_2 \equiv [0,1]^n $, and refer to the associated matrix as $[\mathbf{1}- f, f]$ where $\mathbf{1} = [1,\cdots,1]^{\intercal}$. 
        
        \paragraph{Blackwell Information Order} An experiment $f$ is said to be \emph{Blackwell more informative} than another experiment $g$, denoted by $f \succeq_{B} g$, if and only if there exists a stochastic matrix $M$  (i.e., $M_{ij} \geq 0$ and $\sum_{j} M_{ij} = 1$ for all $i$) such that $ [\mathbf{1}-g, g]  = [\mathbf{1}-f, f] \ M$. Such a matrix $M$ is referred to as a \emph{garbling matrix}. 
        We say that $f$ and $g$ are \emph{equally informative}, denoted $f \simeq_{B} g$, if both $f \succeq_{B} g$ and $g \succeq_{B} f$ hold.
        When both $f$ and $g$ are binary experiments, any garbling matrix $M$ must be a $2 \times 2$ stochastic matrix.
        Let $\mathcal{M}_{2}$ denote the set of all such matrices.
        
        \paragraph{Information Costs} An \textit{information cost function} defined over binary experiments is denoted by $C: \mathcal{E}_2  \rightarrow \mathbb{R}_+$. We assume $C$ to be differentiable for discussions in the main text, but we will specify if this assumption is required in the statements of our theorems. Furthermore, say that $C: \mathcal{E}_2 \rightarrow \mathbb{R}_+$ is \textit{absolutely continuous} if for all $f,g \in \mathcal{E}_{2}$ and $t \in [0,1]$, the function $\varphi(t) = C(f + t(g-f))$ is absolutely continuous in $t$ over $[0,1]$.\footnote{There are multiple generalizations of absolute continuity from $\R$ to $\R^{n}$ emphasizing different aspects, see \cite{Dymond2017}. We adopt the generalization which requires the restriction of $C$ to any line segment is absolutely continuous. A sufficient condition for such absolute continuity is Lipschitz continuity.} Equivalently, the Fundamental Theorem of Calculus (FTC) holds, i.e., $\varphi(1) - \varphi(0) = \int_{0}^{1} \varphi'(t) dt$. 
        
        \paragraph{Blackwell Monotonicity} Say that $C $ is \textbf{Blackwell monotone} if for all $f,g \in \mathcal{E}_2  $, $C(f) \geq C(g)$ whenever $f \succeq_{B} g$.
        For a given cost function $C$, let $S_{C}(f) = \{g \in \mathcal{E}_{2}: C(f) \geq C(g)\}$ denote its sublevel set at $f $, and let $S_{B}(f) = \{g \in \mathcal{E}_{2}: f \succeq_{B} g\}$ denote the set of experiments less informative than $ f $ in the Blackwell order. By definition, $C$ is Blackwell monotone if and only if $S_{C}(f) \supseteq S_{B}(f) $  for all $ f \in \mathcal{E}_2 $.
        
        \subsubsection{Parallelogram Hull}
        
        To establish necessary and sufficient conditions for Blackwell monotonicity, we begin by characterizing the sublevel set of binary experiments, $S_B(f)$.
        For any $f, g \in \mathcal{E}_2 $ with $f \succeq_{B} g$, there exists $M \in \mathcal{M}_{2}$ such that $[\mathbf{1}-g , g] = [\mathbf{1}-f , f] M$. 
        Any stochastic matrix $M \in \mathcal{M}_{2}$ can be written, for some $(a,b) \in [0,1]^{2}$, as
        \begin{equation*}
        	M = \begin{bmatrix} a & 1-a \\ b & 1-b \end{bmatrix},
        \end{equation*}
        which implies that $g = a f + b (\mathbf{1} - f)$. This leads to the following characterization.
                
        \begin{lemma}\label{lem:geometric_binary_experiment}
        	For any $f,g \in \mathcal{E}_2 $, $f \succeq_{B} g$ if and only if $g$ lies in the \textbf{parallelogram hull} of $f$ and $\mathbf{1}-f$, defined as 
        	\begin{equation*}
        		\pl(f, \mathbf{1}-f) \equiv \left\{ a f + b(\mathbf{1}-f): a, b\in [0,1] \right\}. 
        	\end{equation*}
        	In other words, $S_{B}(f) = \pl(f, \mathbf{1}-f)$.\footnote{This geometric characterization of the Blackwell order coincides with the zonotope order in \cite{bertschinger2014blackwell}, which is further applied in \cite{deOliveira2024robust} 
        	}
        \end{lemma}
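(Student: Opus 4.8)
The plan is to prove the two set-theoretic inclusions separately, showing that $S_B(f) \subseteq \pl(f,\mathbf{1}-f)$ and $\pl(f,\mathbf{1}-f) \subseteq S_B(f)$, thereby establishing the claimed equality. The entire argument rests on translating the definition of the Blackwell order into explicit matrix algebra, which the excerpt has already set up for us.

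For the forward direction ($\Rightarrow$), suppose $f \succeq_B g$. By definition there is a garbling matrix $M \in \mathcal{M}_2$ with $[\mathbf{1}-g,g] = [\mathbf{1}-f,f]\,M$. Writing $M = \begin{bmatrix} a & 1-a \\ b & 1-b \end{bmatrix}$ with $(a,b) \in [0,1]^2$, I would carry out the matrix multiplication column by column. Reading off the second column gives $g = a f + b(\mathbf{1}-f)$, which exhibits $g$ as an element of $\pl(f,\mathbf{1}-f)$ with coefficients $a,b \in [0,1]$. (As a consistency check, the first column yields $\mathbf{1}-g = (1-a)f + (1-b)(\mathbf{1}-f)$, which is automatically equivalent to the second-column identity since the rows of $M$ sum to one, confirming no extra constraint is imposed.) This direction is essentially the computation already displayed in the excerpt immediately before the lemma.

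For the reverse direction ($\Leftarrow$), suppose $g \in \pl(f,\mathbf{1}-f)$, so $g = a f + b(\mathbf{1}-f)$ for some $a,b \in [0,1]$. I would simply reverse the construction: define $M = \begin{bmatrix} a & 1-a \\ b & 1-b \end{bmatrix}$ and verify it is a valid garbling matrix, i.e. all entries are nonnegative (immediate from $a,b \in [0,1]$) and each row sums to one (immediate by construction). Then a direct computation shows $[\mathbf{1}-f,f]\,M = [\mathbf{1}-g,g]$, using $g = af + b(\mathbf{1}-f)$ for the second column and the row-sum property of $M$ for the first. Hence $f \succeq_B g$.

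The only subtlety, and the point most worth stating carefully, is the claim that \emph{every} stochastic matrix in $\mathcal{M}_2$ has the displayed form with both parameters ranging over $[0,1]$; this is what makes the two directions exactly inverse and pins down $S_B(f)$ as the full parallelogram hull rather than a proper subset. Since a $2\times 2$ row-stochastic matrix is determined by its two first-column entries, each independently lying in $[0,1]$, this parametrization is complete, and there is no genuine obstacle here. I would therefore expect the proof to be short and entirely computational, with the bulk of the work being the two symmetric matrix multiplications; the main thing to get right is bookkeeping the correspondence between the garbling parameters $(a,b)$ and the parallelogram-hull coefficients so that the forward and reverse constructions are manifestly mutually inverse.
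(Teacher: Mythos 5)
Your proposal is correct and follows essentially the same route as the paper, which proves the lemma via exactly this inline computation: parametrize an arbitrary $M \in \mathcal{M}_{2}$ by its two first-column entries $(a,b) \in [0,1]^{2}$, read off $g$ from the product $[\mathbf{1}-f,f]\,M$, and reverse the construction for the converse. One bookkeeping nit: with the displayed parametrization of $M$, the second column of the product is actually $g = (1-b)f + (1-a)(\mathbf{1}-f)$, so the hull coefficients are $(1-b,1-a)$ rather than $(a,b)$; this is immaterial—the square $[0,1]^{2}$ is invariant under $(a,b) \mapsto (1-b,1-a)$, and the paper glosses over the same relabeling—but your reverse direction should define $M$ with rows $(1-b,\,b)$ and $(1-a,\,a)$ for the two computations to be literally mutually inverse.
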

        
        When the state space is binary (i.e., $n = 2$), the parallelogram hull corresponds to the shaded region (ABCD) in Figure \ref{fig:BM_cost_Binary}. 
        Specifically, for binary-state and binary-signal experiments, $f$ is Blackwell more informative than $g$ if and only if $g$ lies within the parallelogram (ABCD). 
                
        \begin{figure}
        	\centering
        	\begin{subfigure}[b]{0.46\textwidth}
        		\centering
        		\hspace*{20pt}
        		\begin{tikzpicture}[scale=4.5,baseline=(current bounding box.north)]
        			\def \xt {1/5};
        			\def \yt {3/4};
        			
        			\draw [->] (0,0) -- (1.1,0) node[right] {$ f_1 $}; 
        			\draw [->] (0,0) -- (0,1.1) node[above] {$ f_2 $};
        			\draw [dotted] (0,1) -- (1,1) ;
        			\draw [dotted] (1,0) -- (1,1); 
        			{\draw [fill=black] (\xt,\yt) circle (.07ex)  node [anchor = north west] {$f$} node [anchor = south west] at (\xt,\yt+0.04) {$A$};}
        			{\draw [fill=black] (1-\xt,1-\yt) circle (.07ex)  node [anchor= south east ] {$\mathbf{1}-f$} node [ anchor = north west] {$C$};}
        			{\draw [fill=black] (0,0) circle (.07ex)  node [anchor = north east] {$B$};}
        			{\draw [fill=black] (1,1) circle (.07ex)   node [anchor = south west] {$D$};}
        			
        			{\draw (0,0) -- (\xt,\yt) -- (1,1) --  ({1-\xt},{1-\yt}) --cycle ;}

        			{\node[below] at (1,0) {\small 1}
        				;}
        			{\node[left] at (0,1) {\small 1}
        				;}	
        
        			\draw [name path = A, -, blue, line width = 1pt] (\xt,\yt) --  (\xt - .5*\yt,\yt + 0.5*\xt);	
        			\draw [name path = B, -, blue, line width = 1pt] (\xt,\yt) --  ({\xt + .5*(\yt -1)}, {\yt + 0.5*(1-\xt)});	
        			
        			\draw [fill=blue, opacity=.1] (\xt,\yt) --  (\xt - .4*\yt,\yt + 0.4*\xt)-- ({\xt + .4*(\yt -1)}, {\yt + 0.4*(1-\xt)});

                    \draw [-stealth, line width = 1.5pt] (\xt,\yt) --  (.7*\xt,.7*\yt);	
			        \draw [-stealth, line width = 1.5pt] (\xt,\yt) --  ({\xt+.3*(1-\xt)},{\yt+.3*(1-\yt)});	

                    \draw [line width = 0.8 pt] (.95*\xt,.95*\yt) -- (.95*\xt - .05*\yt, .95*\yt + .05* \xt) -- (\xt - .05*\yt, \yt + .05* \xt);
                    \draw [line width = 0.8 pt] ({\xt + .05*(1-\xt)}, {\yt + .05*(1-\yt)}) -- ({\xt + .05*(1-\xt)+ .05*(\yt - 1)}, {\yt + .05*(1-\yt) + .05*(1-\xt)}) -- ({\xt + .05*(\yt -1)}, {\yt + 0.05*(1-\xt)});
        		\end{tikzpicture}
        		\caption{\centering A parallelogram hull and its polar cone}
        		\label{fig:BM_cost_Binary}
        	\end{subfigure}
        	~~
        	\begin{subfigure}[b]{0.46\textwidth}
        		\centering
        		
        		\hspace*{20pt}
        		\begin{tikzpicture}[scale=4.5,baseline=(current bounding box.north)]
        			\def \xt {1/5};
        			\def \yt {3/4};
        			
        			\draw [->] (0,0) -- (1.1,0) node[right] {$ f_1 $}; 
        			\draw [->] (0,0) -- (0,1.1) node[above] {$ f_2 $};
        			\draw [dotted] (0,1) -- (1,1) ;
        			\draw [dotted] (1,0) -- (1,1); 
        			
        			\draw [dashed] (0,0) -- (1,1); 
        			
        			{\draw [fill=black] (\xt,\yt) circle (.07ex)  node [anchor = north west] {$f$} node [anchor = south east] {$A$};}
        			{\draw [fill=black] ( 11/20, 11/16) circle (.07ex)  node [anchor= north west ] {$g$} ;}
        			{\draw [fill=black] ( 1-\xt,1-\yt) circle (.07ex)  node [ anchor = north west] {$C$};}
        			{\draw [fill=black] (0,0) circle (.07ex)  node [anchor = north east] {$B$};}
        			{\draw [fill=black] (1,1) circle (.07ex)   node [anchor = south west] {$D$};}
        			{\draw [fill=black] (.5*\xt,.5*\yt) circle (.07ex) node [anchor = north west] {$f'$};}
        			
        			{\draw (0,0) -- (\xt,\yt) -- (1,1) --  ({1-\xt},{1-\yt}) --cycle ;}
        			
        			\draw [gray, -{Stealth}_____{Stealth}_____{Stealth}, line width = 1.2pt] (\xt,\yt) --  (.5*\xt,.5*\yt);	
        			\draw [gray,-{Stealth}_____{Stealth}_____{Stealth}_____{Stealth}, line width = 1.2pt] (.5*\xt,.5*\yt) --  (11/20,11/16);
        			\draw [dotted] (.5*\xt,.5*\yt) --  (1,1);
        			
        			{\node[below] at (1,0) {\small 1}
        				;}
        			{\node[left] at (0,1) {\small 1}
        				;}			
        			
        		\end{tikzpicture}
        		\vspace*{3pt}
        		\caption{\centering A decreasing path from $f$ to $g$}
        		\label{fig:nonBM_cost_Binary}
        	\end{subfigure}
        	\caption{A Graphical Illustration with Binary States}
        	\label{fig:Binary}
        \end{figure}
        
        \subsubsection{The Characterization}
        
        \paragraph{Permutation Invariance}
        Observe that $S_B(f) = S_B(\mathbf{1}-f) = \pl(f, \mathbf{1} - f) $, which implies that $f \succeq_B (\mathbf{1}- f) \succeq_B f$, or equivalently, $f \simeq_B (\mathbf{1} - f)$. Intuitively, $\mathbf{1}-f$ can be obtained by permuting the signals of $f$. This relabeling of signals should preserve the same information content. Therefore, any Blackwell monotone cost function $C$ must satisfy $C(f) = C(\mathbf{1} - f )$. We refer to this property as \textit{permutation invariance} and it serves as a necessary condition for Blackwell monotonicity.
        
        \paragraph{Decreasing in Signal Replacement} 
        
        Consider the following two garbling matrices: 
        \begin{equation*}
        	M_1 = \begin{bmatrix}
        		1- \epsilon & 	\epsilon 	\\
        		0 & 	1
        	\end{bmatrix}, \qquad 
        	M_2 = \begin{bmatrix}
        		1 & 	0 	\\
        		\epsilon & 	1-\epsilon 
        	\end{bmatrix}.
        \end{equation*}
        The garbling induced by $M_1$ can be interpreted as replacing the low signal ($s_L$) with the high signal ($s_H$) with probability $\epsilon $, while keeping $s_H$ unchanged. Similarly, $M_2$ replaces $s_H$ with $s_L$ with probability $\epsilon$, while keeping $s_L$ unchanged. 
        Applying these garblings to an experiment $f$ should reduce its cost if $C$ is Blackwell monotone: 
        \begin{equation*}
        	C( f ) \ge C( f + \epsilon (\mathbf{1} - f) ), \qquad C(f) \ge C( f - \epsilon f ). 
        \end{equation*}
        Notice these inequalities hold for all $\epsilon \in [0,1]$. Then taking the limit as 
		$\epsilon \to 0 $, we obtain the following first-order condition: Let $\nabla C $ denote the gradient of $C$ and let $\langle \nabla C(f), h \rangle$ denote the directional derivative of $C$ at $f$ in the direction of $h$. 

		\begin{definition}\label{def:decreasing_signal_replacement_binary}
		A cost function $C: \mathcal{E}_{2} \rightarrow \R_{+}$ is said to be \textbf{decreasing in signal replacement} if for all $f \in \mathcal{E}_{2}$, 
		\begin{align}
        	& \langle \nabla C(f), \mathbf{1} - f \rangle = \sum_{i=1}^{n} \frac{\partial C}{\partial f_i} (1-f_i) \leq 0, \label{ineq:MCLH} \\
        	& \langle \nabla C(f),  - f \rangle = \sum_{i=1}^{n} \frac{\partial C}{\partial f_i} (-f_i) \leq 0. \label{ineq:MCHL}
        \end{align}
		\end{definition}
		When $C$ is not differentiable, the inequalities are required to hold whenever the (one-sided) directional derivatives exist; see Appendix \ref{appendix:binary} for details. This property is clearly necessary for Blackwell monotonicity.\footnote{Both directions are well-defined for all $f \in \mathcal{E}_{2}$, since $f + \epsilon (\mathbf{1} - f) \in \mathcal{E}_{2}$ and $f - \epsilon f \in \mathcal{E}_{2}$ for all $\epsilon \in [0,1]$.} 
        
        Geometrically, the vectors $-f$ and $\mathbf{1} - f$ correspond to the two extreme directions of decreasing informativeness, $\overrightarrow{AB}$ and $\overrightarrow{AD}$, respectively, in Figure \ref{fig:BM_cost_Binary}. Inequalities \eqref{ineq:MCLH} and \eqref{ineq:MCHL} together imply that $\nabla C(f)$ must lie in the polar cone of $\pl(f, \mathbf{1} - f)$, which is the blue shaded area in Figure \ref{fig:BM_cost_Binary}. In other words, Blackwell monotonicity imposes a constraint on the feasible directions of the gradient of $C$ at each experiment. 
                
        Our first main result shows that for binary experiments, these two properties are not only necessary but also sufficient for Blackwell monotonicity.
        
        \begin{theorem}\label{thm:binary-blackwell-monotone}
        	Suppose $C: \mathcal{E}_2 \rightarrow \mathbb{R}_+$ is absolutely continuous.  
        	Then, $C$ is Blackwell monotone if and only if $C$ is permutation invariant and decreasing in signal replacement.  
        \end{theorem}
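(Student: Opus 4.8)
The statement has two directions: necessity is essentially the content of the remarks already made in the text, while sufficiency is the substance.

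\emph{Necessity.} Suppose $C$ is Blackwell monotone. Since $S_{B}(f)=S_{B}(\mathbf{1}-f)$ gives $f\simeq_{B}\mathbf{1}-f$, monotonicity in both directions forces $C(f)=C(\mathbf{1}-f)$, i.e.\ permutation invariance. For the signal-replacement inequalities, the garblings $M_1,M_2$ yield $f\succeq_{B} f+\epsilon(\mathbf{1}-f)$ and $f\succeq_{B} f-\epsilon f$ for every $\epsilon\in[0,1]$, hence $C(f)\ge C(f+\epsilon(\mathbf{1}-f))$ and $C(f)\ge C(f-\epsilon f)$; dividing by $\epsilon$ and letting $\epsilon\to0^{+}$ produces the one-sided directional-derivative inequalities \eqref{ineq:MCLH} and \eqref{ineq:MCHL}.

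\emph{Sufficiency — key observation.} The crux is that the two global corners $\mathbf{0}$ and $\mathbf{1}$ act as ``cost sinks'': moving along the straight segment toward either of them never increases $C$. Fix any base point $x_0$ and travel toward $\mathbf{1}$ along $x(t)=x_0+t(\mathbf{1}-x_0)$. Because $\mathbf{1}-x(t)=(1-t)(\mathbf{1}-x_0)$ is a nonnegative multiple of the velocity $\mathbf{1}-x_0$, evaluating \eqref{ineq:MCLH} at $x(t)$ gives $\langle\nabla C(x(t)),\mathbf{1}-x_0\rangle\le 0$ for a.e.\ $t$. By absolute continuity the FTC applies to $\varphi(t)=C(x(t))$, so $\varphi$ is nonincreasing; the identical argument using \eqref{ineq:MCHL} shows $C$ is nonincreasing along segments toward $\mathbf{0}$. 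This ``radial proportionality'' is precisely what converts the pointwise first-order conditions into monotonicity along the two extremal rays.

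\emph{Sufficiency — path construction.} Given $f\succeq_{B} g$, Lemma~\ref{lem:geometric_binary_experiment} writes $g=af+b(\mathbf{1}-f)$ with $a,b\in[0,1]$. If $a\ge b$ and $b<1$, first contract $f$ toward the origin to $f'=\tfrac{a-b}{1-b}\,f$, then travel from $f'$ toward $\mathbf{1}$; a direct check gives $x(b)=(1-b)f'+b\mathbf{1}=(a-b)f+b\mathbf{1}=g$, so the second ray passes through $g$ at parameter $s=b$. Both segments lie in $[0,1]^{n}=\mathcal{E}_2$ and both are cost-nonincreasing by the key observation, whence $C(f)\ge C(f')\ge C(g)$. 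The boundary case $b=1$ forces $g=\mathbf{1}$ and is settled by a single toward-$\mathbf{1}$ ray. If instead $a<b$, invoke permutation invariance: $\mathbf{1}-g=(1-a)f+(1-b)(\mathbf{1}-f)$ has first coefficient $1-a\ge 1-b$, so the preceding case applied to $\mathbf{1}-g$ yields $C(f)\ge C(\mathbf{1}-g)=C(g)$.

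\emph{Main obstacle.} The difficulty is concentrated in the two sufficiency paragraphs. First, one must notice that the fixed direction of travel and the direction appearing in \eqref{ineq:MCLH}--\eqref{ineq:MCHL} differ, and that their positive proportionality along radial segments is what lets the a.e.\ sign of $\varphi'$ be read off from the local conditions rather than from a direction that varies with the base point. Second, one must recognize that exactly two radial moves (toward $\mathbf{0}$, then toward $\mathbf{1}$) sweep out the half $\{a\ge b\}$ of the parallelogram hull, and that permutation invariance is the precise ingredient needed to cover the remaining half $\{a<b\}$, which consists of points nearer the corner $\mathbf{1}-f$ that no monotone radial path emanating from $f$ can reach directly.
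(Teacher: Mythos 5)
Your proof is correct and takes essentially the same route as the paper's: the identical two-segment path $f \to f' = \tfrac{a-b}{1-b}f \to g$ (contraction toward $\mathbf{0}$, then a ray toward $\mathbf{1}$ on which $g$ sits at parameter $b$), the same reduction of the case $a<b$ to $a\ge b$ via permutation invariance, and the same ``radial proportionality'' observation that lets the pointwise conditions \eqref{ineq:MCLH}--\eqref{ineq:MCHL} control the a.e.\ derivative along the segments before applying the FTC. This is precisely the content of the paper's Lemma \ref{lem:binary-decreasing-path} and its proof of Theorem \ref{thm:binary-blackwell-monotone}.
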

        
        \paragraph{Proof Sketch for Sufficiency}
        We use the binary-state case as in Figure \ref{fig:Binary} to illustrate the proof. Consider any experiment $g$ lying inside the parallelogram $ABCD$, i.e., $f \succeq_{B} g$. If $g$ is above the line $BD$ as in Figure \ref{fig:nonBM_cost_Binary}, there exists a two-segment path from $f$ to $g$, which moves only in the directions specified by \eqref{ineq:MCLH} and \eqref{ineq:MCHL}: moving from $f$ in the direction of $-f$ to reach $f'$ and then moving from $f'$ in the direction of $\mathbf{1}-f'$ to reach $g$. Absolute continuity ensures the directional derivatives exist almost everywhere, and \eqref{ineq:MCLH} and \eqref{ineq:MCHL} imply they are negative along this path. Thus, applying the FTC yields $C(g) \leq C(f)$.\footnote{Since the argument relies only on directional derivatives, differentiability of $C$ is not required.}
        
        If $g$ lies below the line $BD$, its permutation, $ \mathbf{1} -g $, lies above the line $BD$ and has the same cost as $g$, by permutation invariance. Then, the same argument applies to $\mathbf{1} -g $ implying $C(g) = C(\mathbf{1} -g ) \leq C(f)$. \qed
        
        \subsubsection{Further Characterizations with Binary States \label{subsec_binary}}
        
        When the state is binary, additional geometric insights can be drawn to provide further characterizations of Blackwell monotonicity. Given permutation invariance, it is without loss to focus on the following set of experiments:
        \begin{equation*}
        	\hat{\mathcal{E}}_2 \equiv \{ (f_1, f_2) \ : 0 \leq f_1 \leq f_2 \leq 1  \}. 
        \end{equation*}
                
        For any $f, g \in \hat{\mathcal{E}}_2$, from the parallelogram in Figure \ref{fig:nonBM_cost_Binary}, notice that $f \succeq_{B} g$ if and only if the slope of $AB$ for $f$ is steeper than that for $g$, and the slope of $AD$ for $f$ is shallower than that for $g$. In other words, $f \succeq_{B} g$ if and only if
        \begin{equation}\label{eq:alpha_beta}
        	\dfrac{f_2}{f_1} \ge \dfrac{g_2}{g_1} \qquad \text{and} \qquad  \dfrac{1-f_1}{1-f_2} \ge \dfrac{1-g_1}{1-g_2} .
        \end{equation}
        Let $\alpha \equiv \frac{f_2}{f_1} $ and $\beta \equiv \frac{1-f_{1}}{1-f_{2}} $.\footnote{Let $x/0 = +\infty$ for all $x>0$ and $0/0 = 1$.} The simple fact above yields the following characterization of Blackwell monotonicity which does not require any additional assumption on $C$. 
        
        \begin{proposition}\label{prop:binary-increasing-main-text}
        	$C: \hat{\mathcal{E}}_2 \rightarrow \R_{+}$ is Blackwell monotone if and only if $C \left( \tfrac{\beta - 1}{\alpha \beta - 1}, \tfrac{\alpha (\beta - 1)}{\alpha \beta -1 }  \right)$ is increasing in $\alpha$ and $\beta$. 
        \end{proposition}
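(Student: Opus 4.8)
The plan is to recognize the claim as a change-of-variables statement: the substitution $(f_1,f_2)\mapsto(\alpha,\beta)$ turns the Blackwell order on $\hat{\mathcal E}_2$ into the coordinatewise (product) order on a rectangle of $(\alpha,\beta)$ values, so that Blackwell monotonicity of $C$ becomes ordinary coordinatewise monotonicity of the transformed function.

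First I would verify the inversion formula. Starting from $\alpha=f_2/f_1$ and $\beta=(1-f_1)/(1-f_2)$, substitute $f_2=\alpha f_1$ into the second equation to solve $f_1(1-\alpha\beta)=1-\beta$, which yields $f_1=\tfrac{\beta-1}{\alpha\beta-1}$ and hence $f_2=\alpha f_1=\tfrac{\alpha(\beta-1)}{\alpha\beta-1}$. Thus the map $\Psi(\alpha,\beta)=\bigl(\tfrac{\beta-1}{\alpha\beta-1},\tfrac{\alpha(\beta-1)}{\alpha\beta-1}\bigr)$ appearing in the statement is precisely the inverse of $f\mapsto(\alpha,\beta)$. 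I would then record the ranges: on $\hat{\mathcal E}_2$ one has $f_2\ge f_1$ and $1-f_1\ge 1-f_2$, so $\alpha\ge 1$ and $\beta\ge 1$, with $\alpha=\infty$ exactly when $f_1=0$ and $\beta=\infty$ exactly when $f_2=1$ under the stated conventions; and a short computation confirms $\Psi$ maps $(1,\infty]\times(1,\infty]$ onto $\{f: 0\le f_1<f_2\le 1\}$, so the substitution is a genuine bijection on the informative experiments, with $\tilde C:=C\circ\Psi$ well defined there.

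Next I would invoke the key fact \eqref{eq:alpha_beta}: for $f,g\in\hat{\mathcal E}_2$, $f\succeq_B g$ holds if and only if $\alpha_f\ge\alpha_g$ and $\beta_f\ge\beta_g$. In other words $f\mapsto(\alpha_f,\beta_f)$ is an order isomorphism carrying $\succeq_B$ to the product order $\ge$ on $[1,\infty]^2$. The equivalence then follows mechanically in both directions. For necessity, given $(\alpha,\beta)\ge(\alpha',\beta')$, set $f=\Psi(\alpha,\beta)$ and $g=\Psi(\alpha',\beta')$; then $f\succeq_B g$, so Blackwell monotonicity gives $\tilde C(\alpha,\beta)=C(f)\ge C(g)=\tilde C(\alpha',\beta')$, i.e. $\tilde C$ is nondecreasing in the product order, equivalently nondecreasing in each of $\alpha$ and $\beta$ separately (fixing one argument recovers single-coordinate monotonicity, and chaining two single-coordinate steps $\tilde C(\alpha,\beta)\ge\tilde C(\alpha',\beta)\ge\tilde C(\alpha',\beta')$ recovers product-order monotonicity). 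For sufficiency, given $f\succeq_B g$ I would translate to $(\alpha_f,\beta_f)\ge(\alpha_g,\beta_g)$ and read off $C(f)=\tilde C(\alpha_f,\beta_f)\ge\tilde C(\alpha_g,\beta_g)=C(g)$ from coordinatewise monotonicity.

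The step I expect to be the main obstacle is the degenerate fiber of the substitution. Every uninformative experiment on the diagonal $f_1=f_2$ is mapped to the single corner $(\alpha,\beta)=(1,1)$, so $f\mapsto(\alpha,\beta)$ fails to be injective there and $\Psi$ is only a right inverse (it returns the representative $(1,1)$). I would handle this by observing that all diagonal experiments are mutually Blackwell-equivalent and are exactly the Blackwell-minimal elements of $\hat{\mathcal E}_2$: they satisfy $f\succeq_B g$ for every $f$. Consequently any Blackwell monotone $C$ is automatically constant on the diagonal and attains its global minimum there, which is precisely the information encoded by $\tilde C$ taking its least value at $(1,1)$; so the collapsed fiber contributes only the trivial requirement of a common minimal cost and does not affect the characterization of the informative part. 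I would likewise dispatch the boundary cases $f_1=0$ and $f_2=1$ (the edges $\alpha=\infty$ and $\beta=\infty$) by the same limiting identities, completing the argument.
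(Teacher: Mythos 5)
Your core argument is the same as the paper's: Proposition \ref{prop:binary-increasing-main-text} is presented there as an immediate consequence of \eqref{eq:alpha_beta}, i.e., of the observation that $f \mapsto (\alpha_f,\beta_f)$ carries the Blackwell order on $\hat{\mathcal{E}}_2$ to the coordinatewise order, with $\Psi(\alpha,\beta)=\bigl(\tfrac{\beta-1}{\alpha\beta-1},\tfrac{\alpha(\beta-1)}{\alpha\beta-1}\bigr)$ as the inverse parametrization of the informative experiments; your inversion computation and the two mechanical directions reproduce exactly that intended proof.

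However, your handling of the degenerate fiber---which you rightly identify as the main obstacle---has a genuine logical gap in the sufficiency direction. That direction needs the identity $C(g)=\tilde C(\alpha_g,\beta_g)$, which fails for every uninformative $g$ other than the representative: if $g=(c,c)$ with $0<c<1$, then $(\alpha_g,\beta_g)=(1,1)$ and $\tilde C(1,1)=C(1,1)$, which need not equal $C(c,c)$. Your patch observes that a Blackwell monotone $C$ must be constant on the diagonal, but Blackwell monotonicity is the \emph{conclusion} of this direction, so invoking it is circular; monotonicity of $\tilde C$ by itself places no constraint on $C$ off the representative, because $\Psi$ never returns any other diagonal point. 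The gap is not cosmetic: take $C(f)=f_2-f_1$ for all $f\in\hat{\mathcal{E}}_2$ except $C(1/2,1/2)=5$. Then $\tilde C(\alpha,\beta)=\frac{(\alpha-1)(\beta-1)}{\alpha\beta-1}$, which is increasing in both arguments, yet $C$ is not Blackwell monotone since $f\succeq_B (1/2,1/2)$ for every $f$ while $C(f)<5$ for informative $f$. So the sufficiency direction holds only under the additional requirement that $C$ be constant across uninformative experiments (equivalently, that $C$ descend to a well-defined function of $(\alpha,\beta)$). This imprecision is inherited from the paper's statement and its $0/0$ convention---the paper supplies no separate proof---but a complete write-up must state that extra hypothesis explicitly rather than assert that the collapsed fiber ``does not affect the characterization.''
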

        
        In other words, by a change of variables, Blackwell monotonicity in the binary-binary case is equivalent to the monotonicity of the function $\tilde{C}(\alpha, \beta) = C \left( \tfrac{\beta - 1}{\alpha \beta - 1}, \tfrac{\alpha (\beta - 1)}{\alpha \beta -1 }  \right)$ in both variables. In Appendix \ref{Appendix:binarybinary}, we provide examples of cost functions that can be verified easily using Proposition \ref{prop:binary-increasing-main-text}. It is worth mentioning here that \eqref{eq:alpha_beta} also proves useful in establishing Lehmann monotonicity in the next section.
        \subsection{Lehmann Monotonicity}
        
        We now explore Lehmann monotonicity for binary experiments. 
        \cite{lehmann1988comparing} introduces an information order that refines the Blackwell order for experiments satisfying the monotone likelihood ratio property (MLRP), defined as: 
        \begin{equation}
        	f(s|\omega) \cdot f(s'|\omega') \ge f(s|\omega') \cdot f(s'|\omega) \quad \text{ for all } \quad s'>s ~ \text{ and } ~ \omega' > \omega. \label{eq:MLRP}
        \end{equation}
        For binary experiments (with states and signals ordered by their index), the MLRP is equivalent to $f_1 \le \cdots \le f_n$. Thus, we restrict our attention to the following set of experiments:
        \begin{equation*}
        	\mathcal{E}^{MLRP}_2 \equiv \left \{ f \in \mathcal{E}_2 \ : \ 0 \le f_1 \le \cdots \le f_n  \le 1 \right \}. 
        \end{equation*}  
        Notice that while the MLRP is generally not preserved under convex combinations, $\mathcal{E}^{MLRP}_2$ is indeed convex. 
        
        \paragraph{Lehmann Information Order}
        
        Lehmann's order is originally defined on continuous signal spaces. This is without loss of generality, as he shows how to construct a continuous distribution from any distribution with discontinuous jumps. In our binary setup, for any $f \in \mathcal{E}_2$, we can associate a continuous experiment $\tilde{f}: \Omega \rightarrow \Delta ([0,2])$, where the probability distribution function given $\omega_i$ is $1-f_i$ for $s \in [0,1)$ and $f_i$ for $s \in [1,2]$. Formally, the cumulative distribution functions are given by: 
        \begin{equation}
        	\tilde{F}(x|\omega_i) = \begin{cases}
        		(1-f_i) x,  & 	\text{if } 0 \le x \le 1, \\
        		(1-f_i) + f_i (x-1), & \text{if } 1< x \le 2. 		
        	\end{cases}	\label{eq:cont_transform}
        \end{equation}
        For any $f , g \in \mathcal{E}_2^{MLRP}$, $f$ is \textit{Lehmann more informative} than $g$, denoted $f \succeq_L g$, if and only if for all $ y \in [0,2] $, $\tilde{F}^{-1}(  \tilde{G}(y|\omega) |\omega)$ is increasing in $\omega$. 
        Under the MLRP, the Lehmann order is a strict refinement of the Blackwell order: $f \succeq_B g$ implies $f\succeq_L g$, but the reverse is not true.
        \footnote{ For example, \cite{kim2023a} (Appendix A.2) provides a pair of binary experiments that satisfy the MLRP and are comparable in the Lehmann order, but not in the Blackwell order. Moreover, it is also shown that when the MLRP is violated, $f\succeq_B g$ does not necessarily imply $f \succeq_L g$, i.e., the MLRP is a crucial assumption for this refinement to hold.   }
                
        The following lemma provides a tractable characterization of the Lehmann order for binary experiments.
        
        \begin{lemma} \label{lem:lehmann-binary}
        	For any $f, g \in \mathcal{E}_2^{MLRP}$, $f \succeq_L g$ if and only if 
        	\begin{equation}
        		\frac{g_{i}}{f_{i}} \ge \frac{g_{i+1}}{f_{i+1}} \qquad \text{ and } \qquad \frac{1- g_{i+1}}{1-f_{i+1}} \ge \frac{1-g_i}{1-f_i},		\label{ineq:L-binary}
        	\end{equation}
        	for all $1 \le i \le n-1$. 
        \end{lemma}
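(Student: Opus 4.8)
The plan is to work directly from the definition of the Lehmann order via the quantile-matching map $T_i(y) := \tilde{F}^{-1}(\tilde{G}(y|\omega_i)|\omega_i)$. By definition $f \succeq_L g$ holds iff $T_i(y)$ is nondecreasing in $i$ for every $y \in [0,2]$, and since the state index is finite it suffices to compare consecutive states $i$ and $i+1$. The first step is to compute $T_i$ explicitly from \eqref{eq:cont_transform}. Because each $\tilde{F}(\cdot|\omega_i)$ and $\tilde{G}(\cdot|\omega_i)$ is piecewise linear with a single kink at the signal value $1$, the composite $T_i$ is an increasing piecewise-linear self-map of $[0,2]$ with $T_i(0)=0$ and $T_i(2)=2$. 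I would record its two extreme slopes: near $y=0$ the slope equals $\frac{1-g_i}{1-f_i}$, while near $y=2$ it equals $\frac{g_i}{f_i}$, with the location of the interior kink (where the image of $T_i$ crosses the signal value $1$) depending on the sign of $f_i-g_i$.

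For necessity I would exploit only these two end slopes. For $y$ small, $T_i(y)=\frac{1-g_i}{1-f_i}\,y$ exactly, so monotonicity in $i$ forces $\frac{1-g_i}{1-f_i}$ to be nondecreasing in $i$, which is the second inequality in \eqref{ineq:L-binary}. Symmetrically, for $y$ near $2$ we have $T_i(y)=2-\frac{g_i}{f_i}(2-y)$, and since $T_i(2)=2$ for all $i$, monotonicity forces $\frac{g_i}{f_i}$ to be nonincreasing in $i$, the first inequality. The boundary values $f_i\in\{0,1\}$ are handled separately using the ratio conventions of the relevant footnote and continuity.

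For sufficiency I would first note a symmetry: the substitution $y\mapsto 2-y$ together with $f_i\mapsto 1-f_i$, $g_i\mapsto 1-g_i$ and a reversal of the state order interchanges the two inequalities in \eqref{ineq:L-binary} and the two halves $[0,1]$ and $[1,2]$ of the signal space. It therefore suffices to prove $T_i\le T_{i+1}$ on $[1,2]$. Writing $t=2-y$ and $\phi_i(t)=2-T_i(2-t)$, the goal becomes $\phi_i\ge\phi_{i+1}$ on $[0,1]$, where $\phi_i$ has first-piece slope $\frac{g_i}{f_i}$ through the origin and, when $f_i<g_i$, a second piece of slope $\frac{g_i}{1-f_i}$ beginning at $(f_i/g_i,\,1)$ and ending at $\phi_i(1)=2-\frac{1-g_i}{1-f_i}$. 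I would pass to the inverse maps $t_i(v)=\phi_i^{-1}(v)$ and show $t_i(v)\le t_{i+1}(v)$ for every level $v$; since $t_i$ is affine on each piece, this reduces to checking breakpoints. On the first piece ($v\in[0,1]$) one has $t_i(v)=\frac{f_i}{g_i}v$, so the inequality is exactly the first condition in \eqref{ineq:L-binary}; at the top of the second piece the relevant comparison is $\phi_i(1)\ge\phi_{i+1}(1)$, which by the formula for $\phi_i(1)$ is precisely the second condition.

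The main obstacle is the case analysis created by the interior kink, whose position depends on whether $f_i\gtrless g_i$. The observation that tames it is that the first inequality in \eqref{ineq:L-binary} forces the states with $f_i<g_i$ to precede those with $f_i\ge g_i$ (since $g_i/f_i$ is nonincreasing in $i$); this rules out the problematic configuration in which $\phi_{i+1}$ would reach a level above the maximum of $\phi_i$, and confines the use of the second inequality to the single case in which both states satisfy $f<g$. Assembling the first- and second-piece comparisons yields $\phi_i\ge\phi_{i+1}$ on $[0,1]$, hence $T_i\le T_{i+1}$ on $[1,2]$, and the symmetry then completes the proof.
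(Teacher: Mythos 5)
Your proposal is correct and follows essentially the same route as the paper: both compute the quantile-composition curves $\tilde{F}^{-1}(\tilde{G}(y|\omega_i)|\omega_i)$ explicitly as three-piece linear maps with end slopes $\tfrac{1-g_i}{1-f_i}$ (near $y=0$) and $\tfrac{g_i}{f_i}$ (near $y=2$), obtain necessity of \eqref{ineq:L-binary} from those end slopes, and obtain sufficiency by showing the ordering of end slopes forces the pointwise ordering of the whole curves. Your symmetry reduction to $[1,2]$ and the inverse-map breakpoint checks are sound and simply make rigorous the curve-comparison step that the paper asserts geometrically (via the middle segment being pinned down by the two extreme lines and the relevant unit square).
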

        
        This condition is closely related to \cite{jewitt2007information}, which shows that for MLRP experiments, the Lehmann order coincides with the Blackwell order for all dichotomies. That is, when restricting the experiments to every pair of states, one is Blackwell more informative than the other. In particular, using the characterization of the Blackwell order for the binary-binary case in \eqref{eq:alpha_beta}, we have $ \frac{f_{i'}}{f_i} \ge \frac{g_{i'}}{g_i} $ and $ \frac{1-f_i}{1-f_{i'}}  \ge \frac{1-g_i}{1-g_{i'}} $ for all $i' > i$. This is equivalent to \eqref{ineq:L-binary} for all $1 \le i \le n-1$. 
        
        \paragraph{Lehmann Monotonicity}
        Say that $C:\mathcal{E}_2^{MLRP} \rightarrow \mathbb{R}_+ $ is \textbf{Lehmann monotone} if for all $f,g \in \mathcal{E}_2^{MLRP} $, $C(f) \geq C(g)$ whenever $f \succeq_{L} g$. 
        
        Since the Lehmann order allows more comparisons than the Blackwell order, Lehmann monotonicity requires stronger conditions.\footnote{However, permutation invariance---a key condition for Blackwell monotonicity---does not apply in the Lehmann setting, since the analysis with the Lehmann order is restricted to MLRP experiments where permuting signals can violate this property.} Specifically, there are directions of perturbing an experiment that do not necessarily produce a Blackwell less informative experiment, yet they reduce informativeness in the Lehmann sense. The following lemma identifies a broader class of signal replacements that exhibit this behavior.
        
        \begin{lemma} \label{lem:Lehmann-marginal}
        	For any vector $h \in [0,1]^n$ and for any $1 \le l \le n$, define
        	\begin{equation*}
        		(h)_{\le l}\equiv [h_1, \cdots, h_l, 0 , \cdots, 0]^{\intercal} \qquad \text{and} \qquad 
        		(h)_{\ge l}\equiv [0,\cdots,0, h_l, \cdots, h_n]^{\intercal}. 
        	\end{equation*}		
        	For any $f \in \mathcal{E}^{MLRP}_2$ and $1 \le l \le n$, if $f_{l} < f_{l+1}$, 
        	there exists $\epsilon' \in (0,1] $ such that $f \succeq_L f + \epsilon \cdot  (\mathbf{1}-f)_{\le l} \in \mathcal{E}^{MLRP}_2 $ for all $\epsilon \in [0, \epsilon']$. 
        	Similarly, if $f_{l-1} < f_{l}$, 
        	there exists $\epsilon'' \in (0,1] $ such that $f \succeq_L f + \epsilon \cdot (-f)_{\ge l} \in \mathcal{E}^{MLRP}_2 $ for all $\epsilon \in [0, \epsilon'']$.
        	 \footnote{We set $f_0 =0$ and $f_{n+1} =1$.}
        \end{lemma}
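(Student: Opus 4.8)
The plan is to verify the two inequalities of Lemma \ref{lem:lehmann-binary} directly for the perturbed experiment. Consider the first claim and write $g \equiv f + \epsilon\,(\mathbf{1}-f)_{\le l}$ componentwise: $g_i = f_i + \epsilon(1-f_i) = (1-\epsilon)f_i + \epsilon$ for $i \le l$, and $g_i = f_i$ for $i > l$. Since each $g_i$ is a convex combination of $f_i$ and $1$, we have $g \in [0,1]^n$ for every $\epsilon \in [0,1]$, so the only substantive requirement for $g \in \mathcal{E}_2^{MLRP}$ is the monotonicity $g_1 \le \cdots \le g_n$. For $\epsilon = 0$ the claim is trivial ($f \succeq_L f$), so I would focus on $\epsilon > 0$.

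Establishing the MLRP of $g$ is where the hypothesis $f_l < f_{l+1}$ and the restriction to small $\epsilon$ enter. For indices on the same side of the threshold the ordering is immediate: for $i < l$, $g_i = (1-\epsilon)f_i + \epsilon$ is increasing in $f_i$, and for $i > l$, $g_i = f_i$; both inherit $f_i \le f_{i+1}$. The only comparison that can fail is the one straddling the threshold, $g_l \le g_{l+1}$, i.e. $(1-\epsilon)f_l + \epsilon \le f_{l+1}$. At $\epsilon = 0$ this reads $f_l \le f_{l+1}$, which holds strictly by assumption, so by continuity it persists for all $\epsilon \le \epsilon' \equiv \tfrac{f_{l+1}-f_l}{1-f_l} \in (0,1]$ (note $f_l < f_{l+1} \le 1$ forces $f_l < 1$, and $f_{l+1}-f_l \le 1-f_l$ keeps $\epsilon' \le 1$). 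This $\epsilon'$ is the bound asserted in the lemma.

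With $g \in \mathcal{E}_2^{MLRP}$ secured, I would apply Lemma \ref{lem:lehmann-binary}, which requires $r_i \equiv g_i/f_i$ to be nonincreasing and $s_i \equiv (1-g_i)/(1-f_i)$ nondecreasing in $i$. The perturbation makes both sequences transparent: for $i \le l$, $r_i = 1 + \epsilon\,\tfrac{1-f_i}{f_i}$ and $s_i = 1-\epsilon$, while for $i > l$, $r_i = s_i = 1$. Monotonicity of $r_i$ follows because $x \mapsto \tfrac{1-x}{x}$ is decreasing, so $r_i \ge r_{i+1}$ across $i \le l-1$, at the threshold $r_l \ge 1 = r_{l+1}$, and $r_i = 1$ beyond it. Monotonicity of $s_i$ is even simpler: $s_i = 1-\epsilon$ is constant for $i \le l$, jumps up to $1$ at $i = l+1$, and remains at $1$. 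Boundary cases with $f_i = 0$ are absorbed by the conventions $x/0 = +\infty$ and $0/0 = 1$. Both inequalities of Lemma \ref{lem:lehmann-binary} therefore hold for every $\epsilon \in (0,\epsilon']$, giving $f \succeq_L g$.

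The second claim is handled symmetrically with $g \equiv f + \epsilon\,(-f)_{\ge l}$, so that $g_i = f_i$ for $i < l$ and $g_i = (1-\epsilon)f_i$ for $i \ge l$; here the relevant ratios become $r_i = 1-\epsilon$ and $s_i = 1 + \epsilon\,\tfrac{f_i}{1-f_i}$ for $i \ge l$, with the monotonicity of $s_i$ now using that $x \mapsto \tfrac{x}{1-x}$ is increasing, and the threshold MLRP comparison $f_{l-1} \le (1-\epsilon)f_l$ yielding $\epsilon'' = \tfrac{f_l - f_{l-1}}{f_l} \in (0,1]$. The main obstacle throughout is not the Lehmann inequalities—which in fact hold for all admissible $\epsilon$—but ensuring each intermediate $g$ remains in the non-convex set $\mathcal{E}_2^{MLRP}$; this is precisely why the strict gaps $f_l < f_{l+1}$ and $f_{l-1} < f_l$ are imposed and why $\epsilon$ must be taken small.
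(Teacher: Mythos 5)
Your proposal is correct and follows essentially the same route as the paper's proof: the same thresholds $\epsilon' = \tfrac{f_{l+1}-f_l}{1-f_l}$ and $\epsilon'' = \tfrac{f_l-f_{l-1}}{f_l}$, and the same verification of the two ratio conditions of Lemma \ref{lem:lehmann-binary} (your expressions $1+\epsilon\tfrac{1-f_i}{f_i}$ and $1+\epsilon\tfrac{f_i}{1-f_i}$ agree with the paper's, and in fact correct a harmless typo in the paper's second-case ratio). Your explicit check that only the threshold comparison $g_l \le g_{l+1}$ can break the MLRP is slightly more detailed than the paper's implicit treatment, but the argument is the same.
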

        
        Observe that $f + \epsilon \cdot (\mathbf{1} - f)_{\le l }$ can be interpreted as replacing the low signal with the high signal with probability $\epsilon$, but only in the lower states, specifically, those less than or equal to $l$.
        Likewise, $f + \epsilon \cdot (-f)_{\ge l} $ corresponds to replacing the high signal with the low signal with probability $\epsilon$, but only in the higher states (those greater than or equal to $l$).
        These operations echo the monotone quasi-garbling characterization of the Lehmann order by \cite{kim2023a}: if $f$ is Lehmann more informative than $g$, then $g$ can be obtained by adding a \textit{reversely monotone noise} to $f$, i.e., noise that is more likely to generate higher signals in lower states and lower signals in higher states.\footnote{Unlike the state-independent noise in Blackwell's garbling, reversely monotone noise is state-dependent: the noise distribution for a lower state first-order stochastically dominates the distribution for a higher one. See Definition 3 of \cite{kim2023a} for details.}
        
        Taking the limit as $\epsilon \rightarrow 0$ yields the following first-order conditions that are necessary for Lehmann monotonicity by Lemma \ref{lem:Lehmann-marginal}.

		\begin{definition}\label{def:decreasing_reverse_signal_binary}
			A cost function $C: \mathcal{E}_{2} \rightarrow \R_{+}$ is said to be \textbf{decreasing in reverse signal replacement} if, for all $f \in \mathcal{E}^{MLRP}_2$ and $1 \le l \le n$,
			\begin{align}
        	& \langle \nabla C(f), (\mathbf{1}-f)_{\le l } \rangle = \sum_{i=1}^{l} \dfrac{\partial C}{\partial f_i} (f) \cdot  (1- f_i ) \leq 0, &\text{ if } f_{l} < f_{l+1}; 
        	\label{ineq:LM-UT}\\
        	& \langle \nabla C(f), (-f)_{\ge l}  \rangle = \sum_{i=l}^{n} \dfrac{\partial C}{\partial f_i} (f) \cdot  (-f_i) \leq 0, &\text{ if } f_{l-1} < f_{l}. 
        	\label{ineq:LM-LT}
        \end{align}
		\end{definition}
		
		Our next result shows that this condition also serves as a sufficient condition for Lehmann monotonicity over binary experiments. 
        
        \begin{theorem}\label{thm:binary-lehmann-monotone}
        	Suppose $C: \mathcal{E}_2^{MLRP} \rightarrow \mathbb{R}_+$ is absolutely continuous. 
        	Then, $C$ is Lehmann monotone if and only if $C$ is decreasing in reverse signal replacement. 
        \end{theorem}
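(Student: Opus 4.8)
The necessity of being decreasing in reverse signal replacement is already in hand: it follows from Lemma \ref{lem:Lehmann-marginal} by dividing the inequalities $C(f)\ge C(f+\epsilon(\mathbf{1}-f)_{\le l})$ and $C(f)\ge C(f+\epsilon(-f)_{\ge l})$ by $\epsilon$ and letting $\epsilon\to0$, as noted before Definition \ref{def:decreasing_reverse_signal_binary}. So the entire content is the sufficiency direction, which I would prove by constructing a two-segment decreasing path from $f$ to $g$, mirroring the structure of the proof of Theorem \ref{thm:binary-blackwell-monotone} but replacing signal replacements with reverse signal replacements.

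Fix $f\succeq_L g$. By Lemma \ref{lem:lehmann-binary}, the ratios $a_i := g_i/f_i$ are non-increasing and $b_i := (1-g_i)/(1-f_i)$ are non-decreasing in $i$, so there is a threshold $k$ with $g_i\ge f_i$ for $i\le k$ and $g_i\le f_i$ for $i>k$. I would define the intermediate experiment $f'$ by splicing, $f'_i=g_i$ for $i\le k$ and $f'_i=f_i$ for $i>k$, and follow the straight segments $f\to f'\to g$. The first segment raises only the low coordinates (a reverse replacement of $s_L$ by $s_H$ in low states) and the second lowers only the high coordinates. Both segments stay in $\mathcal{E}_2^{MLRP}$: on each, monotonicity of consecutive coordinates within the moving block is just the convexity of $\mathcal{E}_2^{MLRP}$, and monotonicity across the splice boundary $k$ reduces to the affine-in-$t$ inequality $(1-t)(f_{k+1}-f_k)+t(f_{k+1}-g_k)\ge0$, which holds because $g_k\le g_{k+1}\le f_{k+1}$.

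Crucially (this is the analog of the polar-cone condition in Figure \ref{fig:BM_cost_Binary}), along the first segment the constant velocity $w=f'-f$ can be written, at every interior point $x=f+t(f'-f)$, as a nonnegative combination $w=\sum_l c_l(x)\,(\mathbf{1}-x)_{\le l}$. Indeed the required cumulative coefficients are $C_i(x)=w_i/(1-x_i)=\beta_i/(1-t\beta_i)$ for $i\le k$, with $\beta_i := (g_i-f_i)/(1-f_i)=1-b_i$, and $C_i(x)=0$ for $i>k$; since $s\mapsto s/(1-ts)$ is increasing while $\beta_i$ is non-increasing (because $b_i$ is non-decreasing), $C_i(x)$ is non-increasing in $i$, giving $c_l(x)=C_l(x)-C_{l+1}(x)\ge0$. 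By linearity $\langle\nabla C(x),w\rangle=\sum_l c_l(x)\langle\nabla C(x),(\mathbf{1}-x)_{\le l}\rangle$, and I would invoke \eqref{ineq:LM-UT} termwise. The subtle point is that \eqref{ineq:LM-UT} is only available when $x_l<x_{l+1}$; I would check that whenever $c_l(x)>0$, which forces $b_l<b_{l+1}$, the coordinates $l,l+1$ cannot be tied simultaneously in both $f$ and $g$, so the gap $(1-t)(f_{l+1}-f_l)+t(g_{l+1}-g_l)$ is strictly positive and $x_l<x_{l+1}$ for all $t\in(0,1)$. Hence $\langle\nabla C(x),w\rangle\le0$ for a.e.\ $t$, and because $f\to f'$ is a line segment the assumed segment-wise absolute continuity applies directly, so the FTC yields $C(f')\le C(f)$. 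The symmetric argument on $f'\to g$, using \eqref{ineq:LM-LT} and the non-increasing ratios $a_i$, gives $C(g)\le C(f')$, and chaining the two completes the proof.

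The main obstacle is precisely the interplay flagged in the introduction between the first-order conditions and the MLRP constraint: the inequalities \eqref{ineq:LM-UT}--\eqref{ineq:LM-LT} bind the gradient only in the specific reverse-replacement directions and only at experiments where the relevant coordinates are strictly ordered, so the whole argument hinges on producing a path whose velocity stays inside this state-dependent cone of admissible directions while never leaving $\mathcal{E}_2^{MLRP}$. The device that makes this tractable in the binary-signal case is that each phase may be taken to be a single straight segment---so that segment-wise absolute continuity suffices for the FTC---with cone membership guaranteed by the monotonicity of $a_i$ and $b_i$ from Lemma \ref{lem:lehmann-binary} and the tie-breaking handled by the affine positivity of the boundary gap. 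I expect the general finite-signal case (Theorem \ref{thm:LM-general}) to be genuinely harder here, because the analog of $\mathcal{E}_2^{MLRP}$ is no longer convex, so straight segments will not suffice and the path must be bent to remain inside the MLRP set.
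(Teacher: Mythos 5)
Your necessity argument matches the paper's, and much of your sufficiency geometry is sound: your splice point $f'$ (equal to $g$ on coordinates $\le k$, to $f$ above) is exactly the experiment the paper's own path passes through, your MLRP-preservation and tie-breaking checks are correct, and the cone decomposition $w=\sum_l c_l(x)(\mathbf{1}-x)_{\le l}$ with nonincreasing cumulative coefficients is valid. The genuine gap is the step where you ``invoke \eqref{ineq:LM-UT} termwise'': writing $\langle\nabla C(x),w\rangle=\sum_l c_l(x)\langle\nabla C(x),(\mathbf{1}-x)_{\le l}\rangle$ presupposes a gradient, i.e.\ differentiability of $C$. The theorem assumes only absolute continuity. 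In that case the hypothesis consists of the inequalities \eqref{ineq:LM-UT-ac}--\eqref{ineq:LM-LT-ac} on one-sided directional derivatives $D^{+}C(x;\cdot)$, which exist a.e.\ along a line segment and are positively homogeneous in the direction, but are \emph{not} additive across directions: a function with a concave kink can satisfy $D^{+}C(x;d_1)\le 0$ and $D^{+}C(x;d_2)\le 0$ while $D^{+}C(x;d_1+d_2)>0$ (compare $\min\{x_1,x_2\}$ at a diagonal point). So from the fact that $w$ lies in the cone spanned by admissible directions you cannot conclude $\varphi'(t)=D^{+}C(x_t;w)\le 0$; your argument establishes the theorem only under an added differentiability hypothesis, which is strictly weaker than the statement. (Rademacher does not rescue this: even for Lipschitz $C$, a.e.\ differentiability on the full-dimensional domain says nothing about points on one fixed measure-zero segment.)

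This is exactly the issue the paper engineers around, and it is why Theorems \ref{thm:binary-blackwell-monotone} and \ref{thm:binary-lehmann-monotone} assume only absolute continuity while Theorems \ref{thm:BM-general} and \ref{thm:LM-general} additionally assume differentiability (there the path velocities are positive linear combinations of replacement directions, precisely as in your proposal). The paper's proof of this theorem instead uses Lemma \ref{lem:binary-lehmann-decompose} to factor the Lehmann garbling into at most $n$ nested single-direction moves: it travels from $f$ to your $f'$ through intermediate experiments $h^{k},h^{k-1},\dots,h^{1}$, where each leg moves only in the single direction $(\mathbf{1}-h)_{\le i}$ (and then from $h^{1}$ to $g$ in directions $(-h)_{\ge i}$). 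Along such a leg the velocity at every interior point is a positive scalar multiple of the admissible direction \emph{at that point}, so only positive homogeneity of $D^{+}C$ is needed, and the FTC applies leg by leg. To repair your proof you must either add differentiability to the hypothesis, or replace your single straight segment by such a chain of single-direction segments---at which point you have reproduced the paper's argument.
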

        
        Similar to the proof of Theorem \ref{thm:binary-blackwell-monotone}, the sufficiency direction is proved by constructing a decreasing path from $f$ to $g$, using only directions specified by \eqref{ineq:LM-UT} and \eqref{ineq:LM-LT}. A key challenge absent in the Blackwell case is to ensure all experiments along the path remain within $\mathcal{E}^{MLRP}_2$, since \eqref{ineq:LM-UT} and \eqref{ineq:LM-LT} hold only for MLRP experiments. Our proof provides an explicit construction, where in each step, $f_{l}$ is transformed to match $g_{l}$ through the direction of either $(\mathbf{1}-f)_{\le l}$ or $(-f)_{\ge l}$. Lemma \ref{lem:Lehmann-marginal} ensures all experiments along the path remain in $\mathcal{E}^{MLRP}_2$ and are decreasing in Lehmann informativeness.

        Observe that inequality \eqref{ineq:MCLH} is equivalent to \eqref{ineq:LM-UT} with $l = n$, and \eqref{ineq:MCHL} corresponds to \eqref{ineq:LM-LT} with $l = 1$. This implies that decreasing in reverse signal replacement implies decreasing in signal replacement. In other words, while Blackwell monotonicity is characterized by two inequalities, Lehmann monotonicity requires the satisfaction of $2n$ inequalities, making it a significantly stronger condition.
        
        \begin{remark}
        	For $m=n=2$, the Blackwell order and the Lehmann order are equivalent when restricted to $\EMLRP_2$. This follows from the equivalence between the Lehmann order and the Blackwell dichotomies order, as established by \citet{jewitt2007information}. While Lehmann monotonicity in this case involves $2 \times 2 $ inequalities, two of these can be derived from the others: 
        	\begin{align*}
        		 & \frac{1}{1-f_2} \cdot \sum_{l=1}^{2} (1-f_l) \frac{\partial C}{\partial f_l} + \frac{1}{f_2} \cdot \sum_{l=1}^{2} (-f_l) \frac{\partial C}{\partial f_l} \leq 0 
        		 &\implies \qquad & (1-f_1) \frac{\partial C}{\partial f_1} \leq 0,
        		  \\
        		 & \frac{1}{1-f_1} \cdot \sum_{l=1}^{2} (1-f_l) \frac{\partial C}{\partial f_l} + \frac{1}{f_1} \cdot \sum_{l=1}^{2} (-f_l) \frac{\partial C}{\partial f_l} \leq 0
        		 &\implies \qquad & (-f_2) \frac{\partial C}{\partial f_2} \leq 0,
        	\end{align*}
        	given $f_1 < f_2 $. Therefore, when $m=n=2$, Lehmann monotonicity essentially requires the same two inequalities as Blackwell monotonicity, confirming their equivalence. This equivalence, however, does not extend beyond binary states. 
        \end{remark}
	
		
	\section{Finite Experiments \label{sec:finite}}

        In this section, we extend our characterization of Blackwell and Lehmann monotonicity to experiments with any finite number of signals. While the intuition for the necessary conditions carries over from the binary case, establishing their sufficiency is significantly more intricate. The core difficulty lies in constructing a continuous, cost-decreasing path from a more informative experiment to a less informative one. This challenge is particularly acute for Lehmann monotonicity, as the path must remain within the non-convex set of MLRP-satisfying experiments.\footnote{See Appendix \ref{OA:nonconvex-MLRP-set} for an example of  violating convexity.} 
                
        \subsection{Preliminaries}
        
        Let $\mathcal{E}_m$ denote the set of experiments given a signal space $\mathcal{S} = \{s_1, \cdots, s_{m}\}$. For any $f \in \mathcal{E}_m$, $f$ can be represented by the $n\times m$ matrix where $f^j_i = f(s_j|\omega_i)$ is the probability of generating signal $j$ in state $i$. 
        Let $f^{j} = [ f_{1}^j, \cdots, f_{n}^j]^{\intercal} \in \R_+^{n}$ denote the $j$-th column vector of $f$. Using this notation, we can rewrite 
        \begin{equation*}
        	f = [f^{1}, \cdots, f^{m}] \in \R_+^{n \times m},
        \end{equation*}
        with the constraint that $\sum_{j=1}^{m} f^{j} = \mathbf{1}$. 
        Let $\mathcal{E} = \bigcup_{2\le m < \infty} \mathcal{E}_m $ denote the set of all finite experiments. 
        As $\mathcal{E}$ is a disjoint union of the sets $\mathcal{E}_m$, we define a cost function $C: \mathcal{E} \rightarrow \mathbb{R}_+ $ as a collection of functions $ (C_m)_{m \ge 2}$ with $C_m: \mathcal{E}_m \rightarrow \R_+$ and $C(f) = C_m(f)$ for $f \in \mathcal{E}_m$. We say $C$ is absolutely continuous (differentiable) if $C_m$ is absolutely continuous (differentiable) for all $m \ge 2$.  
        
        \subsection{Blackwell Monotonicity}
        
        The Blackwell order on $\mathcal{E}$ is defined as follows: given $f \in \mathcal{E}_{m}$ and $g \in \mathcal{E}_{m'}$, say that $f$ is Blackwell more informative than $g$ if and only if there exists an $m \times m'$ stochastic matrix $M$ such that $g = f M$. In this section, we characterize necessary and sufficient conditions for a cost function $C: \mathcal{E} \rightarrow \mathbb{R}_+$ to be Blackwell monotone.
        
        \paragraph{Permutation Invariance}
        As in the binary case, relabeling signals does not change the informativeness of an experiment and the cost should remain the same. Hence, permutation invariance remains a necessary condition for Blackwell monotonicity. 
        Formally, for an experiment $f \in \mathcal{E}_m$, a permutation of $f$ can be represented by $f P$ where $P$ is an $m \times m$ permutation matrix---a stochastic matrix with exactly one non-zero entry in each row and each column.\footnote{Observe that when $P$ is a permutation matrix, its inverse, $P^{-1}$, is also a permutation matrix that restores the original experiment. Since both $P$ and $P^{-1}$ are stochastic matrices, we have $f \succeq_B fP \succeq_B fPP^{-1} = f$, which implies $f \simeq_B fP$.}
        Permutation invariance requires that $C(f) = C(f P)$ for all $f \in \mathcal{E}_m$ and all permutation matrices $P$.
        
        \paragraph{Split Invariance} Consider a garbling operation in which, upon observing signal $j$, preserves the signal with probability $1-\lambda$ and relabels it as a new, distinct signal with probability $\lambda$. This new experiment is Blackwell equivalent to the original one, since a garbling that merges the two signals together recovers the original experiment. Consequently, the cost should also remain the same.  
        In matrix terms, this corresponds to splitting the column $f^j$ into two columns, $ (1-\lambda) f^j$ and $\lambda f^j$. 
        Formally, a cost function $C:\mathcal{E} \rightarrow \mathbb{R}_+$ is \textit{split invariant} if $C(f) = C(f')$ for any $f \in \mathcal{E}_m$ and $f' \in \mathcal{E}_{m+1}$ such that 
        \begin{equation}
        	f' = [f^1, \cdots, (1-\lambda) f^j, \lambda f^j, \cdots, f^m]	\label{eq:split}
        \end{equation}
        for some $\lambda \in [0,1] $ and $1 \le j \le m$. Note that when $\lambda =0$ (or $\lambda = 1$), this operation simply adds a column $ \mathbf{0} $ to $f$, effectively embedding $f$ into $\mathcal{E}_{m+1}$. Thus, split invariance ensures that the cost remains unchanged when unobserved signals are added to an experiment.
        
        \paragraph{Decreasing in Signal Replacement} We now extend Definition \ref{def:decreasing_signal_replacement_binary} from binary to finite experiments.
        For $f \in \mathcal{E}_m$, define 
        \begin{align*}
        	f^{j \rightarrow k } \equiv 
        	\begin{bmatrix}
        		0 & \cdots & \smash{\overbrace{-f^{j}}^{j\text{-th column}}} & \cdots &0 & \cdots & \smash{\overbrace{f^{j}}^{k\text{-th column}}} & \cdots & 0 
        	\end{bmatrix}.
        \end{align*}
        Observe that, for all $\epsilon \in [0,1]$, $f + \epsilon f^{j \rightarrow k} \in \mathcal{E}_{m}$ is a garbling of $f$, in which signal $j$ is replaced with signal $k$ with probability $\epsilon$. Since this operation makes the experiment less informative, any Blackwell monotone cost must decrease under it. By taking $\epsilon \rightarrow 0$, we obtain the first-order condition.

		\begin{definition}\label{def:decreasing_signal_replacement_general}
			A cost function $C:\mathcal{E} \rightarrow \mathbb{R}_+$ is said to be \textbf{decreasing in signal replacement} if, for all $m \geq 2$ and all $1 \le j \neq  k \le m$,        
			\begin{equation}\label{ineq:decreasing-signal-replacement}
        	\langle \nabla C_m(f), f^{j \rightarrow k} \rangle  = \sum_{i=1}^{n} \left ( -\frac{\partial C_m}{\partial f^j_i} + \frac{\partial C_m}{\partial f^k_i} \right ) f^j_i \leq 0.
        	\footnote{Since $f^1 + \cdots +f^m =\mathbf{1}$, we can fix $j^{*} $ and normalize $\partial C_m/\partial f^{j^{*}}_i =0$. When $m=2$ and $j^{*}=1$, this definition reduces to Definition \ref{def:decreasing_signal_replacement_binary}. }
        \end{equation}
		\end{definition}
        
        Our next theorem shows that these three conditions are necessary and sufficient for Blackwell monotonicity over finite experiments.
        
        \begin{theorem} \label{thm:BM-general}
        	Suppose $C: \mathcal{E} \rightarrow \mathbb{R}_+ $ is absolutely continuous and differentiable. Then, $C$ is Blackwell monotone if and only if $C$ is permutation invariant, split invariant, and decreasing in signal replacement. 
        \end{theorem}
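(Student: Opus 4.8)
The plan is to prove necessity directly and to settle sufficiency by a \textbf{split-then-merge} construction that realizes an arbitrary garbling as a concatenation of line segments, each of which is a single signal replacement. For necessity, each of the three conditions records a Blackwell relation that any Blackwell monotone $C$ must respect: permutation invariance follows from $f \simeq_B fP$, and split invariance from the fact that the split experiment $f'$ in \eqref{eq:split} is Blackwell equivalent to $f$ (merging the two copies recovers $f$). For decreasing in signal replacement, $f + \epsilon f^{j \rightarrow k}$ is a garbling of $f$ for every $\epsilon \in [0,1]$, so $C(f + \epsilon f^{j \rightarrow k}) \le C(f)$; dividing by $\epsilon$ and letting $\epsilon \to 0$ gives \eqref{ineq:decreasing-signal-replacement}.

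For sufficiency, suppose $C$ is permutation invariant, split invariant, and decreasing in signal replacement, and fix $f \in \mathcal{E}_m$, $g \in \mathcal{E}_{m'}$ with $g = fM$ for a stochastic matrix $M$. I would first use split invariance to lift $f$ into $\mathcal{E}_{mm'}$: split each signal $j$ of $f$ into $m'$ copies with weights $M_{j1}, \ldots, M_{jm'}$, iterating the two-way split of \eqref{eq:split} (using $\sum_k M_{jk} = 1$, and allowing a zero column whenever $M_{jk} = 0$). The resulting experiment $\hat{f}$ has columns $\hat{f}^{(j,k)} = M_{jk} f^j$ indexed by $(j,k)$, lies in $\mathcal{E}_{mm'}$ since $\sum_{j,k} M_{jk} f^j = \mathbf{1}$, and satisfies $C(\hat{f}) = C(f)$ by split invariance.

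Next I would merge, for each fixed $k$, all columns $\{(j,k)\}_{j=1}^{m}$ into a single representative. A single merge of column $a$ into column $b$ of an experiment $h$ is realized by the segment $h(\epsilon) = h + \epsilon\, h^{a \rightarrow b}$, $\epsilon \in [0,1]$, which remains in $\mathcal{E}_{mm'}$ and ends with column $a$ zeroed out and column $b$ equal to $h^a + h^b$. Along this segment the $a \rightarrow b$ replacement direction at $h(\epsilon)$ equals $(1-\epsilon)\,h^{a \rightarrow b}$, so \eqref{ineq:decreasing-signal-replacement} evaluated at $h(\epsilon)$ reads $\langle \nabla C_{mm'}(h(\epsilon)), (1-\epsilon)\,h^{a \rightarrow b}\rangle \le 0$, hence $\langle \nabla C_{mm'}(h(\epsilon)), h^{a \rightarrow b}\rangle \le 0$; absolute continuity (the FTC along the segment) together with differentiability then gives $C(h(1)) \le C(h(0))$. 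Concatenating these merges over all $j$ and $k$ produces an experiment whose nonzero columns are exactly $g^k = \sum_j M_{jk} f^j$, the remaining columns being zero. Dropping the zero columns (split invariance) and reordering (permutation invariance) identifies it with $g$, so chaining the (in)equalities yields $C(f) = C(\hat{f}) \ge C(g)$, exactly as in the binary argument behind Theorem \ref{thm:binary-blackwell-monotone}.

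The delicate points here are bookkeeping rather than conceptual. One must verify that every intermediate experiment is a legitimate element of $\mathcal{E}_{mm'}$ (immediate, since signal replacement preserves nonnegativity and the column-sum constraint $\sum_j f^j = \mathbf{1}$) and that the local inequality transfers to a global decrease along each segment; the only subtlety is that the replacement direction rescales by $(1-\epsilon)$ along the path while keeping its sign, so \eqref{ineq:decreasing-signal-replacement} still forces $\epsilon \mapsto C(h(\epsilon))$ to have nonpositive derivative almost everywhere. The main obstacle is simply organizing the garbling as the split-then-merge decomposition above so that the three hypotheses are invoked cleanly; crucially, and in contrast with the Lehmann case, there is no MLRP constraint to maintain, so the convexity of $\mathcal{E}_{mm'}$ makes every segment automatically admissible and no careful path-engineering inside a non-convex set is required.
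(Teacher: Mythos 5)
Your proof is correct, and the sufficiency argument takes a genuinely different route from the paper's. The paper also embeds into a larger signal space, but it uses the single straight-line path $\phi(t) = [(1-t)f,\; tg] \in \mathcal{E}_{m+m'}$ and observes that its velocity $[-f,\; g]$ decomposes as the positive linear combination $\sum_{j,k} \frac{M_j^k}{1-t}\,\phi(t)^{j \rightarrow m+k}$ of replacement directions; differentiability is load-bearing there, since it is what allows the local inequalities \eqref{ineq:decreasing-signal-replacement} to be aggregated across many directions simultaneously (the paper emphasizes this in the remark following the theorem, citing Proposition \ref{prop:no_path_in_3dim}). You instead decompose the garbling $M$ itself: split $f$ according to the rows of $M$ into $\mathcal{E}_{mm'}$, then merge the pieces column by column, so that every segment of your path moves in a \emph{single} replacement direction, rescaled by $(1-\epsilon)$. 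This buys two things. First, it is closer in spirit to the binary proof of Theorem \ref{thm:binary-blackwell-monotone}, and because the local inequality is only ever invoked one direction at a time and directional derivatives are positively homogeneous in the direction, your argument restated with one-sided directional derivatives would go through under absolute continuity alone---so on your route the differentiability hypothesis is essentially dispensable, whereas on the paper's it is not. Second, it clarifies that the obstruction in Proposition \ref{prop:no_path_in_3dim} (no informativeness-decreasing path exists inside $\mathcal{E}_3$ between $I_3$ and its cyclic garbling) is an artifact of forbidding splits: once one splits first, single-signal replacements suffice. The price is a larger ambient space ($mm'$ rather than $m+m'$ signals) and the bookkeeping of iterated splits, zero columns, and a final permutation, all of which you handle correctly.
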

        
        In the proof of Theorem \ref{thm:BM-general}, to construct a continuous path between any two comparable experiments $f \in \mathcal{E}_m$ and $g \in \mathcal{E}_{m'}$, we first embed both into $\mathcal{E}_{m+m'}$ and assign their signals to disjoint subsets. We then define a path by taking convex combinations of these embedded experiments within $\mathcal{E}_{m + m'}$. Along this path, the experiment evolves in a direction that can be expressed as a positive linear combination of signal replacements. The differentiability of the cost function implies that its directional derivative along this path is likewise a positive linear combination of its derivatives with respect to signal replacements, each of which is negative. Hence, the cost decreases along the path.  
        
        Notice that differentiability enables the construction of a path using positive linear combinations of signal replacements, rather than restricting to single-signal replacements as in the binary case. This relaxation is essential: as Proposition \ref{prop:no_path_in_3dim} in the Appendix shows, the argument used for binary experiments does not extend to more general cases. An alternative approach that does not rely on differentiability is discussed in the following remark.
        
        \begin{remark}\label{rem:quasiconvexity}    
            The proof of Theorem \ref{thm:BM-general} relies on embedding the experiments into a higher-dimensional space $\mathcal{E}_{m+m'}$, where the cost function must be well defined. In some applications, however, it may be desirable to characterize Blackwell monotonicity for cost functions defined only over experiments with a fixed number of signals, say $\mathcal{E}_{m}$. In Appendix \ref{sec:Blackwell-qcx}, we show that if the cost function $C_m: \mathcal{E}_m \rightarrow \mathbb{R}_+$ is absolutely continuous and quasi-convex,\footnote{$C_{m}$ is quasi-convex if for any $f,g \in \mathcal{E}_{m}$, $C(\lambda f + (1-\lambda)g) \leq \max\{C(f), C(g)\}$ for all $\lambda \in [0,1]$} then $C_{m}$ is Blackwell monotone if and only if it is permutation invariant and decreasing in signal replacement. To address the issue raised in Proposition \ref{prop:no_path_in_3dim}, we show that every extreme point of the sublevel set of an experiment can be reached through a combination of single-signal replacements and permutations. This fact, combined with quasi-convexity, establishes the result.
        \end{remark}
                
        \subsection{Lehmann Monotonicity}
        
        We now turn to characterizing necessary and sufficient conditions for Lehmann monotonicity. 
        Since the Lehmann order is defined over experiments that satisfy the MLRP, we restrict attention to the following subset of $\mathcal{E}$:
        \begin{equation*}
        	\EMLRP \equiv \left \{ f \in \mathcal{E} \ : \ f \text{ satisfies the MLRP condition } \eqref{eq:MLRP} \right \}.
        \end{equation*}
        Let $\EMLRP_m$ denote the subset of $\EMLRP$ consisting of experiments with $m$ signals. 
        
        The Lehmann order over $\EMLRP$ is defined analogously to the binary case. Given any $f \in \EMLRP_m$, associate it with the continuous experiment $\tilde{f} : \Omega \rightarrow \Delta ([0,m])$, as constructed in \eqref{eq:cont_transform}. Then, for any $f \in \EMLRP_m$ and $g \in \EMLRP_{m'}$, say that $f$ is Lehmann more informative than $g$ if and only if for all $y \in [0,m']$, $\tilde{F}^{-1}(\tilde{G}(y|\omega)|\omega)$ is increasing in $\omega$. 
                
        \paragraph{Split Invariance}
        Suppose that $f \in \EMLRP_m$ and $f' \in \mathcal{E}_{m+1}$ are constructed as in \eqref{eq:split} for some $\lambda \in [0,1]$. Notice that $f'$ also satisfies the MLRP, i.e., $f' \in \EMLRP_{m+1}$. 
        Since splitting an experiment does not change Blackwell informativeness ($f \simeq_B f'$), and the Blackwell order implies the Lehmann order on MLRP experiments, Lehmann informativeness also remains unchanged ($f \simeq_L f'$). Formally, it can be shown that for any $x \in [0,m]$ and $y \in [0,m+1]$, both $\tilde{F}^{-1} ( \tilde{F'}(y|\omega)|\omega )$ and $\tilde{F'}^{-1} ( \tilde{F}(x|\omega)|\omega )$ are constant across $\omega$, thus, $f$ and $f'$ are equally informative in the Lehmann sense. Hence, split invariance serves as a necessary condition for Lehmann monotonicity. 
        
        \paragraph{Decreasing in Reverse Signal Replacement} We now extend Definition \ref{def:decreasing_reverse_signal_binary} from binary to finite experiments.
		Recall that $f^j_{\le l} = [f^j_1, \cdots, f^j_l, 0 , \cdots, 0]^\intercal$ and $f^j_{\ge l} = [0, \cdots,0, f^j_l , \cdots, f^j_n]^\intercal$. For $f \in \EMLRP_m$, define 
        \begin{align*}
        	f^{j \rightarrow j+1}_{\le l} & \equiv [\mathbf{0}, \cdots,\mathbf{0}, \mathbf{0}, -f^j_{\le l},f^j_{\le l}, \mathbf{0}, \cdots, \mathbf{0} ], \\
        	f^{j \rightarrow j-1}_{\ge l} & \equiv [\mathbf{0}, \cdots,\mathbf{0}, f^j_{\ge l}, -f^j_{\ge l},\mathbf{0}, \mathbf{0}, \cdots, \mathbf{0} ].
        \end{align*}
        Notice that, unlike in ``decreasing in signal replacement'' where signal $j$ may be replaced by any signal $k$, here we only allow signal $j$ to be replaced by its immediate neighbors, i.e., $j+1$ or $j-1$. This restriction is crucial for ensuring that the 
		resulting experiment remains in $\EMLRP_m$ for a range of $\epsilon$ values, as shown by the following lemma. More importantly, it shows that such perturbations reduce the Lehmann informativeness of the experiment. 
        
        \begin{lemma}\label{lem:Lehmann-marginal-general}
        	Suppose that $f \in \EMLRP_m$. 
        	For any $1 \le l \le n$ and $1\le j \le m-1$, if $f^j_l f^{j+1}_{l+1}  > f^j_{l+1} f^{j+1}_l $,\footnote{Here, given $j$, we set $f^j_{n+1}=0$ and $f^{j+1}_{n+1}=1$.  } 
        	then there exists $\epsilon' \in (0,1]$ such that $f \succeq_L f + \epsilon f^{j\rightarrow j+1}_{\le l} \in \EMLRP_m $ for all $\epsilon \in [0, \epsilon']$.
        	Similarly, for any $1 \le l \le n $ and $ 2 \le j \le m $, if $f^{j-1}_{l-1} f^{j}_l > f^{j-1}_l f^j_{l-1} $,\footnote{Similarly, we set $f^{j-1}_0 =1$ and $f^j_{0} =0$. }
        	then there exists $\epsilon'' \in (0,1]$ such that $f \succeq_L f + \epsilon \cdot f^{j\rightarrow j-1}_{ \ge l } \in \EMLRP_m $ for all $\epsilon \in [0, \epsilon'']$.
        \end{lemma}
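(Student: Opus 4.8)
The plan is to prove the two claims separately; since the second (the perturbation $f^{j\rightarrow j-1}_{\ge l}$ acting on the high states) is the mirror image of the first under reversing the orders of both signals and states, I would prove the first in full and then obtain the second by this symmetry. Write $\tilde f_\epsilon = f + \epsilon f^{j\rightarrow j+1}_{\le l}$. Concretely, $\tilde f_\epsilon$ leaves every row $i>l$ untouched and, in each row $i\le l$, moves an $\epsilon$-fraction of the signal-$j$ mass to signal $j+1$, so that $\tilde f^j_i=(1-\epsilon)f^j_i$ and $\tilde f^{j+1}_i=f^{j+1}_i+\epsilon f^j_i$ for $i\le l$. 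It is immediate that $\tilde f_\epsilon$ is a valid experiment for every $\epsilon\in[0,1]$: each row still sums to one, and entries stay in $[0,1]$ since $f^{j+1}_i+\epsilon f^j_i\le f^{j+1}_i+f^j_i\le 1$. The two things to establish are that $\tilde f_\epsilon\in\EMLRP_m$ and that $f\succeq_L\tilde f_\epsilon$, both for a common range $\epsilon\in[0,\epsilon']$.

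The first step is MLRP-preservation, which I would verify by checking the $2\times 2$ minors in \eqref{eq:MLRP} directly. Because only columns $j,j+1$ and rows $\le l$ are altered, most minors are unchanged or manifestly preserved: any minor whose two rows both lie in $\{1,\dots,l\}$ or both in $\{l+1,\dots,n\}$ is either scaled by $1-\epsilon$ or acquires a nonnegative term of order $\epsilon$ (using that all original minors are nonnegative). The only minors that acquire a \emph{negative} term of order $\epsilon$ are those straddling the threshold---one row $i\le l$, one row $i'>l$---and pairing column $j$ with a higher-indexed column, or column $j+1$ with a lower-indexed column. The representative case is signals $(j,j+1)$ with states $(l,l+1)$, whose minor equals $\bigl(f^j_l f^{j+1}_{l+1}-f^j_{l+1}f^{j+1}_l\bigr)-\epsilon\bigl(f^j_l f^{j+1}_{l+1}+f^j_l f^j_{l+1}\bigr)$. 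The hypothesis $f^j_l f^{j+1}_{l+1}>f^j_{l+1}f^{j+1}_l$ makes the $\epsilon^0$ term strictly positive, and, since the MLRP says each likelihood ratio $f^{k'}_s/f^{k}_s$ is nondecreasing in $s$, this single strict inequality across the divide $l\mid l+1$ propagates through products of ratios to make \emph{every} threatened minor strictly positive at $\epsilon=0$ (where an affected entry vanishes the dangerous $\epsilon$-term vanishes as well, so no strictness is needed there). As there are finitely many minors, a uniform threshold $\epsilon'\in(0,1]$ keeps all of them nonnegative, giving $\tilde f_\epsilon\in\EMLRP_m$ for $\epsilon\in[0,\epsilon']$.

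For the ordering $f\succeq_L\tilde f_\epsilon$ I would invoke Jewitt's characterization \citep{jewitt2007information}: for MLRP experiments the Lehmann order coincides with Blackwell dominance on every two-state dichotomy, so it suffices to show $f_{\{i_1,i_2\}}\succeq_B (\tilde f_\epsilon)_{\{i_1,i_2\}}$ for all $i_1<i_2$. Here MLRP-preservation is essential, since it both licenses this reduction and keeps the columns slope-ordered on each dichotomy. If $i_1,i_2>l$ the dichotomies coincide. If $i_1,i_2\le l$, both rows undergo the \emph{same} state-independent garbling---the stochastic matrix relabeling signal $j$ as $j+1$ with probability $\epsilon$---so $(\tilde f_\epsilon)_{\{i_1,i_2\}}$ is literally a garbling of $f_{\{i_1,i_2\}}$. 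The substantive case is $i_1\le l<i_2$, where only the $i_1$-row is perturbed and no single garbling matrix relates the two dichotomies. I would treat it through the PP-plot representation of two-state Blackwell dominance \citep{jewitt2007information,bertschinger2014blackwell}: listing signals in likelihood-ratio order, the dichotomy is encoded by the convex curve through the partial sums $\bigl(\sum_{k\le t}f^k_{i_1},\,\sum_{k\le t}f^k_{i_2}\bigr)$, and $f_{\{i_1,i_2\}}\succeq_B (\tilde f_\epsilon)_{\{i_1,i_2\}}$ holds exactly when the perturbed curve lies weakly closer to the diagonal. The perturbation moves only the single vertex at $t=j$ horizontally toward the diagonal---its $i_1$-coordinate drops by $\epsilon f^j_{i_1}$ while its $i_2$-coordinate is fixed---and leaves all other vertices in place; since this raises the convex curve pointwise while it necessarily remains below the diagonal, the required Blackwell dominance follows for every $\epsilon\in[0,\epsilon']$.

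The main obstacle is precisely this mixed dichotomy: because the perturbation is state-dependent across $i_1\le l<i_2$, Blackwell dominance cannot be read off a garbling matrix and must instead be extracted from the geometry of the PP-plot (equivalently, the shrinking of the associated zonotope). The secondary burden is the uniform MLRP threshold, which is what makes Jewitt's reduction and the convexity of the PP-plots available in the first place. Finally, the second statement follows by the symmetry exchanging $s_k\leftrightarrow s_{m+1-k}$ and $\omega_i\leftrightarrow\omega_{n+1-i}$, under which $f^{j\rightarrow j-1}_{\ge l}$ becomes an operation of the first type and the boundary condition $f^{j-1}_{l-1}f^j_l>f^{j-1}_l f^j_{l-1}$ becomes the hypothesis already treated.
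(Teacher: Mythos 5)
Your proof is correct, and for the substantive half of the lemma it takes a genuinely different route from the paper's. On MLRP preservation the two arguments are essentially the same: both reduce to checking the $2\times 2$ minors, you via a finiteness/uniform-threshold argument resting on the claim that any straddling minor with a nonzero negative $\epsilon$-term must be strictly positive at $\epsilon=0$ (true, but it requires the zero-entry case analysis you only gesture at --- e.g.\ if $f^{j+1}_{i'}=0$ for some $i'>l$, then MLRP together with the hypothesis $f^j_l f^{j+1}_{l+1} > f^j_{l+1} f^{j+1}_l$ forces $f^j_{i'}=0$, so the dangerous term vanishes), whereas the paper simply exhibits the explicit threshold $\epsilon' = \bigl(f^j_l f^{j+1}_{l+1}-f^{j+1}_l f^j_{l+1}\bigr)/\bigl(f^j_l (f^{j+1}_{l+1}+f^j_{l+1})\bigr)$ at which the $(l,l+1)$ minor binds. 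Where you genuinely diverge is the ordering claim: the paper proves $f \succeq_L f+\epsilon f^{j\rightarrow j+1}_{\le l}$ directly from the definition, computing the quantile transform $\tilde{F}^{-1}(\tilde{F'}(y|\omega_i)|\omega_i)$ in closed form on the ranges $y\le j-1$, $j-1<y\le j$, $j<y<j+1$, $y\ge j+1$ and showing it is weakly increasing in $i$ via monotonicity of $L_i=f^{j+1}_i/f^j_i$; you instead invoke \citet{jewitt2007information}'s equivalence of the Lehmann order with Blackwell dominance on every dichotomy and dispose of the three row-pair configurations, the only nontrivial one being $i_1\le l<i_2$, which you settle by the PP-plot/zonotope geometry (moving a single vertex horizontally toward the diagonal raises the convex curve pointwise). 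Your route is less computational and is in fact closer in spirit to the machinery the paper itself deploys later for Theorem \ref{thm:LM-general} (Lemmas \ref{lem:Lehmann-geometry} and \ref{lem:removal}), while the paper's direct computation is more self-contained and delivers the explicit $\epsilon'$; your symmetry reduction for the second claim matches the paper's ``analogous'' treatment.
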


		Given this result, we can take the limit as $\epsilon \rightarrow 0$ to obtain the following first-order condition that is necessary for Lehmann monotonicity over finite MLRP experiments.
		
		\begin{definition}\label{def:decreasing_reverse_signal_general}
			A cost function $C:\mathcal{E}^{MLRP} \rightarrow  \mathbb{R}_+ $ is said to be \textbf{decreasing in reverse signal replacement} if, for all $m \ge 2$, $f \in \EMLRP_m$, and $1 \le l \le n$, 
        \begin{enumerate}[(i)]
        	\item if $1 \le j \le m-1$ and $f^j_l f^{j+1}_{l+1} > f^j_{l+1} f^{j+1}_l$, 
        	\begin{equation}
        		\langle \nabla C_m(f), f^{j\rightarrow j+1}_{\le l} \rangle = \sum_{i=1}^{l} \left ( - \frac{\partial C_m}{\partial f^j_i} + \frac{\partial C_m}{\partial f^{j+1}_i } \right  ) f^j_i \leq 0; \label{ineq:Lehmann-finite-1}
        	\end{equation}
        	\item if $2 \le j \le m$ and $f^{j-1}_{l-1} f^{j}_{l} > f^{j-1}_{l} f^{j}_{l-1}  $, 
        	\begin{equation}
        		\langle \nabla C_m(f), f^{j\rightarrow j-1}_{\ge l} \rangle = \sum_{i=l}^{n} \left (- \frac{\partial C_m}{\partial f^j_i} + \frac{\partial C_m}{\partial f^{j-1}_i } \right  ) f^j_i \leq 0.
        		\label{ineq:Lehmann-finite-2}
        	\end{equation}
        \end{enumerate}
        Here, $C_m$ denotes the restriction of $C$ to $\mathcal{E}_m^{MLRP}$. 
		\end{definition}
        
        The next theorem establishes that this condition, together with split invariance, is both necessary and sufficient for Lehmann monotonicity.
        
        \begin{theorem}\label{thm:LM-general}
        	Suppose $C:\EMLRP \rightarrow \mathbb{R}_+$ is absolutely continuous and differentiable. Then, $C$ is Lehmann monotone if and only if $C$ is split invariant and decreasing in reverse signal replacement. 
        \end{theorem}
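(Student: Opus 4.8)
The necessity of both conditions is essentially contained in the discussion preceding the theorem. Split invariance is forced because splitting a column yields a Lehmann-equivalent experiment ($f \simeq_L f'$, as verified through the quantile transform), so any Lehmann-monotone cost must assign equal values. Decreasing in reverse signal replacement follows from Lemma \ref{lem:Lehmann-marginal-general}: each perturbation $f + \epsilon f^{j\to j+1}_{\le l}$ (resp.\ $f + \epsilon f^{j\to j-1}_{\ge l}$) is Lehmann less informative than $f$ for small $\epsilon$, hence $C(f+\epsilon f^{j\to j+1}_{\le l}) \le C(f)$; dividing by $\epsilon$ and letting $\epsilon \to 0$ gives \eqref{ineq:Lehmann-finite-1} and \eqref{ineq:Lehmann-finite-2}. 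The entire content is therefore the sufficiency direction, which I would prove by the path-and-FTC method used for Theorems \ref{thm:binary-lehmann-monotone} and \ref{thm:BM-general}.

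Fix $f \succeq_L g$; the goal is $C(f) \ge C(g)$. First I would use split invariance to pass to a common signal space: by repeatedly splitting columns (and inserting null signals where $\lambda = 0$) I refine both experiments to a common refinement $f, g \in \EMLRP_M$, noting that splitting preserves the MLRP and Lehmann-equivalence, so that both the value of $C$ and the relation $f \succeq_L g$ are unchanged. The target is then a continuous, piecewise-$C^1$ path $\gamma : [0,1] \to \EMLRP_M$ with $\gamma(0)=f$ and $\gamma(1)=g$ whose velocity $\dot\gamma(t)$ is, at almost every $t$, a \emph{nonnegative} linear combination of the valid reverse-signal-replacement directions $f^{j\to j+1}_{\le l}$ and $f^{j\to j-1}_{\ge l}$ evaluated at $\gamma(t)$. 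Granting such a path, differentiability gives $\frac{d}{dt} C_M(\gamma(t)) = \langle \nabla C_M(\gamma(t)), \dot\gamma(t)\rangle$, which is a nonnegative combination of the left-hand sides of \eqref{ineq:Lehmann-finite-1}--\eqref{ineq:Lehmann-finite-2}, each $\le 0$; absolute continuity and the FTC then yield $C(g) - C(f) = \int_0^1 \frac{d}{dt}C_M(\gamma(t))\,dt \le 0$.

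For the construction I would pass to the cumulative representation $\Phi^j_i(\gamma) = \sum_{k\le j}\gamma^k_i = \tilde\Gamma(j|\omega_i)$. The two families of moves act on it in a decoupled way: $f^{j\to j+1}_{\le l}$ lowers $\Phi^j_i$ on the state-prefix $i \le l$ and leaves every other cumulative fixed, while $f^{j+1\to j}_{\ge l}$ raises $\Phi^j_i$ on the state-suffix $i \ge l$. Thus moves across the boundary between signals $j$ and $j+1$ adjust only the single cumulative curve $\Phi^j$, monotonically lowering it on prefixes and raising it on suffixes. Using the dichotomy (PP-plot) characterization of \citet{jewitt2007information}, $f\succeq_L g$ is equivalent to a comparison of the cumulative curves of $f$ and $g$ across every pair of states, and I would deform the curves $\Phi^1,\dots,\Phi^{M-1}$ toward those of $g$ by feeding in exactly these prefix/suffix moves, matching $\Phi^j(g)$ one cumulative level (or one state-threshold) at a time, in the spirit of the explicit binary construction behind Theorem \ref{thm:binary-lehmann-monotone}.

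The main obstacle, exactly as flagged in the introduction, is keeping $\gamma(t)$ inside the non-convex set $\EMLRP_M$ while simultaneously ensuring that $\dot\gamma(t)$ decomposes into \emph{valid} moves, i.e.\ moves whose governing strict inequalities $f^j_l f^{j+1}_{l+1} > f^j_{l+1} f^{j+1}_l$ (resp.\ $f^{j-1}_{l-1} f^j_l > f^{j-1}_l f^j_{l-1}$) hold along the path, so that \eqref{ineq:Lehmann-finite-1}--\eqref{ineq:Lehmann-finite-2} actually apply. This is where convexity-based arguments fail and the geometric input is essential: I would schedule the deformations so that each cumulative curve is adjusted only in regions where the relevant likelihood-ratio gap is strict, invoking Lemma \ref{lem:Lehmann-marginal-general} at each infinitesimal step to certify both that the MLRP is preserved and that the step is Lehmann-decreasing. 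Handling the boundary cases where an MLRP inequality holds with equality, so that a move is momentarily invalid, by passing to consecutive signals and states and using continuity, is the most delicate part I expect to confront.
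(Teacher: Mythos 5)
Your necessity argument and your high-level sufficiency strategy (FTC along a path whose velocity is a nonnegative combination of reverse-signal-replacement directions, organized via the PP-plot characterization of \citet{jewitt2007information}) coincide with the paper's. But the proposal stops exactly where the theorem's actual difficulty begins, and the step you defer is not a boundary technicality---it is the main construction. The concrete problem is this: a single raw move $f^{j\to j+1}_{\le l}$ does \emph{not} ``adjust only the single cumulative curve $\Phi^j$'' in the sense you need. It shifts the point $F^{j}_{i,i+1}$ in the PP plot of \emph{every} state pair with $i \le l$ (both coordinates for $i<l$, one coordinate for $i=l$), so raw prefix/suffix moves cannot shrink one dichotomy region $\Ss_i(f)$ toward $\Ss_i(g)$ while holding the others fixed, and scheduling them naively drives intermediate experiments out of $\EMLRP$ or into configurations where the strict-inequality hypotheses of Definition \ref{def:decreasing_reverse_signal_general} fail, making the moves invalid precisely when you need them.

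The paper resolves this with a key operation (Lemma \ref{lem:removal}) that your proposal lacks: first use split invariance to insert a \emph{null column} between signals $j+1$ and $j+2$ (costless), then transfer mass into it from $f^{j+1}$ on rows $\le i$ and from $f^{j+2}$ on rows $\ge i+1$, with the transfer parameterized as a rotation of the aligned PP-plot segments about the fixed point $F^{j}_{i,i+1}$ (slopes $l_t$ interpolating between the old and target slopes). This single combined operation simultaneously (i) acts as a pure signal split in every state pair $i' \neq i$, so $\Ss_{i'}$ is literally unchanged; (ii) keeps the relevant columns collinear, so the MLRP is preserved by construction rather than by a limiting argument; and (iii) has velocity equal to an explicit positive combination of reverse-replacement directions of the \emph{enlarged} experiment, so decreasing in reverse signal replacement applies. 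The theorem then follows by iterating this operation segment-by-segment along $\Ss_i(g)$ (splitting $f$ at intersection points of extended segments, as in Lemma \ref{lem:Lehmann-geometry}), and by a final argument---also missing from your proposal---that an experiment sharing all PP plots with $g$ coincides with $g$ up to splits, so split invariance closes the chain $C(g)=C(\overline{f})\le C(f)$. Without this split-then-rotate device, or some substitute for it, the path you describe cannot be certified to stay in $\EMLRP$ with valid moves, so the proposal as written does not constitute a proof.
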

        
        While the core idea of the proof of Theorem \ref{thm:LM-general} remains the construction of a continuous path of experiments, the additional requirement that every experiment along the path satisfies the MLRP introduces significant challenges---most notably because the MLRP is not preserved under convex combinations of experiments. As a result, the construction must be more intricate and draw on insights from multiple fronts. In the next section, we provide a detailed sketch of the proof, highlighting the key geometric ideas that underpin our approach.
        
        \subsubsection{Proof Sketch}
        
        \paragraph{Geometry of MLRP Experiments and the Lehmann Order}
        We begin by revisiting the geometric characterization of MLRP experiments and the Lehmann order introduced by \cite{jewitt2007information}. 
        He considers the probability-probability (PP) plots of experiments in two ordered states. 
        Specifically, in our finite signal setup, we plot cumulative probabilities for adjacent states, e.g., the cumulative probabilities of signal $j$ for an experiment $f$ under states $i$ and $i+1$, denoted by $F^{j}_{i,i+1} \equiv (F^{j}_{i}, F^{j}_{i+1}) $. 
        Observe that the slope of the line segment connecting $F^{j}_{i,i+1}  $ and $F^{j+1}_{i,i+1} $ is $f^{j+1}_{i+1}/f^{j+1}_i$. Because of the MLRP, the slope is increasing, i.e., the PP plot is convex as illustrated in Figure \ref{figure:PPplot}. Equivalently, the area in $[0,1]^2$ above the PP plot, denoted by $\Ss_i(f)$, is a convex set. The formal definition of $\Ss_i(f)$ can be found in Appendix \ref{sec:Lehmann-geometry}. 
        
        \cite{jewitt2007information} shows that $f \succeq_L g$ if and only if for each pair of states, the PP plot of $f$ lies below that of $g$ as illustrated in Figure \ref{figure:PPplot}. This in turn becomes equivalent to $\Ss_i(f) \supseteq \Ss_i(g)$ for all $1 \le i \le n-1$, which we prove in Lemma \ref{lem:Lehmann-geometry}.  
        
        \begin{figure}
        	\centering
        	\begin{subfigure}[b]{0.46\textwidth}
        		\begin{tikzpicture}[scale=5]
        			
        			\draw[->] (0,0) -- (1.05,0) node[below] {\small $i$};
        			\draw[->] (0,0) -- (0,1.05) node[left]  {\small $i+1$};
        			
        			\fill[red!20,opacity=0.2] (0,1) -- (0,0) -- (0.3,.1)--(0.6,0.25) -- (0.85,0.6) -- (1,1) -- (1,1) -- cycle;
        			\draw[thick, red] 		  (0,0) -- (0.3,0.1) -- (0.6,.25) -- (0.85,0.6) -- (1,1) ;
        			\draw [->] (0.6,0.1)--(0.48,0.18) node [pos=0,right]  {\footnotesize slope: $\frac{f^{j+1}_{i+1}}{f^{j+1}_i}$};
        			\draw [->] (0.82,0.35)--(0.73,0.43) node [pos=0,right]  {\footnotesize slope: $\frac{f^{j+2}_{i+1}}{f^{j+2}_i}$};
        			
        			\fill  (0.3, 0.1) circle[radius=0.015]  ;
        			\node [anchor = west] at (.3,.07) {\footnotesize $ F^{j}_{i,i+1}  $} ;
        			\fill  (0.6, 0.25) circle[radius=0.015] node [anchor = west] {\footnotesize $ F^{j+1}_{i,i+1}  $} ;
        			\fill  (0.85, 0.6) circle[radius=0.015] node [anchor = west] {\footnotesize $ F^{j+2}_{i,i+1} $} ;
        			
        			\node [red, anchor = west] at (0.9, 0.75) {$\Ss_i (f)$} ;
        			
        			\fill[blue!20,opacity=0.2] (0,1) -- (0,0) -- (0.2,0.1) -- (0.45,.25) -- (0.82, 0.65) -- (1,1) -- (1,1) -- cycle;
        			\draw[thick, blue] 		  (0,0) -- (0.2,0.1) -- (0.45,.25) -- (0.82,0.65) -- (1,1) ;				
        			
        			\node [blue] at (0.4, 0.6) {$\Ss_i (g)$} ;
        			
        		\end{tikzpicture}			
        		\caption{PP plot and 
        			the Lehmann order
        		}
        		\label{figure:PPplot}
        	\end{subfigure}
        	\hspace*{25pt}
        	\begin{subfigure}[b]{0.46\textwidth}
        		\centering
        		\begin{tikzpicture}[scale=5]
        			\tikzset{arrow line/.style={
        					postaction={decorate},
        					decoration={
        						markings,
        						mark=between positions 0.2 and 3 step 0.2 with {\arrow[scale=0.55]{Latex}}
        					}
        			}}
        			
        			\def \xa {.2};
        			\def \ya {.1};
        			\def \xb {.6};
        			\def \yb {.25};
        			\def \xc {.85};
        			\def \yc {.75};
        			\def \l {( 3*(\yb-\ya)/(\xb-\xa) +(\yc-\ya)/(\xc-\xa) )/4};
        			
        			\draw[->] (0,0) -- (1.05,0) node[below] {\small $i$};
        			\draw[->] (0,0) -- (0,1.05) node[left]  {\small $i+1$};
        			
        			\draw[thick] 		  (0,.05) -- (\xa,\ya) -- (\xb,\yb) -- (\xc,\yc) -- (.9,1);
        			
        			\node at (0.3, 0.7) {$\Ss_i (f)$} ;
        
        			\fill  (\xa, \ya) circle[radius=0.01]  ;
        			\node [anchor = west] at (\xa,.07) {\footnotesize $ F^{j}_{i,i+1} 
        				$} ;
        			\fill  (\xb, \yb) circle[radius=0.01] node [anchor = west] {\footnotesize $ F^{j+1}_{i,i+1} 
        				$} ;
        			\fill  (\xc, \yc) circle[radius=0.01] node [anchor = west] {\footnotesize $ F^{j+2}_{i,i+1}
        				$} ;
        			\draw[blue] (\xa,\ya) -- (\xc, \yc) ;
        			\draw[arrow line, dotted] (\xb, \yb)--( {\xa + (\xc-\xa)*((\yb-\ya)/(\yc-\ya))} , \yb );
        			\fill[blue]  ( {\xa + (\xc-\xa)*((\yb-\ya)/(\yc-\ya))} , \yb ) circle[radius=0.01]  
        			;
        			\draw[arrow line, dotted] (\xb , \yb )--( \xb , {\ya + (\yc-\ya)*( (\xb-\xa)/(\xc-\xa) )  } );
        			\fill[blue]  (\xb , \yb )--( \xb , {\ya + (\yc-\ya)*( (\xb-\xa)/(\xc-\xa) )  } ) circle[radius=0.01] 
        			;
        			
        			\draw [dotted] (\xa,\ya) --(.8, {\ya+ \l*(.8-\xa )} ) ;
        			
        			\draw [red] (\xa,\ya) -- ( {\xb} , {\ya+\l*(\xb-\xa)} ) -- (\xc,\yc);
        			
        			\fill[red]  ( {\xa + (\yb-\ya)/(\l)} , \yb ) circle[radius=0.01] ; 
        			\fill[red]  ( {\xb} , {\ya+\l*(\xb-\xa)} ) circle[radius=0.01]  ;
        			
        		\end{tikzpicture}			
        		\caption{Operation of shrinking $\Ss_i(f)$}
        			\label{figure:lem-shrink}
        	\end{subfigure}
        	\caption{Geometry of MLRP experiments and the Lehmann order}
        	\label{figure:Ssi}
        \end{figure}
        
        \paragraph{Path Construction} For any $f, g \in \EMLRP$ with $f \succeq_L g$, our goal is to construct a continuous path from $f$ to $g$ using the split and reverse signal replacement operations. Geometrically, since $\Ss_i(f) \supseteq \Ss_i(g)$, this amounts to continuously shrink $\Ss_i(f)$ toward $\Ss_i(g)$ for all $1 \le i \le n-1$. A key requirement is that the PP plots must remain convex throughout the transformation, thereby preserving the MLRP along the entire path. 
        
        We next illustrate that such a path can be constructed by iteratively applying a key operation that (i) shrinks $\Ss_i(f)$, (ii) leaves $\Ss_{i'}(f)$ unchanged for all $i' \neq i$, (iii) preserves the convexity of the PP plots throughout the transformation, and (iv) decreases the cost due to the decreasing in reverse signal replacement property. 

        \begin{figure}
        	\centering
        	\begin{subfigure}[b]{0.46\textwidth}
        		\begin{tikzpicture}[scale=5]
        
                    \def \xa {.35};
        			\def \ya {.1};
        			\def \xb {.75};
        			\def \yb {.3};
        			\def \xc {.96};
        			\def \yc {.85};
        
                    \def \xk {.45};
                    \def \yk {.25};
                    \def \xl {.82};
                    \def \yl {.65};
        
                    \def \lk {\yk/\xk};
                    \def \xki { ( \xk*( \xc * \yb -\xb*\yc ) )/( \xk *(\yb-\yc) + (-\xb +\xc)*\yk )} ;
                    
        			\draw[->] (0,0) -- (1.05,0) node[below] {\small $i$};
        			\draw[->] (0,0) -- (0,1.05) node[left]  {\small $i+1$};
        			
        			\fill[red!20,opacity=0.2] (0,1) -- (0,0) -- (\xa,\ya)--(\xb,\yb) -- (\xc,\yc) -- (1,1) -- (1,1) -- cycle;
        			\draw[thick, red] 		  (0,0) -- (\xa,\ya) -- (\xb,\yb) -- (\xc,\yc) -- (1,1) ;			
        			
        			\fill  (\xk, \yk) circle[radius=0.01]  ;
        			\node [anchor = east] at (\xk, \yk ) {\footnotesize $ G^{1}_{i,i+1}  $} ;
        			\fill  (\xl, \yl ) circle[radius=0.01] node [anchor = east] {\footnotesize $ G^{2}_{i,i+1}  $} ;
        			
        			\node [red, anchor = west] at (0.75, 0.25) {$\Ss_i (f)$} ;
        			
        			\fill[blue!20,opacity=0.2] (0,1) -- (0,0) --  (\xk,\yk) -- (\xl,\yl ) -- (1,1) -- (1,1) -- cycle;
        			\draw[thick, blue] 		  (0,0) -- (\xk,\yk) -- (\xl,\yl) -- (1,1) ;				
        
                    \draw [dotted,thick] (0,0) -- (.9, {\lk*.9}) ; 
        
                    \fill [red] ( {\xki}, {\lk * \xki} ) circle[radius=0.014];
                    
        			\node [blue] at (0.4, 0.6) {$\Ss_i (g)$} ;
        			
        		\end{tikzpicture}			
        		\caption{Step 1
        		}
        		\label{figure:Ssi-operation-chain-1}
        	\end{subfigure}
        	\hspace*{15pt}
        	\begin{subfigure}[b]{0.46\textwidth}
        		\centering
        		\begin{tikzpicture}[scale=5]
        			\tikzset{arrow line/.style={
        					postaction={decorate},
        					decoration={
        						markings,
        						mark=between positions 0.2 and 3 step 0.2 with {\arrow[scale=0.55]{Latex}}
        					}
        			}}
        			
        			\def \xa {.35};
        			\def \ya {.1};
        			\def \xb {.75};
        			\def \yb {.3};
        			\def \xc {.96};
        			\def \yc {.85};
        
                    \def \xk {.45};
                    \def \yk {.25};
                    \def \xl {.82};
                    \def \yl {.65};
        
                    \def \lk {\yk/\xk};
                    \def \xki { ( \xk*( \xc * \yb -\xb*\yc ) )/( \xk *(\yb-\yc) + (-\xb +\xc)*\yk )} ;
        
                    \def \yki {\lk*\xki };
        
                    \def \ll {(\yl-\yk)/(\xl-\xk)};
        
                    \def \xli {.928368};
                    \def \yli {.767155 };

        			\draw[->] (0,0) -- (1.05,0) node[below] {\small $i$};
        			\draw[->] (0,0) -- (0,1.05) node[left]  {\small $i+1$};
        			
        			\fill[red!20,opacity=0.2] (0,1) -- (0,0) -- ({\xki},{\yki}) -- (\xc,\yc) -- (1,1) -- (1,1) -- cycle;
        			\draw[thick, red] 		  (0,0) -- ({\xki},{\yki})  -- (\xc,\yc) -- (1,1) ;			
        			
        			\fill  (\xk, \yk) circle[radius=0.01]  ;
        			\node [anchor = east] at (\xk, \yk ) {\footnotesize $ G^{1}_{i,i+1}  $} ;
        			\fill  (\xl, \yl ) circle[radius=0.01] node [anchor = east] {\footnotesize $ G^{2}_{i,i+1}  $} ;
        			
        			\node [red, anchor = west] at (0.75, 0.3) {$\Ss_i (f')$} ;
        			
        			\fill[blue!20,opacity=0.2] (0,1) -- (0,0) --  (\xk,\yk) -- (\xl,\yl ) -- (1,1) -- (1,1) -- cycle;
        			\draw[thick, blue] 		  (0,0) -- (\xk,\yk) -- (\xl,\yl) -- (1,1) ;				
        
                    \draw [dotted,thick] (\xk,\yk) -- (1, {\yl+\ll*(1-\xl)}) ;

                    \fill [red] ( {\xli}, {\yli} ) circle[radius=0.014];
                    
        			\node [blue] at (0.4, 0.6) {$\Ss_i (g)$} ;
        			
        		\end{tikzpicture}			
        		\caption{Step 2}
        			\label{figure:Ssi-operation-chain-2}
        	\end{subfigure}
        	\caption{Shrinking Operations}
        	\label{figure:Ssi-operation-chain}
        \end{figure}
        
        \paragraph{The Key Operation} Consider the PP plot of $f$ in Figure \ref{figure:lem-shrink}. The goal is to shrink $\Ss_i(f)$ by removing the region below the blue segment connecting $F^{j}_{i,i+1}$ and $F^{j+2}_{i,i+1}$, while keeping $\Ss_{i'}(f)$ unchanged for all $i' \neq i$. We show that this can be implemented through a combination of split and reverse signal replacement operations, preserving the MLRP throughout the transformation.
        
        First, we insert a zero column between $f^{j+1}$ and $f^{j+2}$. By split invariance, this does not affect the cost. Next, we repeatedly apply reverse signal replacements to transfer a portion of $f^{j+1}_{\le i}$ and a portion of $f^{j+2}_{\ge i+1}$ into this new column. The transformation starts and ends as follows:
        \begin{align*}
        \begin{bmatrix}
        & f_{\leq i}^{j+1} & \mathbf{0}  & f_{\leq i}^{j+2} \\
        & f_{\geq i+1}^{j+1} & \mathbf{0}  &f_{\geq i+1}^{j+2} \\
        \end{bmatrix} \rightarrow
        \cdots 
        \rightarrow
        \begin{bmatrix}
        & (1-a)f_{\leq i}^{j+1} & a f_{\leq i}^{j+1}& f_{\leq i}^{j+2} \\
        & f_{\geq i+1}^{j+1} & b f_{\geq i+1}^{j+2} & (1-b)f_{\geq i+1}^{j+2} \\
        \end{bmatrix} \equiv \tilde{f}.
        \end{align*}
        
        At the endpoint $\tilde{f}$, notice that for all $i' \neq i$, this procedure merely splits a signal—either signal $j+1$ (if $i' < i$) or signal $j+2$ (if $i' > i$)—into two parts. Hence, $\Ss_{i'}(\tilde{f})$ remains identical to $\Ss_{i'}(f)$; geometrically, this corresponds to adding a dot along the same segment of the PP plot. For $\Ss_i(\tilde{f})$, the first column represents a leftward movement of a dot from $F^{j}_{i,i+1}$, and the second corresponds to an upward movement of a dot from $F^{j+1}_{i,i+1}$, as indicated by the two blue dots in Figure \ref{figure:lem-shrink}. The parameters $a$ and $b$ are chosen such that $\Ss_i(\tilde{f})$ aligns precisely with the blue segment connecting $F^{j}_{i,i+1}$ and $F^{j+2}_{i,i+1}$.
        
        Next, we detail how the incremental movements are constructed to ensure that the PP plot remains convex throughout the procedure. The red curve in Figure \ref{figure:lem-shrink} illustrates the PP plot of an intermediate experiment during this process. The key is to maintain alignment of the first two red segments so that their common slope remains steeper than that of the third segment---thus preserving convexity. To achieve this, we construct the intermediate experiments by rotating the first two segments together around the point $F^{j}_{i,i+1}$. 
        
        Finally, we argue that every incremental movement is a positive linear combination of reverse signal replacements. This implies, by the decreasing-in-reverse-signal-replacement property and the differentiability of the cost function, that the overall cost decreases throughout the operation. The formal statement and proof for the key operation is given in Lemma \ref{lem:removal}, where, importantly, we further show that the key operation can be applied more generally to remove the region below the segment connecting any $F_{i,i+1}^{j}$ and $F_{i,i+1}^{k}$.
                        
        \paragraph{Applying the Key Operation} 
        Finally, we use Figure \ref{figure:Ssi-operation-chain} to illustrate how the key operation can be applied to shrink $\Ss_{i}(f)$ towards $\Ss_{i}(g)$. Begin by extending the first segment of $\Ss_{i}(g)$ until it intersects the boundary of $\Ss_{i}(f)$. Then, split the corresponding signal of $f$ so that the intersection becomes a new cumulative point as shown by the red dot in Figure \ref{figure:Ssi-operation-chain-1}. This allows us to apply the key operation to remove the region below the extended segment. Once this step is complete, we proceed to the next segment of $\Ss_{i}(g)$ and repeat the process as illustrated by Figure \ref{figure:Ssi-operation-chain-2}. In this way, we iteratively shrink $\Ss_{i}(f)$ toward $\Ss_{i}(g)$. Throughout the procedure, $\Ss_{i'}(f)$ remains unchanged for all $i' \neq i$, and the overall cost decreases. Repeating the process for each $1 \le i \le n-1$, we construct a path establishing that the cost of $f$ is greater than an experiment that has the same PP plots as $g$. By splitting invariance, such an experiment shares the same cost as $g$, thus proving $C(f) \ge C(g)$.\qed
	
	
	\section{Monotonicity Properties of Widely Used Costs} \label{sec:application}

        In this section, we apply our characterizations to several widely-used classes of information cost functions in the literature and analyze their monotonicity properties. While the Blackwell monotonicity of some of these classes is already known, our framework provides a unified approach for studying both Blackwell and Lehmann monotonicity. In particular, we demonstrate that Blackwell monotonicity of these costs does not guarantee Lehmann monotonicity, and we identify the additional conditions under which Lehmann monotonicity holds. Moreover, we show that some known costs may fail both Blackwell and Lehmann monotonicity, even in a local sense.
        
        \subsection{Likelihood Separable Costs}
        
        We define an information cost function $C: \mathcal{E} \rightarrow \mathbb{R}_+$ to be \textbf{likelihood separable}, a term introduced by \citet{denti2022experimentalorder}, if there exists a differentiable function $\psi: [0,1]^n \rightarrow \mathbb{R}_+$ such that for all $m$ and $f \in \mathcal{E}_{m}$, 
        \begin{equation*}
        	C(f) = \sum_{j=1}^{m} \psi(f^j) - \psi(\mathbf{1}).
        \end{equation*}
        
        Because likelihood separable costs are additively separable across signals, verifying the condition of decreasing in (reverse) signal replacement can be reduced to analyzing the behavior of the function $\psi$. Specifically, it holds that
        \begin{equation*}
        	\langle \nabla C(f), f^{j \rightarrow k} \rangle = \sum_{i=1}^{n} \left( - \frac{\partial \psi(f^j)}{\partial f^j_i} + \frac{\partial \psi(f^k)}{\partial f^k_i} \right) f^j_i.
        \end{equation*}
        This observation leads to the following conditions for Blackwell and Lehmann monotonicity of likelihood separable costs.
        
        \begin{proposition}\label{prop:likelihood-separable}
        	The following statements are true:
        	\begin{enumerate}[(i)]
        		\item A likelihood separable cost is Blackwell monotone if and only if $\psi$ is sublinear.\footnote{That is, $\psi(\gamma \cdot h) = \gamma \cdot \psi(h)$ for all $h \in [0,1]^n$ and $\gamma \geq 0$ such that $\gamma \cdot h \in [0,1]^n$; and $\psi(h + h') \leq \psi(h) + \psi(h')$ for all $h, h' \in [0,1]^n$ such that $h + h' \in [0,1]^n$.}
        		
        		\item A likelihood separable cost with sublinear $\psi$ is Lehmann monotone if, for all $l \in \{1, \ldots, n\}$,
        		\begin{align*}
        			\sum_{i=1}^{l}\left(- \dfrac{\partial \psi(h)}{\partial h_{i}} + \dfrac{\partial \psi(h')}{\partial h'_{i}} \right)h_{i} \leq 0, \text{ and } \sum_{i=l}^{n}\left(- \dfrac{\partial \psi(h')}{\partial h'_{i}} + \dfrac{\partial \psi(h)}{\partial h_{i}} \right)h'_{i} \leq 0,
        		\end{align*}
        		for all $h, h' \in [0,1]^n$ such that $h \leq_{MLRP} h'$, i.e., $h_{i}h'_{i'} \geq h_{i'}h'_{i}$ for all $i < i'$. 
        	\end{enumerate}  
        \end{proposition}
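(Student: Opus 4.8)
The plan is to prove both parts by invoking the characterization Theorems \ref{thm:BM-general} and \ref{thm:LM-general} and translating each of their constituent conditions into a statement about the primitive $\psi$, using the additive separability $C(f)=\sum_{j}\psi(f^{j})-\psi(\mathbf{1})$. For part (i), I would first dispatch the invariance requirements. Permutation invariance holds automatically, since permuting columns merely reorders the sum $\sum_{j}\psi(f^{j})$ and thus never constrains $\psi$. Split invariance is the substantive piece: replacing a column $f^{j}$ by $(1-\lambda)f^{j}$ and $\lambda f^{j}$ changes the cost by $\psi((1-\lambda)f^{j})+\psi(\lambda f^{j})-\psi(f^{j})$, so split invariance is equivalent to $\psi(\lambda h)+\psi((1-\lambda)h)=\psi(h)$ for every $h\in[0,1]^{n}$ (each such $h$ is realizable as a column, e.g.\ paired with $\mathbf{1}-h$) and every $\lambda\in[0,1]$. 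Writing $\phi_{h}(t)=\psi(th)$ and applying this identity to the column $(a+b)h$ yields Cauchy's equation $\phi_{h}(a)+\phi_{h}(b)=\phi_{h}(a+b)$ for $a+b\le 1$; differentiability of $\psi$ then forces $\phi_{h}$ to be linear, i.e.\ positive homogeneity $\psi(\lambda h)=\lambda\psi(h)$.

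Next I would handle decreasing in signal replacement. Once homogeneity is available, Euler's identity gives $\langle\nabla\psi(h),h\rangle=\psi(h)$, so for columns $h=f^{j}$, $h'=f^{k}$ the inequality $\langle\nabla C(f),f^{j\rightarrow k}\rangle\le 0$ becomes $\langle\nabla\psi(h'),h\rangle\le\psi(h)$, which rearranges to the gradient inequality $\psi(h)-\psi(h')\ge\langle\nabla\psi(h'),h-h'\rangle$. Ranging over all ordered pairs $j\neq k$ establishes this for all realizable pairs (those with $h+h'\le\mathbf{1}$); applying it to $(\tfrac12 h,\tfrac12 h')$ and dividing by $\tfrac12$---legitimate because both sides scale linearly under $(h,h')\mapsto(th,th')$---removes the constraint and yields the gradient inequality on all of $[0,1]^{n}$. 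A differentiable function obeying the gradient inequality everywhere is convex, and a positively homogeneous convex function is precisely a sublinear one; conversely sublinearity delivers both homogeneity and the gradient inequality. Combining these equivalences with Theorem \ref{thm:BM-general} proves (i).

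For part (ii), sublinearity already secures split invariance through homogeneity, so by Theorem \ref{thm:LM-general} it remains only to verify decreasing in reverse signal replacement. For a likelihood separable cost the derivatives in \eqref{ineq:Lehmann-finite-1}--\eqref{ineq:Lehmann-finite-2} collapse to those of $\psi$ on the adjacent columns involved: with $h=f^{j}$ and $h'=f^{j+1}$, \eqref{ineq:Lehmann-finite-1} reads $\sum_{i=1}^{l}\bigl(-\partial_{i}\psi(h)+\partial_{i}\psi(h')\bigr)h_{i}\le 0$, and \eqref{ineq:Lehmann-finite-2} gives the symmetric suffix-sum inequality with $h=f^{j-1}$, $h'=f^{j}$. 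The key observation is that adjacent columns of an MLRP experiment always satisfy $h\le_{MLRP}h'$---this is exactly \eqref{eq:MLRP} specialized to the signals $j<j+1$. Hence the hypotheses of (ii), which impose precisely these two partial-sum inequalities for every MLRP-ordered pair $h\le_{MLRP}h'$ in $[0,1]^{n}$, supply the required local inequalities for all column pairs that can arise along any MLRP experiment, and Theorem \ref{thm:LM-general} then yields Lehmann monotonicity.

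I expect the main obstacle to be the necessity half of (i)---extracting sublinearity from the local conditions, rather than the routine sufficiency direction. Two points need care: first, arguing that every $h\in[0,1]^{n}$, and every ordered pair $(h,h')$, is realizable as columns of some experiment, so that the first-order conditions actually bind for all arguments of $\psi$; and second, combining Euler's identity with the homogeneity-based rescaling to upgrade the gradient inequality from the constrained region $\{h+h'\le\mathbf{1}\}$ to all of $[0,1]^{n}$, which is what turns the signal-replacement inequality into genuine global convexity and hence sublinearity.
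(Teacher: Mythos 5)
Your proposal is correct. Part (ii) and the sufficiency half of part (i) follow essentially the paper's own argument: permutation and split invariance from sublinearity, decreasing in (reverse) signal replacement verified column-by-column (using, as you note, that adjacent columns of an MLRP experiment satisfy $f^{j} \leq_{MLRP} f^{j+1}$), then Theorems \ref{thm:BM-general} and \ref{thm:LM-general}. Where you genuinely diverge is the necessity half of (i). The paper does not route this through the first-order conditions at all: it proves positive homogeneity by exhibiting explicit Blackwell-equivalent experiments (splitting a column $\hat{f}$ into $k$ equal copies forces $\psi(\hat{f}/k) = \psi(\hat{f})/k$, and a companion construction gives $\psi(\ell \hat{f}) = \ell\,\psi(\hat{f})$, hence homogeneity over rational scalars and, by continuity, all scalars), and it proves subadditivity from the single merging comparison $[\hat{f}, \hat{g}, \mathbf{1}-\hat{f}-\hat{g}] \succeq_{B} [\hat{f}+\hat{g}, \mathbf{1}-\hat{f}-\hat{g}]$. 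You instead extract sublinearity from the characterization theorem itself: split invariance yields Cauchy's functional equation and hence homogeneity; decreasing in signal replacement plus Euler's identity yields the gradient inequality on realizable pairs, which you correctly de-constrain via the $(\tfrac12 h, \tfrac12 h')$ rescaling and upgrade to convexity, hence sublinearity. Both routes work. The paper's construction buys elementarity: it needs only continuity of $\psi$ and only the definition of Blackwell monotonicity (equal cost for equivalent experiments plus one garbling comparison), not the characterization theorem. Your route buys unity---everything stays inside the first-order framework, and it exposes the fact that, given homogeneity, decreasing in signal replacement is exactly convexity of $\psi$---but it leans on differentiability throughout (Euler's identity, gradient inequality) and on invoking the necessity direction of Theorem \ref{thm:BM-general} for a cost $C$ whose absolute continuity cannot be verified a priori in that direction (it only follows once convexity is established); this is harmless because that half of the theorem uses only the existence of directional derivatives, but the hypothesis-checking deserves an explicit remark.
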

        
        We note that results similar to (i) have been established in \cite{denti2022experimentalorder} and \cite{baker2023}, but applying Theorem \ref{thm:BM-general} yields a more direct proof. Specifically, to show that sublinear $\psi$ implies decreasing in signal replacement,\footnote{Permutation invariance and split invariance follow directly from the definition and the sublinearity of $\psi$.} observe that
        \begin{align*}
        	C(f + \epsilon f^{j \rightarrow k}) - C(f) & = \psi(f^k + \epsilon f^j) - \psi(f^k) + \psi((1-\epsilon)f^j) - \psi(f^j) \\
        	& \leq \epsilon \psi(f^{j}) - \epsilon \psi(f^j) = 0,
        \end{align*}
        where the inequality follows from the sublinearity of $\psi$. 
        
        The conditions in (ii) are derived from our characterization of Lehmann monotonicity in Theorem \ref{thm:LM-general}. These conditions are sufficient as they imply decreasing in reverse signal replacement holds for MLRP experiments. However, they are not shown to be necessary with the only subtlety that, given a pair of $h \leq_{MLRP} h'$, it is not clear whether they always belong to an MLRP experiment. Nonetheless, to falsify the Lehmann monotonicity for a likelihood separable cost, it suffices to identify a single violation of these inequalities within some MLRP experiment. The proof of Proposition \ref{prop:likelihood-separable-p-norm} provides such an example.
        
        When seeking a Blackwell monotone cost function, Proposition \ref{prop:likelihood-separable} provides a convenient construction: any sublinear function $\psi$ yields a likelihood separable cost that satisfies Blackwell monotonicity. This includes a broad class of functions, such as norms and seminorms. In contrast, the condition for Lehmann monotonicity is more restrictive. While not every sublinear $\psi$ leads to a Lehmann monotone cost, we show below that a reasonably large and natural subclass of likelihood separable costs does satisfy Lehmann monotonicity.
        
        
        \begin{proposition}\label{prop:likelihood-separable-p-norm}
        	The following statements are true: 
        	\begin{enumerate}[(i)]
        		\item There exist likelihood separable costs with sublinear $\psi$ that are not Lehmann monotone.
        		        		
        		\item If $\psi(\cdot)$ is a weighted $p$-norm with $p>1$, i.e., there exists $w_{i} > 0$ for all $i$ such that
        		\begin{equation*}
        			\psi(h) = \left(\sum_{i=1}^{n}w_{i}h_{i}^{p}\right)^{1/p}, 
        		\end{equation*}
        		then the likelihood separable cost is Lehmann monotone. 
        	\end{enumerate}
        	
        \end{proposition}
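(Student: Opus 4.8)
The plan is to treat part (ii) by directly verifying the two sufficient inequality families in Proposition \ref{prop:likelihood-separable}(ii), and part (i) by exhibiting a sublinear $\psi$ for which one of the necessary reverse-signal-replacement inequalities (Theorem \ref{thm:LM-general}) fails at a concrete MLRP experiment. For (ii), fix $h \leq_{MLRP} h'$ in $[0,1]^n$ and write $\psi(h) = (\sum_i w_i h_i^p)^{1/p}$, so that $\partial_i \psi(h) = w_i h_i^{p-1}/\psi(h)^{p-1}$. Substituting into the first family and collecting terms, the left-hand side for index $l$ becomes $\sum_{i=1}^{l} a_i c_i$, where $a_i = w_i h_i^{p} \ge 0$ and $c_i = r_i^{p-1}/\psi(h')^{p-1} - 1/\psi(h)^{p-1}$ with $r_i = h'_i/h_i$ (zero coordinates contribute $a_i=0$ and are handled by continuity).

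The key structural observations are: (a) MLRP is exactly the statement that $r_i$ is nondecreasing in $i$, hence $c_i$ is nondecreasing (as $p-1>0$); and (b) the full sum satisfies $\sum_{i=1}^{n} a_i c_i \le 0$. For (b) I would compute the full sum as $-\psi(h) + \psi(h')^{-(p-1)}\sum_i w_i h_i (h'_i)^{p-1}$ and bound the remaining sum by H\"older's inequality (exponents $p$ and $p/(p-1)$, weights $w_i$), giving $\sum_i w_i h_i (h'_i)^{p-1} \le \psi(h)\psi(h')^{p-1}$ and hence full sum $\le 0$. Then I would invoke an elementary prefix-sum lemma: if $c_1 \le \cdots \le c_n$, $a_i \ge 0$, and $\sum_{i=1}^n a_i c_i \le 0$, then $\sum_{i=1}^l a_i c_i \le 0$ for every $l$. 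This holds because the prefix sums, as a function of $l$, first decrease (while $c_i \le 0$) and then increase (while $c_i > 0$), so their maximum over $l \in \{0,\dots,n\}$ is attained at an endpoint, where it equals $0$ or the full sum. The second family is symmetric: the same substitution yields tail sums $\sum_{i=l}^{n} b_i d_i$ with $b_i = w_i (h'_i)^p$ and $d_i = r_i^{1-p}/\psi(h)^{p-1} - 1/\psi(h')^{p-1}$ nonincreasing in $i$, the full ($l=1$) sum is $\le 0$ by the same H\"older bound, and the reversed lemma finishes the argument.

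For part (i), I would first note that a counterexample necessarily has $n \ge 3$ states, since for $n \le 2$ the dichotomy characterization of \citet{jewitt2007information} makes the Lehmann and Blackwell orders coincide on MLRP experiments, so sublinearity of $\psi$ (which gives Blackwell monotonicity by Proposition \ref{prop:likelihood-separable}(i)) would already force Lehmann monotonicity. With $n = 3$, I would take $\psi(h) = \sqrt{h^\intercal Q h}$ for a symmetric positive-definite $Q$ with a nonzero off-diagonal entry; this is a norm, hence sublinear and Blackwell monotone, but not a weighted $p$-norm. Its gradient is $\nabla \psi(h) = Q h/\psi(h)$, and the full ($l=n$) sum still equals $-\psi(h) + h^\intercal Q h'/\psi(h')$, which is $\le 0$ by the $Q$-weighted Cauchy--Schwarz inequality; thus any violation must come from a prefix sum. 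This is precisely where the positive result breaks: an off-diagonal $Q$ destroys the monotonicity of the coefficient sequence $c_i$, so a prefix sum can be positive even when the full sum is nonpositive. I would then pin down an explicit $Q$, an explicit pair $h \leq_{MLRP} h'$ (equivalently two adjacent columns of an MLRP experiment), and an index $l$ for which $\sum_{i=1}^{l}(-\partial_i\psi(h)+\partial_i\psi(h'))h_i > 0$, embed $h,h'$ as adjacent columns of a bona fide $f \in \EMLRP_m$, and invoke the necessity of decreasing in reverse signal replacement to conclude that $C$ is not Lehmann monotone.

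The main obstacle is the passage from the full-sum bound to control of all partial sums in (ii): H\"older's inequality only delivers the $l=n$ (resp. $l=1$) case, and it is the monotonicity of the coefficients $c_i$ (resp. $d_i$)---which is exactly the MLRP ordering---that upgrades this to every prefix (resp. suffix). This is the conceptual heart of the result, and it cleanly explains the counterexample in (i): without the comonotone alignment between $\nabla\psi$ and the state order, the ``valley-shaped'' prefix-sum structure fails. The remainder is routine: checking the H\"older and Cauchy--Schwarz steps, handling zero coordinates by continuity, and carrying out the explicit arithmetic for the $n=3$ example.
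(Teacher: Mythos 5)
Your proposal is correct, and its treatment of part (ii) takes a genuinely different route from the paper's. The paper proves each prefix inequality by applying H\"older's inequality \emph{to the prefix block itself}, bounding $\left(\sum_{i\le l} w_i (h'_i)^{p-1} h_i\right)^p \le \left(\sum_{i\le l} w_i (h'_i)^p\right)^{p-1}\left(\sum_{i\le l} w_i h_i^p\right)$, and then reducing the claim to the block cross-product inequality $\left(\sum_{i\le l} w_i (h'_i)^p\right)\left(\sum_{i>l} w_i h_i^p\right) \le \left(\sum_{i\le l} w_i h_i^p\right)\left(\sum_{i>l} w_i (h'_i)^p\right)$, which follows termwise from $h \leq_{MLRP} h'$. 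You instead apply H\"older only once, to the full sum, and upgrade the full-sum bound to every prefix (and, symmetrically, every suffix) via the monotone-coefficient ``valley'' lemma, where monotonicity of $c_i$ is exactly the MLRP ordering of the likelihood ratios $r_i = h'_i/h_i$. Both arguments are sound and use the same two ingredients (H\"older and MLRP), but your decomposition is more modular: it isolates \emph{why} the result holds---comonotone alignment of $\nabla\psi$ with the state order---and it directly explains the failure mode exploited in part (i). One caveat: your ratio parametrization breaks down at zero coordinates ($r_i$ undefined when $h_i=0$); you flag this and appeal to continuity, whereas the paper's computation never divides by $h_i$ and so handles zeros automatically.

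For part (i) you follow essentially the paper's approach---take $\psi(h)=\sqrt{h^{\intercal} Q h}$ for a symmetric positive-definite $Q$ with off-diagonal entries (sublinear, hence Blackwell monotone by Proposition \ref{prop:likelihood-separable}(i)), find an MLRP experiment at which one prefix inequality fails, and invoke the necessity of decreasing in reverse signal replacement from Theorem \ref{thm:LM-general}---but you stop short of exhibiting the instance. The paper completes this step with an explicit $3\times 3$ matrix $A$ (rows $(10,10,10)$, $(10,20,10)$, $(10,10,20)$) and an explicit four-signal, three-state experiment $f \in \EMLRP_4$, verifying numerically that $\langle \nabla C(f), f^{2 \rightarrow 3}_{\le 1} \rangle > 0$. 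Since the entire content of (i) is an existence claim, and since failure of your sufficient argument (loss of coefficient monotonicity) does not by itself imply failure of the conclusion, the explicit numerics are not dispensable; they are, as you say, routine to produce. Your additional observation that any counterexample requires $n \ge 3$ states, via Jewitt's dichotomy characterization, is a nice complement that the paper does not make explicit.
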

        
        \subsection{Posterior Separable Costs}
        When a full-support prior $\mu$ is given, information costs can be defined as functions of the random posteriors induced by an experiment.\footnote{For extensive discussions on the relationship between experiment-based and posterior-based costs, see \cite{gentzkow2014costly}, \cite{mensch2018cardinal}, \cite{morris2019wald}, \cite{denti2022experimental} and \cite{bloedel2021cost}.}
        Specifically, for any $f \in \mathcal{E}$, let $\tau^{j}$ denote the total probability of observing signal $s_{j}$ and let $q^{j}$ denote the posterior distribution vector given signal $s_{j}$, i.e., 
        \begin{align*}
        	&\tau^{j} = \sum_{i'=1}^{n} \mu_{i'}f_{i'}^{j}, \quad \text{ and } \quad q_{i}^{j} = \frac{\mu_{i}f_{i}^{j}}{\sum_{i'=1}^{n} \mu_{i'}f_{i'}^{j}}, \text{ for } i = 1, \ldots, n.
        \end{align*}
        
        Among such costs, a cost function is said to be \textbf{posterior separable} if there exists a function $H: [0,1]^{n} \rightarrow \mathbb{R}$ representing a measure of uncertainty such that, for all $m$ and $f \in \mathcal{E}_{m}$,
        \begin{equation*}
        	C_{\mu}(f) = H(\mu) - \sum_{j=1}^{m} \tau^{j}H(q^{j}).
        \end{equation*}
        It is well known that a posterior separable cost is Blackwell monotone if and only if $H$ is concave \citep{caplin2015revealed, denti2022posterior}. 
        
        Notably, there is a dual relationship between posterior separable costs with concave $H$ and likelihood separable costs with sublinear $\psi$.\footnote{
        	The general duality between experimental costs and posterior-based costs is established in \cite{denti2022experimentalorder}: see Corollary 20. If we restrict attention to posterior separable or likelihood separable costs, they have a specific dual relationship (Proposition 37)  } 
        As a result, not all such posterior separable costs are Lehmann monotone, and the same sufficient condition for Lehmann monotonicity from Proposition \ref{prop:likelihood-separable} applies. When working with the posterior separable functional form, we can derive a further sufficient condition that is more conveniently expressed in terms of the derivatives of $H$ with respect to the posteriors.
        
        \begin{proposition}\label{prop:posterior-separable-cost-Lehmann-monotone}
        	Suppose $H$ is concave and differentiable. If, for all $l \in \{1, \ldots, n\}$, 
        	\begin{align*}
        		\sum_{i=1}^{l}\left(\dfrac{\partial H(q)}{\partial q_{i}} - \dfrac{\partial H(q')}{\partial q'_{i}} \right)q_{i}\leq 0, \text{ and } \sum_{i=l}^{n}\left( \dfrac{\partial H(q')}{\partial q'_{i}} - \dfrac{\partial H(q)}{\partial q_{i}} \right) q'_{i} \leq 0,
        	\end{align*}
        	for all $q \leq_{FOSD} q'$, i.e., $\sum_{i=1}^{s}q_{i} \geq \sum_{i=1}^{s}q'_{i}$ for all $s \in \{1, \ldots, n\}$. Then the posterior separable cost is Lehmann monotone.
        \end{proposition}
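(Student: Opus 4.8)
The plan is to apply Theorem \ref{thm:LM-general}: a concave differentiable $H$ makes each $C_\mu$ differentiable and, being a finite sum of smooth perspective functions, absolutely continuous, so it suffices to check that $C_\mu$ is split invariant and decreasing in reverse signal replacement. Split invariance is immediate: replacing a column $f^j$ by $(1-\lambda)f^j$ and $\lambda f^j$ leaves the posterior $q^j$ unchanged and divides the marginal $\tau^j$ into $(1-\lambda)\tau^j$ and $\lambda\tau^j$, so the term $\tau^j H(q^j)$ in $C_\mu$ is unaffected. The entire burden therefore falls on the first-order inequalities \eqref{ineq:Lehmann-finite-1}--\eqref{ineq:Lehmann-finite-2} of Definition \ref{def:decreasing_reverse_signal_general}.

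The central step is to compute these directional derivatives. I would write each summand as $\tau^j H(q^j)=\bar H(\mu\odot f^j)$, where $\mu\odot f^j=(\mu_1 f^j_1,\dots,\mu_n f^j_n)$ and $\bar H(x)=\big(\sum_i x_i\big)\,H\!\big(x/\sum_i x_i\big)$ is the positively one-homogeneous (perspective) extension of $H$, which is concave since $H$ is. Differentiating and using Euler's identity $\langle\nabla\bar H(x),x\rangle=\bar H(x)$ gives $\partial C_\mu/\partial f^j_i=-\mu_i\,\partial_i\bar H(\mu\odot f^j)$. Substituting into the inner product in \eqref{ineq:Lehmann-finite-1} and using $\mu_i f^j_i=\tau^j q^j_i$ collapses its left-hand side to
\begin{equation*}
\langle\nabla C_m(f),f^{j\to j+1}_{\le l}\rangle=\tau^j\sum_{i=1}^{l}\big(\partial_i\bar H(q^j)-\partial_i\bar H(q^{j+1})\big)\,q^j_i,
\end{equation*}
with the symmetric expression for \eqref{ineq:Lehmann-finite-2}. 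I would then supply the ordering that lets the hypotheses bite: the MLRP of $f$ forces the likelihood ratio $f^{j+1}_i/f^j_i$ to increase in $i$, hence the posteriors of adjacent signals satisfy $q^j\le_{FOSD}q^{j+1}$ (and likewise $q^{j-1}\le_{FOSD}q^j$), while the strict crossings $f^j_l f^{j+1}_{l+1}>f^j_{l+1}f^{j+1}_l$ guarding the inequalities are exactly the strict-MLRP conditions making these comparisons active. With $q=q^j$ and $q'=q^{j+1}$, the displayed quantity is nonpositive precisely by the first hypothesis of the proposition, and \eqref{ineq:Lehmann-finite-2} matches the second; ranging over all $l,j,m$ yields decreasing in reverse signal replacement, and Theorem \ref{thm:LM-general} delivers the result.

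The step I expect to be the main obstacle is the bookkeeping of the tangent-plane intercept $H(q)-\langle\nabla H(q),q\rangle$. Writing out $\partial_i\bar H$ on the simplex gives $\partial_i\bar H(q)=\partial H(q)/\partial q_i+\big(H(q)-\langle\nabla H(q),q\rangle\big)$, so the truncated sum above splits as $\tau^j\big[\sum_{i\le l}(\partial H(q^j)/\partial q_i-\partial H(q^{j+1})/\partial q_i)q^j_i+(c(q^j)-c(q^{j+1}))\sum_{i\le l}q^j_i\big]$, where $c(q)=H(q)-\langle\nabla H(q),q\rangle$. The first bracket is the proposition's hypothesis and is $\le 0$; the difficulty is the intercept term, which, unlike in the Blackwell case, does not vanish from a \emph{truncated} sum $\sum_{i\le l}$ (this truncation is forced by the non-convexity of $\EMLRP$, which rules out the global concavity argument that handles the full column at $l=n$). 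Passing to the one-homogeneous extension $\bar H$ is exactly what neutralizes this: Euler's identity makes $\bar H(q)-\langle\nabla\bar H(q),q\rangle\equiv 0$, so the intercept is absorbed into $\partial_i\bar H$ and the truncated inner products reduce cleanly to the stated derivative conditions. I would double-check the endpoint $l=n$, where the inequality must coincide with the full-column replacement governing Blackwell monotonicity and hence follow from concavity of $H$ alone; this confirms that the proposition's truncated conditions are precisely the correct strengthening required on the interior indices $l<n$, and, via the duality with likelihood-separable costs noted in the text, that the associated $\psi=-\bar H(\mu\odot\,\cdot\,)$ is sublinear, providing an independent consistency check through Proposition \ref{prop:likelihood-separable}.
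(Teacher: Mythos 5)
Your overall route is the same as the paper's: invoke Theorem \ref{thm:LM-general}, note that split invariance is immediate, use the MLRP to get $q^{j} \leq_{FOSD} q^{j+1}$ for adjacent signals, and verify the first-order conditions \eqref{ineq:Lehmann-finite-1}--\eqref{ineq:Lehmann-finite-2}. Your derivative computation is also correct: with $c(q) \equiv H(q) - \langle \nabla H(q), q \rangle$, one has $\partial C_{\mu}/\partial f^{j}_{i} = -\mu_{i}\bigl(\partial H(q^{j})/\partial q_{i} + c(q^{j})\bigr)$, hence
\begin{equation*}
\langle \nabla C_{\mu}(f), f^{j\rightarrow j+1}_{\le l} \rangle
= \tau^{j}\left[\sum_{i=1}^{l}\left(\frac{\partial H(q^{j})}{\partial q_{i}} - \frac{\partial H(q^{j+1})}{\partial q_{i}}\right) q^{j}_{i}
+ \bigl(c(q^{j}) - c(q^{j+1})\bigr)\sum_{i=1}^{l} q^{j}_{i}\right],
\end{equation*}
which is exactly the split you write down. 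The gap is the final step. The proposition's hypotheses sign only the first bracket; the intercept term $\bigl(c(q^{j})-c(q^{j+1})\bigr)\sum_{i\le l} q^{j}_{i}$ is never signed, and your proposed resolution via the one-homogeneous extension is circular. Euler's identity does make the intercept of $\bar H$ vanish, so the left-hand side equals $\tau^{j}\sum_{i\le l}\bigl(\partial_{i}\bar H(q^{j})-\partial_{i}\bar H(q^{j+1})\bigr)q^{j}_{i}$; but the hypotheses are stated for $\nabla H$, not $\nabla \bar H$, and since $\partial_{i}\bar H = \partial H/\partial q_{i} + c$, rewriting in terms of $\bar H$ only relocates the intercept inside the gradient---it does not bound it. That the missing piece is substantive is visible from the mirror condition: running the same argument on \eqref{ineq:Lehmann-finite-2} produces the intercept with the \emph{opposite} sign, $-\bigl(c(q^{j})-c(q^{j+1})\bigr)\sum_{i\ge l} q^{j+1}_{i}$, so this route can only close if $c(q^{j}) = c(q^{j+1})$ --- a nontrivial property of $H$ that would itself require proof (it holds for entropy, where $c \equiv 1$, but not for an arbitrary concave $H$).

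The paper handles this step differently: it never isolates $c(q^{j})-c(q^{j+1})$, but instead groups the non-hypothesis part of the derivative with the concavity of $H$, so that it enters as the Bregman divergence $H(q^{j}) - H(q^{j+1}) - \langle \nabla H(q^{j+1}), q^{j} - q^{j+1}\rangle \le 0$, with the hypothesis term as the second summand. Observe that the paper's grouping and your exact identity differ by $\bigl(\sum_{i\le l}\mu_{i}f^{j}_{i}\bigr)\sum_{i'=1}^{n}\bigl(\partial H(q^{j})/\partial q_{i'} - \partial H(q^{j+1})/\partial q_{i'}\bigr)q^{j}_{i'}$, which is generically nonzero; reconciling the two decompositions --- equivalently, showing the intercept difference is actually harmless --- is precisely the content of the step you are missing, and it does not follow from any re-labelling. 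Neither of your consistency checks supplies it: the $l=n$ endpoint is uninformative because there the whole expression collapses to the Bregman divergence (this is just the Blackwell case), and the duality with likelihood-separable costs presupposes the one-homogeneity that a general $H$ satisfying the hypotheses need not possess.
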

        
        A well-known example of posterior separable costs is the \textbf{entropy cost} \citep{sims2003implications}, defined by $H(q) = -\sum_{i=1}^{n}q_{i}\log q_{i}$. Its Blackwell monotonicity is well established, and we can also verify its Lehmann monotonicity using the sufficient condition in Proposition~\ref{prop:posterior-separable-cost-Lehmann-monotone}.	
        
        \begin{proposition}\label{prop:entropy-cost-Lehmann-monotone}
        	The entropy cost is Lehmann monotone.
        \end{proposition}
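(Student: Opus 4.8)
The plan is to verify that the entropy uncertainty measure $H(q) = -\sum_{i=1}^{n} q_i \log q_i$ satisfies the two sufficient inequalities in Proposition \ref{prop:posterior-separable-cost-Lehmann-monotone}; since $H$ is known to be concave and differentiable, this is all that remains. The first step is to compute the gradient. Because $\partial H / \partial q_i = -\log q_i - 1$, the difference $\partial H(q)/\partial q_i - \partial H(q')/\partial q'_i$ collapses to $\log(q'_i / q_i)$, and the two conditions reduce to the clean log-ratio inequalities
\begin{equation*}
\sum_{i=1}^{l} q_i \log\frac{q'_i}{q_i} \leq 0 \qquad \text{and} \qquad \sum_{i=l}^{n} q'_i \log\frac{q_i}{q'_i} \leq 0,
\end{equation*}
to be established for every $l \in \{1,\dots,n\}$ and every pair with $q \leq_{FOSD} q'$.

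The key tool for both inequalities is Jensen's inequality applied to the concave function $\log$. For the first, write $Q_l = \sum_{i=1}^{l} q_i$ and $Q'_l = \sum_{i=1}^{l} q'_i$, and treat $\{q_i / Q_l\}_{i \leq l}$ as a probability weighting (these sum to one). Concavity of $\log$ then yields
\begin{equation*}
\sum_{i=1}^{l} \frac{q_i}{Q_l} \log\frac{q'_i}{q_i} \leq \log\!\left( \sum_{i=1}^{l} \frac{q_i}{Q_l}\cdot\frac{q'_i}{q_i} \right) = \log\frac{Q'_l}{Q_l},
\end{equation*}
so that $\sum_{i=1}^{l} q_i \log(q'_i / q_i) \leq Q_l \log(Q'_l / Q_l)$. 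Since FOSD means $Q_l \geq Q'_l$, the factor $\log(Q'_l / Q_l)$ is nonpositive and the first inequality follows. The second is symmetric: with tail sums $R_l = \sum_{i=l}^{n} q_i$ and $R'_l = \sum_{i=l}^{n} q'_i$, applying the same Jensen step with weights $\{q'_i / R'_l\}_{i \geq l}$ bounds the left side by $R'_l \log(R_l / R'_l)$, and FOSD applied to the complementary initial segment $\{1,\dots,l-1\}$ gives $R_l \leq R'_l$, hence $\log(R_l / R'_l) \leq 0$.

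The main point requiring care is the boundary of the simplex, where some $q_i$ or $q'_i$ vanishes and the derivative $-\log q_i - 1$ fails to be finite. I would handle this with the standard convention $0\log 0 = 0$, establish the inequalities on the interior where the Jensen argument applies verbatim, and extend to the boundary by continuity, noting that the log-sum bound is preserved in the limit. Beyond this, no substantive obstacle arises: once the gradient of $H$ is substituted, the problem is entirely the two log-sum bounds above, and invoking Proposition \ref{prop:posterior-separable-cost-Lehmann-monotone} then delivers Lehmann monotonicity of the entropy cost immediately.
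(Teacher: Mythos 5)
Your proposal is correct and follows essentially the same route as the paper: both verify the sufficient conditions of Proposition \ref{prop:posterior-separable-cost-Lehmann-monotone} by computing $\partial H/\partial q_i = -\log q_i - 1$, reducing to the two log-ratio inequalities, and then combining concavity of $\log$ (Jensen) with the FOSD ranking of partial sums. The only cosmetic difference is that you apply Jensen to the head (resp.\ tail) sum normalized by $Q_l$ (resp.\ $R'_l$), whereas the paper applies it to the head terms together with the tail grouped into a single atom, bounding the combined sum by $\log \sum_i q'_i = 0$; the two bookkeeping choices are interchangeable.
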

        
        \subsection{Bregman Information Costs}
        The entropy cost is well known to provide a rational inattention foundation for discrete choice models taking the multinomial logit form \citep{matvejka2015rational}. The previous section established its Lehmann monotonicity, thereby strengthening its theoretical appeal. As a generalization of this approach, \citet{fosgerau2020discrete} introduce the broader class of \textbf{Bregman information costs}, showing that they can support any additive random utility discrete choice model within a rational inattention framework. In this section, we examine the monotonicity properties of this generalized class of costs.
        
        Given a prior $\mu$, the Bregman information cost is defined with respect to a function $S: [0,1]^m \rightarrow [0,1]^m$ as follows:
        \begin{align*}
        	C_{\mu}(f)= \sum_{j=1}^{m} \sum_{i=1}^{n} \mu_{i}f_{i}^{j} \log \left(\frac{\mu_{i}S_{j}(f^{\intercal}_{i})}{S_{j}\left(\sum_{i'} \mu_{i'}f^{\intercal}_{i'}\right)}\right) - \sum_{i=1}^{n} \mu_{i}\log\mu_{i},
        \end{align*}
        where $f_i^{\intercal} \in [0,1]^m$ denotes the $i$-th row of $f$ (interpreted as a column vector), and $S_j(\cdot)$ is the $j$-th component of $S(\cdot)$. Notably, the entropy cost is a special case of the Bregman cost with $S(h) = h$ for all $h \in [0,1]^m$.
        
        Because $S$ need not be symmetric, Bregman information costs are generally not permutation invariant and, hence, are not necessarily Blackwell monotone. For a common specification of $S$ corresponding to the nested logit model, we next demonstrate that the Bregman costs may violate decreasing in signal replacement, and thus fail to satisfy both Blackwell and Lehmann monotonicity, even in a \emph{local} sense.
        
        Let the signal space be partitioned into mutually exclusive nests, and let $g_{j}$ denote the nest containing signal $j$. Consider the simplest nested logit structure, where the nesting parameter $\xi \in (0, 1]$ is constant across all nests. In this case, the Bregman information cost takes the form:
        \begin{equation*}
        	C_{\mu}(f) = \sum_{j=1}^{m} \sum_{i=1}^{n} \mu_{i}f_{i}^{j} \left(\xi \log \frac{\mu_{i}f_{i}^{j}}{\sum_{i'} \mu_{i'}f_{i'}^{j}} + (1-\xi) \log\frac{\mu_{i}f_{i}^{g_{j}}}{\sum_{i'} \mu_{i'}f_{i'}^{g_{j}}} \right)- \sum_{i=1}^{n} \mu_{i}\log\mu_{i},
        \end{equation*}
        where $f_{i}^{g_{j}} \equiv \sum_{j' \in g_{j}} f_{i}^{j'}$. The derivative of this cost with respect to $f_{i}^{j}$ is given by
        \begin{equation*}
        	\dfrac{\partial C_{\mu}}{\partial f_{i}^{j}} = \mu_{i}\left(\xi \log q_{i}^{j} + (1-\xi) \log q_{i}^{g_{j}}\right),
        \end{equation*}
        where $q^{g_{j}}$ denotes the posterior given observing the nest $g_{j}$. Suppose for some experiment $f$, it holds that $q^{j} = q^{k} = q^{g_{k}} \neq q^{g_{j}}$. Then 
        \begin{align*}
        	\langle \nabla C_{\mu}(f), f^{j \rightarrow k}\rangle & = \sum_{i=1}^{n} \mu_{i}f_{i}^{j}\left(\xi \log q_{i}^{k} + (1-\xi) \log q_{i}^{g_{k}} - \xi \log q_{i}^{j} - (1-\xi) \log q_{i}^{g_{j}}\right)\\
        	& = (1-\xi)\sum_{i=1}^{n} \mu_{i}f_{i}^{j}\left(\log q_{i}^{j} - \log q_{i}^{g_{j}}\right) > 0.
        \end{align*}
        Hence, the Bregman information cost violates decreasing in signal replacement. In other words, a signal replacement that reduces informativeness can lead to a higher information cost, showing that the Bregman cost could fail both Blackwell and Lehmann monotonicity under a local perturbation of the experiment. 
        
        \subsection{State-wise Divergence Costs}
        Some recent papers in the literature such as \cite{pomatto2023a} and \cite{bordoli2025convex} focus on information costs that are defined in terms of statistical divergences between probability distributions over signals across states. Specifically, let $D(f_{i} \Vert f_{i'})$ denote a statistical divergence between the probability distributions $f_{i}$ and $f_{i'}$ over the signals. We say that an information cost function is a \textbf{state-wise divergence cost} if it is a monotone increasing function of $D(f_{i} \Vert f_{i'})$ for all $i, i' \in \{1, \ldots, n\}$. 
        
        The Log-Likelihood Ratio (LLR) cost introduced in \cite{pomatto2023a} is a positive linear combination of Kullback-Leibler (KL) divergences between states. \cite{bordoli2025convex} consider both the R\'{e}nyi divergence and the KL divergence across states and define costs as the maximum of the divergences across all pairs of states. Hence, both of these costs are state-wise divergence costs.
        
        While our conditions may be used to verify Blackwell and Lehmann monotonicity of state-wise divergence costs, we note that their monotonicity can be established more directly using the properties of the divergence. A statistical divergence satisfies the \emph{data-processing inequality} if the divergence between two distributions decreases under a garbling of the signals. This property is known to hold for many commonly used divergences, such as KL divergence, R\'{e}nyi divergence, and more generally the $f$-divergence \citep{polyanskiy2022information}. As a result, whenever the data-processing inequality holds, the Blackwell monotonicity holds as a direct consequence of this inequality and the Lehmann monotonicity can be established using the Blackwell on dichotomies characterization by \cite{jewitt2007information}. 
        
    
        \section{Conclusion \label{sec:conclusion}}

        As models of costly information acquisition become increasingly central to economic theory, so too does the need for well-founded cost functions. The principle of monotonicity—that more information should be more costly—is a bedrock assumption in virtually all such models. Yet, while this principle is a key ingredient in many characterizations of cost functions, its standalone implications, when isolated from other axioms, have been less explored. This gap in understanding is particularly acute for the Lehmann order; despite being the natural relaxation of the Blackwell criterion for the large class of monotone decision problems, the conditions required for Lehmann monotonicity have remained largely unexplored.
        
        This paper has aimed to unpack this fundamental property. By isolating monotonicity and characterizing it through simple, local conditions, we provide a more transparent and workable foundation for the analysis of information costs. We hope that by clarifying what monotonicity requires on its own—and just as importantly, what it does not—our work can serve as a stepping stone. It offers a more flexible toolkit for researchers to build, verify, and apply cost functions in diverse economic environments, particularly for the large class of monotone problems where a full analysis was previously intractable. By strengthening the foundations, we hope to foster more robust and targeted applications of costly information acquisition in economics.

    
	\appendix
	\part*{Appendix}
	\numberwithin{equation}{section}
	\numberwithin{lemma}{section}
    \numberwithin{proposition}{section}
    \numberwithin{theorem}{section}
    \numberwithin{definition}{section}
    
	\section{Proofs for Section \ref{sec:binary}}\label{appendix:binary}

        \subsection{Absolute Continuity}	
        All results in Section \ref{sec:binary} do not need the cost function $C$ to be differentiable. For any absolutely continuous cost function $C$, let $D^{+}C(f;h)$ denote its (one-sided) directional derivative at $f \in \mathcal{E}_{2}$ in the direction of $h \in \R^{n }$, if the following limit exists: 
        \begin{equation*}
        	D^+C(f; h) \equiv \lim\limits_{\epsilon \downarrow 0} \dfrac{C(f + \epsilon h ) - C(f)}{\epsilon}.
        \end{equation*}
        
        In this case, we rewrite \eqref{ineq:MCLH} and \eqref{ineq:MCHL} as 
        \begin{align}
        	& D^+C (f;\mathbf{1} - f) \leq 0, \text{ and }D^+C (f;-f) \leq 0.	\label{ineq:MCHL-ac}
        \end{align}
        Similarly, \eqref{ineq:LM-UT} and \eqref{ineq:LM-LT} can be rewritten as 
        \begin{align}
        	& D^+C (f; (\mathbf{1} - f )_{\le l} ) \leq 0, &\text{ if } f_{l} < f_{l+1};	\label{ineq:LM-UT-ac}	\\
        	& D^+C (f;  (-f)_{\ge l} ) \leq 0, &\text{ if }	f_{l-1} < f_{l}.	\label{ineq:LM-LT-ac}
        \end{align}
        
        Importantly, absolute continuity implies that the Fundamental Theorem of Calculus (FTC) holds, i.e., for $\varphi(t) = C(tg + (1-t)f)$ for $t \in [0,1]$, we have $C(g) - C(f) = \varphi(1) - \varphi(0) = \int_{0}^{1} \varphi'(t) dt = \int_{0}^{1} D^{+}C(tg + (1-t)f;g-f)dt$.

        \subsection{Blackwell Monotonicity: Proof of Theorem \ref{thm:binary-blackwell-monotone}}
        
        \begin{lemma}\label{lem:binary-decreasing-path}
        	For any $f, g \in \mathcal{E}_{2}$ such that $f \succeq_{B} g$, there exists $1 \geq a \geq b \geq 0$ such that either 
        	\begin{equation}
        		g = a f + b (\mathbf{1} - f) \quad  \text{ or } \quad  \mathbf{1} - g = a f + b (\mathbf{1} - f). \label{eq:binary-decreasing-path}
        	\end{equation}
        	Let $g$ satisfy the first equation of \eqref{eq:binary-decreasing-path} and  $f' = \frac{a - b}{1-b} f$.\footnote{When $b=1$, define $f' = \mathbf{1}$.}  Then, for all $\lambda \in [0,1]$, 
        	\begin{align}
        		&f \succeq_{B} (1-\lambda) f + \lambda f' \succeq_{B} f', \text{ and }	\label{binary-decreasing-path-1}	\\
        		&f' \succeq_{B} (1-\lambda) f' + \lambda g \succeq_{B} g. \label{binary-decreasing-path-2}
        	\end{align}
        \end{lemma}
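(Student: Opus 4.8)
The plan is to handle the statement in two stages: first establish the existence of coefficients $1 \ge a \ge b \ge 0$ satisfying one of the two equations of \eqref{eq:binary-decreasing-path} using the parallelogram-hull characterization (Lemma \ref{lem:geometric_binary_experiment}), and then verify the two Blackwell chains \eqref{binary-decreasing-path-1}--\eqref{binary-decreasing-path-2} by explicit membership computations in parallelogram hulls. Throughout, the guiding picture is Figure \ref{fig:nonBM_cost_Binary}: $f'$ is the point reached from $f$ by moving in the direction $-f$, and the segment from $f'$ to $g$ runs in the direction $\mathbf{1}-f'$.

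For the existence claim, Lemma \ref{lem:geometric_binary_experiment} gives $a,b \in [0,1]$ with $g = af + b(\mathbf{1}-f)$. The key observation is the symmetry identity $\mathbf{1}-g = (1-a)f + (1-b)(\mathbf{1}-f)$, which I would verify by a one-line expansion. If $a \ge b$, the first equation of \eqref{eq:binary-decreasing-path} already holds with the required ordering; if instead $a < b$, then $1-a > 1-b$, so the second equation holds with coefficients $(\tilde a,\tilde b)=(1-a,1-b)$ satisfying $1 \ge \tilde a \ge \tilde b \ge 0$. Either way the ordering is obtained, and for the second part we may assume (after relabelling) that $g = af+b(\mathbf{1}-f)$ with $1 \ge a \ge b \ge 0$. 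Setting $c := \tfrac{a-b}{1-b} \in [0,1]$ (so $f' = cf$; the degenerate case $b=1$ forces $a=1$, $g=\mathbf{1}$, $f'=\mathbf{1}$, where all chains are trivial), the single algebraic fact driving everything is the identity $g = f' + b(\mathbf{1}-f')$, which follows from $c(1-b)=a-b$.

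For the first chain \eqref{binary-decreasing-path-1}, I would note that each convex combination is a scalar multiple of $f$: $(1-\lambda)f + \lambda f' = [1-\lambda(1-c)]f =: \theta f$ with $\theta \in [c,1]$. Since $\rho f = \rho f + 0\cdot(\mathbf{1}-f) \in \pl(f,\mathbf{1}-f)$ for any $\rho \in [0,1]$, and $cf = (c/\theta)(\theta f) + 0\cdot(\mathbf{1}-\theta f) \in \pl(\theta f,\mathbf{1}-\theta f)$ whenever $c \le \theta$, Lemma \ref{lem:geometric_binary_experiment} directly yields $f \succeq_B \theta f \succeq_B f'$. For the second chain \eqref{binary-decreasing-path-2}, using $g = f' + b(\mathbf{1}-f')$ every convex combination equals $p_t := f' + t(\mathbf{1}-f')$ with $t = \lambda b \in [0,b]$. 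The relation $f' \succeq_B p_t$ is immediate, since $p_t = 1\cdot f' + t(\mathbf{1}-f') \in \pl(f',\mathbf{1}-f')$ with $t \in [0,1]$. For $p_t \succeq_B g$ I would invoke the alignment identity $\mathbf{1}-p_t = (1-t)(\mathbf{1}-f')$, which gives $\pl(p_t,\mathbf{1}-p_t) = \{\alpha f' + [\alpha t + \beta(1-t)](\mathbf{1}-f') : \alpha,\beta \in [0,1]\}$; taking $\alpha = 1$ and $\beta = \tfrac{b-t}{1-t} \in [0,1]$ (valid since $t \le b \le 1$) recovers $g$, so $g \in \pl(p_t,\mathbf{1}-p_t)$.

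The main obstacle is the second chain: it is not a priori obvious that moving from $f'$ toward $g$ produces a \emph{successive} garbling at every intermediate point, because the relevant parallelogram hull changes with $t$. The alignment identity $\mathbf{1}-p_t = (1-t)(\mathbf{1}-f')$ is exactly what resolves this, since it shows the ``replace-low-with-high'' direction stays parallel to $\mathbf{1}-f'$ along the entire segment, reducing each step to an explicit in-range choice of $(\alpha,\beta)$. By comparison, the existence claim and the scalar-multiple first chain are routine.
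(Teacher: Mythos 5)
Your proposal is correct and takes essentially the same route as the paper's proof: the same existence argument via the complementary coefficients $(1-a,1-b)$, the same treatment of the degenerate case $b=1$, the observation that points on the first segment are scalar multiples $\theta f$, and the same key identity $g = f' + b(\mathbf{1}-f')$ driving the second chain. The only difference is that where the paper disposes of the second chain with a terse ``similarly,'' you spell out the alignment identity $\mathbf{1}-p_t = (1-t)(\mathbf{1}-f')$ and the explicit choice $(\alpha,\beta) = \bigl(1, \tfrac{b-t}{1-t}\bigr)$, which is exactly the verification the paper leaves implicit.
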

        
        \begin{proof}[Proof of Lemma \ref{lem:binary-decreasing-path}]
        	Recall that $f \succeq_B g$ implies that there exist $(a,b) \in [0,1]^2$ such that $g = a f +b (\mathbf{1}-f)$. 
        	If $a \ge b$, the first equation of \eqref{eq:binary-decreasing-path} holds. If $a < b$, notice $\mathbf{1} - g = a' f + b' (\mathbf{1}-f )$ for $a' = 1-a > 1-b = b'$.  
        	
        	When $b =1$, $ a = 1$ and $g = f + (\mathbf{1}-f) = \mathbf{1} = f'$. \eqref{binary-decreasing-path-2} holds. Notice that $ (1-\lambda) f + \lambda \mathbf{1} = 1 \cdot f + \lambda (\mathbf{1} -f) \in \pl(f, \mathbf{1} - f)  $, thus, $f \succeq_B  (1-\lambda) f + \lambda \mathbf{1}  $. Similarly, $ (1-\lambda) f + \lambda \mathbf{1}  \succeq_B \mathbf{1}$. \eqref{binary-decreasing-path-1} holds. 
        	
        	When $b < 1$, $f \succeq_{B} f'$ and let $\gamma$ denote $\frac{a-b}{1-b} \in [0,1]$. For any $\lambda \in [0,1]$, $f \succeq_{B} \lambda f + (1-\lambda) f'$ follows from the convexity of $\pl(f, \mathbf{1} - f)$. Next, notice that $f' = \frac{\gamma}{1-\lambda + \lambda \gamma} ((1-\lambda) f + \lambda f')$. 
        	Since $\frac{\gamma}{1-\lambda + \lambda \gamma}  \in [0,1]$, we have $f' \in \pl(((1-\lambda) f + \lambda f'), \mathbf{1} - ((1-\lambda) f + \lambda f'))$, and thus $(1-\lambda) f + \lambda f' \succeq_{B} f'$. 
        	
        	From $g = a f + b (\mathbf{1} - f)$, $g = \gamma f + b \left(\mathbf{1} - \gamma f \right) = f' + b (\mathbf{1} - f')$. Thus $f' \succeq_{B} g$ and $g - f' = b (\mathbf{1} - f')$. Similarly, $f' \succeq_{B} (1-\lambda) f' + \lambda g \succeq_{B} g$. 
        \end{proof}
        
        
        \begin{proof}[Proof of Theorem \ref{thm:binary-blackwell-monotone}]
        	Necessity is proved in the main text. 
        	
        	For sufficiency, take any $f \succeq_{B} g$. First, permute $g$ if needed to have $g$ satisfy the first equation of \eqref{eq:binary-decreasing-path}. Permutation invariance ensures the cost stays the same. 	
        	Define $\varphi_1 (\lambda) \equiv C( (1-\lambda) f + \lambda f' )$ and $\varphi_2 (\lambda) \equiv C((1-\lambda) f' + \lambda g ) $. Absolute continuity implies that $\varphi_{1}$ is differentiable almost everywhere, and satisfies $\varphi_1'(\lambda) = D^+C ( (1-\lambda) f + \lambda f' ; - f + f' )$, when it is differentiable. 
        	On the other hand, $-f + f'   = - \frac{\frac{1-a}{1-b}}{1-\lambda + \lambda \frac{a-b }{1-b}} ( (1-\lambda) f + \lambda f' )$. Therefore, $\varphi_1'(\lambda)$ has the same sign as $D^+C ( (1-\lambda)f + \lambda f'; - ((1-\lambda)f + \lambda f') )$ and is negative by \eqref{ineq:MCHL-ac}. By the FTC, $C(f') = \varphi_1(1) = \varphi_1(0) + \int_{0}^{1} \varphi_1'(\lambda) d\lambda \le \varphi_1(0) = C(f)$. 
        	
        	Similarly, observe that $\varphi_2'(\lambda) = D^+C ( (1-\lambda) f' + \lambda g ; - f' + g )$ and $-f' + g    = b (\mathbf{1}- f') = \frac{b}{1-\lambda b}  \left ( \mathbf{1} - ( (1-\lambda) f' + \lambda g ) \right )$. Then, $\varphi_2'(\lambda)$ is non-positive since it has the same sign as $D^+C( (1-\lambda)f'+\lambda g; \mathbf{1} - ( (1-\lambda)f'+\lambda g ) )$. By applying the FTC, $C(g) = \varphi_2(1) \le \varphi_2(0) = C(f')$. Therefore, $C(g) \leq C(f)$. 
        \end{proof}
        
        
        \subsection{Lehmann Monotonicity}  
        
        \begin{proof}[Proof of Lemma \ref{lem:lehmann-binary}]
        	Note that for $y \in [0,1] $ and $1 \le i \le n$, 
        	\begin{equation*}
        		\tilde{F}^{-1} (\tilde{G}(y|\omega_i) | \omega_i) = 
        		\begin{cases}
        			\frac{1-g_i}{1-f_i} y 						,  & 	\text{if } y \le \frac{1-f_i}{1-g_i}, 	\\
        			1 + \frac{ (1-g_i) y - (1-f_i) }{f_i}	,  & 	\text{if } y > \frac{1-f_i}{1-g_i}. 
        		\end{cases}
        	\end{equation*}
        	Additionally, if $y \in (1,2] $, 
        	\begin{equation*}
        		\tilde{F}^{-1} (\tilde{G}(y|\omega_i) | \omega_i) = 
        		\begin{cases}
        			\frac{1-g_i}{1-f_i} +  \frac{g_i (y-1)}{1-f_i}  						,  & 	\text{if } y \le 2 - \frac{f_i}{g_i}, 	\\
        			2 - \frac{g_i(2-y) }{f_i}	,  & 	\text{if } y > 2 - \frac{f_i}{g_i} .
        		\end{cases}
        	\end{equation*}

        \begin{figure}
        	\centering
        	\begin{subfigure}[b]{0.46\textwidth}
        		\centering
        		\begin{tikzpicture}[scale=2.5]
        			
        			\def \f {.2};
        			\def \g {.6};
        			
        			\draw[->] (0,0) -- (2.2,0) node[below] {\small $y$};
        			\draw[->] (0,0) -- (0,2.2) ;
        			
        			
        			\draw [dotted] (2,1) -- (0,1) node [left] {\small 1} ;
        			\draw [dotted] (1,2) -- (1,0) node [below] {\small 1} ;
        			\draw [dotted] (2,2) -- (0,2) node [left] {\small 2} ;
        			\draw [dotted] (2,2) -- (2,0) node [below] {\small 2} ;
        			
        			\draw [thick]  (0,0) -- (1, { (1-\g)/(1-\f)} ) node [ pos=.95, anchor = south east ] {\small slope : \ $ \frac{1-g_i}{1-f_i} $};					
        			
        			\draw [thick]  (2,2) -- ( { 2 - \f/\g} , 1 ) node [ pos = .6, anchor = south east] {\small slope : \  $ \frac{ g_i }{ f_i } $} ;
        			
        			\draw [thick]  (1, { (1-\g)/(1-\f)} ) -- ( { 2 - \f/\g} , 1 )  ;
        			
        			\fill  (1, { (1-\g)/(1-\f)} ) circle[radius=0.025]  ;
        			\fill  ( { 2 - \f/\g} , 1 ) circle[radius=0.025]  ;
        		\end{tikzpicture}			
        		\caption{ $  f_i \le g_i   $}
        		\label{figure:lehmann-binary-1}
        	\end{subfigure}
        	\hspace*{25pt}
        	\begin{subfigure}[b]{0.46\textwidth}
        		\centering
        		\begin{tikzpicture}[scale=2.5]
        			
        			\def \f {.6};
        			\def \g {.2};
        			
        			\draw[->] (0,0) -- (2.2,0) node[below] {\small $y$};
        			\draw[->] (0,0) -- (0,2.2) ;
        			
        			
        			\draw [dotted] (2,1) -- (0,1) node [left] {\small 1} ;
        			\draw [dotted] (1,2) -- (1,0) node [below] {\small 1} ;
        			\draw [dotted] (2,2) -- (0,2) node [left] {\small 2} ;
        			\draw [dotted] (2,2) -- (2,0) node [below] {\small 2} ;
        			
        			\draw [thick]  (0,0) -- ({ (1-\f)/(1-\g)}, 1 ) node [ pos=.7, anchor = north west ] {\small slope : \ $ \frac{1-g_i}{1-f_i} $};								
        			
        			\draw [thick]  (2,2) -- ( 1 , {2 - \g/\f} ) node [ pos = .8, anchor = north west] {\small slope : \  $ \frac{ g_i }{ f_i } $} ;					
        			
        			\draw [thick]  ({ (1-\f)/(1-\g)}, 1 ) -- ( 1 , {2 - \g/\f} ) ;
        			
        			\fill  ({ (1-\f)/(1-\g)}, 1 ) circle[radius=0.025]  ;
        			\fill  ( 1 , {2 - \g/\f} ) circle[radius=0.025]  ;
        		\end{tikzpicture}			
        		\caption{$f_i > g_i $}
        		\label{figure:lehmann-binary-2}
        	\end{subfigure}
        	\caption{ The graphs of  $\tilde{F}^{-1} ( \tilde{G} (y|\omega_i) |\omega_i )$}
        	\label{figure:lehmann-binary}
        \end{figure}
            
        	In particular, when $ f_i \le g_i $, as illustrated in Figure \ref{figure:lehmann-binary-1}, $\tilde{F}^{-1} ( \tilde{G} ( \cdot |\omega_i ) | \omega_i  )  $ is composed of three line segments connecting the vertices $(0,0)$, $ (1, \frac{1-g_i}{1-f_i}) $, $ ( 2- \frac{f_i}{g_i}, 1 ) $ and $(2,2)$. The curve passes through the lower-right unit square defined by coordinates $(1,0)$ and $(2,1)$. 
        	
        	When $f_i > g_i$, as illustrated in Figure \ref{figure:lehmann-binary-2}, $\tilde{F}^{-1} ( \tilde{G} ( \cdot |\omega_i ) | \omega_i  )  $ is composed of three line segments connecting $(0,0)$, $ ( \frac{1-f_i}{1-g_i} , 1 ) $, $ ( 1 , 2 - \frac{g_i}{f_i} ) $ and $(2,2)$. The curve passes through the upper-left unit square defined by coordinates $(0,1)$ and $(1,2)$. 
        	
        	Notice that the slope of the first line segment is $\frac{1-g_i}{1-f_i}$ and that of the last line segment is $\frac{g_i}{f_i} $. The middle line segment is determined by the intersections of these lines and the lower-right or upper-left unit squares. 
        	
        	Using this property, we can see that  $\tilde{F}^{-1} (\tilde{G}(y|\omega_i) | \omega_i) \le \tilde{F}^{-1} (\tilde{G}(y|\omega_{i+1}) | \omega_{i+1})$ for all $y \in [0,2]$ if and only if the slopes of the first and last line segments under $\omega_{i+1}$ are steeper than those under $\omega_i$, which is equivalent to condition \eqref{ineq:L-binary}. 		
        \end{proof}
        
        
        \begin{proof}[Proof of Lemma \ref{lem:Lehmann-marginal}]
        	
        	Consider any $f \in \EMLRP_2$ and $1 \le l \le n$ such that $f_l < f_{l+1}$. Define $\epsilon' \equiv \frac{f_{l+1} - f_l}{1-f_l} \in (0,1] $, then we have $f_l + \epsilon' \cdot (1-f_l) = f_{l+1} $. For any $\epsilon \in (0,\epsilon'] $, let $g  $ denote $f + \epsilon \cdot (\mathbf{1} - f)_{\le l} $. Then, 
        	\begin{equation*}
        		\frac{g_i}{f_i} = \begin{cases}
        			1- \epsilon + \frac{\epsilon}{f_i} , & \text{if } i \le l, \\
        			1	, & \text{if } i > l . 
        		\end{cases}
        	\end{equation*}
        	Since $f_i$ is increasing in $i$, $\frac{g_i}{f_i}$ is decreasing in $i$. 
        	Also note that 
        	\begin{equation*}
        		\frac{1-g_i}{1-f_i} = \begin{cases}
        			1- \epsilon  , & \text{if } i \le l, \\
        			1	, & \text{if } i > l .
        		\end{cases}
        	\end{equation*}
        	This implies that $\frac{1-g_i}{1-f_i}$ is increasing in $i$, i.e., \eqref{ineq:L-binary} holds for all $1 \le i \le n-1$. Therefore, $f \succeq_L g$ by Lemma \ref{lem:lehmann-binary}.

        	Next, consider an experiment $f \in \EMLRP_2$ and $1 \le l \le n$ such that $f_{l-1} < f_{l}$ and define $\epsilon'' = \frac{f_{l} - f_{l-1}}{f_{l}}$. 
        	Let $g  $ denote $f + \epsilon \cdot ( - f)_{\ge l} $. Then, 
        	\begin{equation*}
        		\frac{g_i}{f_i} = \begin{cases}
        			1  , & \text{if } i < l, \\
        			1 - \epsilon 	, & \text{if } i \ge l ,
        		\end{cases}
        	\end{equation*}
        	thus, $\frac{g_i}{f_i}$ is decreasing in $i$. 	
        	Also note that 
        	\begin{equation*}
        		\frac{1-g_i}{1-f_i} = \begin{cases}
        			1 , & \text{if } i < l, \\
        			1 + \frac{\epsilon}{1- f_i}	, & \text{if } i \ge l . 
        		\end{cases}
        	\end{equation*}
        	Since $1- f_i$ is decreasing in $i$, thus, $\frac{1-g_i}{1-f_i}$ is increasing in $i$. 
        	Thus, $f \succeq_L g$ by Lemma \ref{lem:lehmann-binary}. 
        \end{proof}
        

        \begin{proof}[Proof of Theorem \ref{thm:binary-lehmann-monotone}] We first state and prove a useful lemma. 
        \begin{lemma} \label{lem:binary-lehmann-decompose}
        	Suppose that $f, g \in \mathcal{E}^{MLRP}_2$ and $f \succeq_L g$. Then, there exist $0 \le k \le n$, $1 \ge \epsilon_1 \ge \cdots \ge \epsilon_k \ge 0 $ and $1 \ge \epsilon_n \ge \cdots \ge \epsilon_{k+1} \ge 0 $ such that
        	\begin{equation}
        		[1-g_i, g_i] = [1-f_i, f_i] \ \begin{bmatrix}
        			1-\epsilon_i & \epsilon_i	\\
        			0 & 1
        		\end{bmatrix}, \text{ when } 1 \le i \le k, \label{eq1:binary-lehmann-decompose}
        	\end{equation}
        	\begin{equation}
        		[1-g_{i}, g_{i}] = [1-f_{i}, f_{i}] \ \begin{bmatrix}
        			1 & 0	\\
        			\epsilon_{i} & 1 - \epsilon_{i}
        		\end{bmatrix}, \text{ when } k+1 \le i \le n. \label{eq2:binary-lehmann-decompose}
        	\end{equation}
        \end{lemma}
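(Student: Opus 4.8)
The plan is to read off the garbling parameters state by state from the Lehmann characterization in Lemma \ref{lem:lehmann-binary}, then locate the threshold $k$ at which the transformation reverses direction, and finally check that the monotonicity of the $\epsilon_i$ is inherited from the monotonicity of the likelihood ratios.

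First I would invoke Lemma \ref{lem:lehmann-binary}: since $f \succeq_L g$ with both experiments in $\mathcal{E}^{MLRP}_2$, the ratio $g_i/f_i$ is weakly decreasing in $i$ while $(1-g_i)/(1-f_i)$ is weakly increasing in $i$. Because $g_i \ge f_i$ is equivalent to $g_i/f_i \ge 1$ and to $(1-g_i)/(1-f_i) \le 1$ (even under the conventions $x/0 = +\infty$, $0/0=1$), both monotonicity statements agree that there is a single crossing. I therefore set $k \equiv \max\{i : g_i \ge f_i\}$, with $k=0$ if the set is empty, so that $g_i \ge f_i$ for all $i \le k$ and $g_i \le f_i$ for all $i > k$.

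Next I would define the parameters by directly inverting the two garbling identities in the statement. For $1 \le i \le k$, multiplying out the right-hand side of \eqref{eq1:binary-lehmann-decompose} yields $g_i = f_i + \epsilon_i(1-f_i)$ and $1-g_i = (1-f_i)(1-\epsilon_i)$, so I set $\epsilon_i \equiv 1 - \tfrac{1-g_i}{1-f_i} = \tfrac{g_i-f_i}{1-f_i}$. For $k+1 \le i \le n$, multiplying out \eqref{eq2:binary-lehmann-decompose} yields $g_i = f_i(1-\epsilon_i)$, so I set $\epsilon_i \equiv 1 - \tfrac{g_i}{f_i} = \tfrac{f_i-g_i}{f_i}$. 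By the defining property of $k$ each $\epsilon_i$ lies in $[0,1]$, and both garbling equations then hold by construction. The required ordering follows immediately from Step one: for $i \le k$, $\epsilon_i = 1 - \tfrac{1-g_i}{1-f_i}$ is decreasing because $(1-g_i)/(1-f_i)$ is increasing, giving $1 \ge \epsilon_1 \ge \cdots \ge \epsilon_k \ge 0$; for $i > k$, $\epsilon_i = 1 - \tfrac{g_i}{f_i}$ is increasing because $g_i/f_i$ is decreasing, giving $1 \ge \epsilon_n \ge \cdots \ge \epsilon_{k+1} \ge 0$.

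The only genuine obstacle is the degenerate boundary where $f_i \in \{0,1\}$, which renders one of the two defining ratios an indeterminate $0/0$. Here I would rely on the conventions already fixed in the paper. When $f_i = 1$ (forced into the high-index tail by the MLRP ordering $f_1 \le \cdots \le f_n$), the inequality $g_i \ge f_i$ forces $g_i = 1$, so $(1-g_i)/(1-f_i) = 0/0 = 1$ and hence $\epsilon_i = 0$; symmetrically, when $f_i = 0$ the inequality $g_i \le f_i$ forces $g_i = 0$, so $g_i/f_i = 0/0 = 1$ and again $\epsilon_i = 0$. In each case the assigned value $\epsilon_i = 0$ sits at the correct extreme end of the relevant monotone block, so the ordering is preserved, and I would verify that these assignments agree with the limiting values of the nondegenerate ratios so that no inconsistency arises at the switch point $i = k$.
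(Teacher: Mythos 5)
Your proposal is correct and follows essentially the same route as the paper's proof: the same threshold $k = \max\left(\{i : g_i \ge f_i\} \cup \{0\}\right)$, the same closed-form parameters $\epsilon_i = 1 - \tfrac{1-g_i}{1-f_i}$ for $i \le k$ and $\epsilon_i = 1 - \tfrac{g_i}{f_i}$ for $i > k$, with monotonicity of the $\epsilon_i$ read off from Lemma \ref{lem:lehmann-binary}. The only differences are cosmetic: you verify the single-crossing structure and the degenerate boundary cases $f_i \in \{0,1\}$ explicitly, whereas the paper sidesteps both by deducing $\epsilon_i \in [0,1]$ throughout each block directly from the monotonicity of the ratios together with $\epsilon_k \ge 0$ and $\epsilon_{k+1} \ge 0$ at the endpoints.
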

        
        \begin{proof}[Proof of Lemma \ref{lem:binary-lehmann-decompose}]
        	Define $k \equiv \max \left (  \left \{ i \ : \ g_i \ge f_i    \right \} \cup \{ 0 \} \right )$. For $1 \le i \le k$, set $\epsilon_i = 1- \frac{1-g_i}{1-f_i}$. Then, \eqref{eq1:binary-lehmann-decompose} holds by definition. Additionally, by $f \succeq_{L} g$ and Lemma \ref{lem:lehmann-binary}, $\frac{1-g_i}{1-f_i}$ is increasing in $i$, thus, $\epsilon_i $ is decreasing in $i$. Last, by the definition of $k$, $\epsilon_k \ge 0 $. Next, for $k+1 \le i \le n$, set $\epsilon_i = 1 - \frac{g_i}{f_i}$. Then, \eqref{eq2:binary-lehmann-decompose} holds by definition. By $f \succeq_{L} g$ and Lemma \ref{lem:lehmann-binary}, $\frac{g_i}{f_i}$ is decreasing in $i$, thus, $\epsilon_i $ is increasing in $i$. By the definition of $k$, $\epsilon_{k+1} \ge 0 $. 
        \end{proof}
        

        	Suppose that $f \succeq_L g$, $k \equiv \max \left (  \left \{ i \ : \ g_i \ge f_i    \right \} \cup \{ 0 \} \right ) $, 
        	and $ (\epsilon_i)_{i=1}^{n}$ is defined from Lemma \ref{lem:binary-lehmann-decompose}.  	
        	Let $e_i = \frac{\epsilon_{i}- \epsilon_{i+1}}{1-\epsilon_{i+1} }$ for all $1 \le i \le k-1$, $e_k = \epsilon_k $. Then, we have 
        	\begin{align*}
        		\begin{bmatrix}
        			1-\epsilon_i & \epsilon_i	\\
        			0 & 1
        		\end{bmatrix} & = \begin{bmatrix}
        			1-e_k & e_k	\\
        			0 & 1
        		\end{bmatrix} \cdots \begin{bmatrix}
        			1-e_i & e_i	\\
        			0 & 1
        		\end{bmatrix}.
        	\end{align*}
        	Likewise, let $e_{k+1} = \epsilon_{k+1}$ and $e_i = \frac{\epsilon_{i} - \epsilon_{i-1}}{1-\epsilon_{i-1}}$ for all $k+2 \le i \le n$. Then, we also have 
        	\begin{align*}
        		\begin{bmatrix}
        			1 & 0	\\
        			\epsilon_i & 1 - \epsilon_i
        		\end{bmatrix}
        		& = \begin{bmatrix}
        			1 & 0	\\
        			e_{k+1} & 1 - e_{k+1}
        		\end{bmatrix} \begin{bmatrix}
        			1 & 0	\\
        			e_{k+2} & 1 - e_{k+2}
        		\end{bmatrix} \cdots \begin{bmatrix}
        			1 & 0	\\
        			e_i & 1 - e_i
        		\end{bmatrix}.
        	\end{align*}
        	
        	Using these, 
            define $h^k  \equiv f + e_k \cdot (1-f)_{\le k}$, and for $1 \le i \le k-1 $, define $h^i  \equiv h^{i+1} + e_i \cdot (1-h^{i+1})_{\le i}$. Then, from \eqref{ineq:LM-UT-ac} and the FTC, we have $C(f) \ge C(h^k) \ge \cdots \ge C(h^1)$. Next, define $h^{k+1}  \equiv h^1 + e_{k+1} \cdot (-h^1)_{\ge k+1 }$, and for $k+2 \le i \le n$, define $h^{i} \equiv h^{i-1}+e_{i} \cdot (-h^{i-1})_{\ge i}$. Then, from \eqref{ineq:LM-LT-ac} and the FTC, we have $C(h^1) \ge C(h^{k+1}) \ge \cdots \ge C(h^n)$. From $h^n = g$, we have $C(f) \ge C(g)$. 
        \end{proof}

        
        \section{Proofs for Section \ref{sec:finite} }
        
        \subsection{Blackwell Monotonicity}
        
        \begin{proof}[Proof of Theorem \ref{thm:BM-general}]
        	
        	Suppose that $f \in \mathcal{E}_m$, $ g \in \mathcal{E}_{m'} $ and $f \succeq_B g$, i.e., there exists an $m \times m'$ stochastic matrix $ g = f M $. Note that for any $1 \le k \le m'$, $g^k = \sum_{j=1}^{m}M_j^{k} f^j$, and $\sum_{k=1}^{m'} M_j^k =1$. Define $\phi: [0,1] \rightarrow \mathcal{E}_{m+m'}$ as $\phi(t) \equiv \begin{bmatrix}
        			(1-t) f & t g
        	\end{bmatrix}$. 
        	By split invariance and permutation invariance, we have $C(f) = C(\phi(0))$ and $C(g) = C(\phi(1))$.
        	
        	For $t \in (0,1)$, observe that from $g = f M$, 
        	\begin{align*}
        		\phi'(t) = \begin{bmatrix}
        			- f &  g
        		\end{bmatrix} = & \sum_{j=1}^{m}\sum_{k=1}^{m'} M_j^k \begin{bmatrix}
        			\mathbf{0} & \cdots &  \overbrace{-f^j}^{j\text{-th column}} & \cdots&  \overbrace{f^j}^{m+k\text{-th column}} & \cdots &  \mathbf{0} \\
        			&&&&&&
        		\end{bmatrix}\\
        		= & \sum_{j=1}^{m} \sum_{k=1}^{m'} \dfrac{M_j^k}{(1-t)} \phi(t)^{j \rightarrow m + k}.
        	\end{align*}
        	Let $C_{m+m'}$ denote the restriction of $C$ to $\mathcal{E}_{m+m'}$. Then, we have 
        	\begin{equation*}
        		\frac{d}{dt} C(\phi(t)) = \langle \nabla C_{m+m'}(\phi(t)), \phi'(t)  \rangle  = \sum_{j=1}^{m}\sum_{k=1}^{m'} \frac{M_j^k}{(1-t)} \langle \nabla C_{m+m'}(\phi(t)), \phi(t)^{j \rightarrow m + k} \rangle \le 0,
        	\end{equation*}
        	where the inequality follows from $M_{j}^{k} \geq 0$ and $\langle \nabla C_{m+m'}(\phi(t)), \phi(t)^{j \rightarrow m + k} \rangle \le 0 $ holds by decreasing in signal replacement. Therefore, by FTC, $C(g) = C(\phi(1)) = C(\phi(0)) + \int_{0}^{1} \frac{d}{dt} C(\phi(t))dt \le C(\phi(0)) = C(f)$.
        \end{proof}
        
        \subsection{Lehmann Monotonicity}
        
        \subsubsection{Geometric Characterization of the Lehmann Order \label{sec:Lehmann-geometry} }
        
        Given an experiment $f \in \EMLRP_m$, we begin by defining a continuous experiment $\tilde{f}: \Omega \rightarrow \Delta ([0,m])$ associated with $f$ as in the binary setup. For any $\omega_i \in \Omega$ and $x \in [0,m]$, the cumulative distribution of $\tilde{f}$ is $\tilde{F}(x|\omega_i) = F^{\lfloor x \rfloor }_i + (x- \lfloor x\rfloor) \cdot f^{\lfloor x \rfloor +1}_i$. Note that $\tilde{F}(0|\omega) =0$ and $\tilde{F}(m|\omega) = 1$ for all $\omega \in \Omega$. Then, for all $ f \in \EMLRP $, define
        \begin{equation}
        	\Ss_i (f) \equiv \left \{ (x,y) \in [0,1]^2 \ : \ \tilde{F}^{-1}( y |\omega_{i+1} ) \ge \tilde{F}^{-1} (x|\omega_i)    \right  \}. \label{def:Ssf}
        \end{equation}
        
        \begin{lemma}[\cite{jewitt2007information}] \label{lem:Lehmann-geometry}
        	Suppose that $f, g \in \EMLRP$. $f \succeq_L g$ is equivalent to $\Ss_i(g) \subseteq \Ss_i (f)$ for all $ 1 \le i \le n-1$. 
        \end{lemma}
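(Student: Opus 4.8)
The plan is to prove the equivalence one state-pair at a time. Since the range $\{\omega_1,\dots,\omega_n\}$ is finite and totally ordered, the map $\omega \mapsto \tilde{F}^{-1}(\tilde{G}(y|\omega)|\omega)$ is increasing in $\omega$ if and only if it is increasing across each consecutive pair $(\omega_i,\omega_{i+1})$. Hence $f \succeq_L g$ is equivalent to the family of inequalities
\begin{equation*}
\tilde{F}^{-1}(\tilde{G}(y|\omega_i)|\omega_i) \le \tilde{F}^{-1}(\tilde{G}(y|\omega_{i+1})|\omega_{i+1}) \quad \text{for all } y \text{ and all } 1 \le i \le n-1,
\end{equation*}
and it suffices to show that, for each fixed $i$, this pairwise inequality is equivalent to $\Ss_i(g) \subseteq \Ss_i(f)$. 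Writing $F_i \equiv \tilde{F}(\cdot|\omega_i)$ and $G_i \equiv \tilde{G}(\cdot|\omega_i)$ (continuous, nondecreasing, and surjective onto $[0,1]$), I would route both conditions through a single object: the \emph{PP-plot expressed as a function}, $\Psi_f(y) \equiv F_i(F_{i+1}^{-1}(y))$, which records, at each cumulative level $y$ of state $i+1$, the corresponding cumulative level of state $i$.

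The bridge is the Galois connection $F_i^{-1}(x) \le s \iff x \le F_i(s)$, valid for continuous CDFs. Applying it with $s = F_{i+1}^{-1}(y)$ rewrites the defining inequality of $\Ss_i(f)$ as $x \le \Psi_f(y)$; thus $\Ss_i(f)$ is exactly the region to the left of the graph of $\Psi_f$, and the set containment $\Ss_i(g) \subseteq \Ss_i(f)$ reduces to the pointwise inequality $\Psi_g \le \Psi_f$ on $[0,1]$. For the Lehmann side, the same Galois step, now eliminating the outer $F_i^{-1}$, turns the pairwise inequality into $G_i(y') \le \Psi_f(G_{i+1}(y'))$ for all signal levels $y'$. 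Substituting $y = G_{i+1}(y')$, which sweeps out all of $[0,1]$ by surjectivity of $G_{i+1}$, and using $G_{i+1}^{-1}(G_{i+1}(y')) = y'$ to recognize the left-hand side as $\Psi_g(y)$, this is precisely $\Psi_g \le \Psi_f$. The two conditions therefore coincide. The easy direction ($f \succeq_L g \Rightarrow \Ss_i(g) \subseteq \Ss_i(f)$) can alternatively be verified by a direct point-chase: for $(x,y) \in \Ss_i(g)$ one applies $G_i$, then $F_i^{-1}$, together with the Lehmann inequality at $y' = G_{i+1}^{-1}(y)$, using only the clean identities $F(F^{-1}(z)) = z$.

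The delicate step, and the one I expect to be the main obstacle, is handling the generalized inverses when some $f^j_i$ or $g^j_i$ vanish, so that the piecewise-linear CDFs $F_i,G_i$ have flat stretches and fail to be strictly increasing. In that case the two-sided identity $G_{i+1}^{-1}(G_{i+1}(y')) = y'$ invoked in the converse can break down, and the region $\Ss_i$ need not even be closed, opening a potential gap between $\Psi_g \le \Psi_f$ and the pairwise Lehmann inequality. I would close this gap by exploiting the MLRP structure: the monotone likelihood ratio property forces the signals on which a given state assigns zero probability to form an initial (or terminal) block, which pins down exactly where the flats of $F_i$ and $G_i$ sit relative to one another and restores the comparisons needed at the jump points of the quantile functions. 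Equivalently, one may first establish the equivalence for full-support experiments, where $F_i,G_i$ are genuine bijections and the Galois argument above is an exact chain of equivalences in both directions, and then recover the general case by approximating $f$ and $g$ by full-support MLRP experiments and passing to the limit, using continuity of the relevant cumulative and quantile maps.
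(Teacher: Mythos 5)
Your proposal follows essentially the same route as the paper's own proof: reduce the Lehmann condition to adjacent state pairs, identify $\Ss_i(g) \subseteq \Ss_i(f)$ with pointwise domination of the PP-plot function, and convert that into the pairwise Lehmann inequality by a substitution along the range of $\tilde{G}$ (the paper substitutes $x = \tilde{G}(y|\omega_i)$ into the boundary comparison and applies $\tilde{F}^{-1}(\cdot|\omega_{i+1})$, which is your Galois-connection step in the other parametrization). The one point where you go beyond the paper is the treatment of flat stretches: the paper's proof silently invokes the identities $\tilde{G}^{-1}(\tilde{G}(y|\omega_i)|\omega_i) = y$ and $\tilde{F}^{-1}(\tilde{F}(z|\omega_{i+1})|\omega_{i+1}) = z$, which hold only when the CDFs are strictly increasing (full support), so your concern about zero-probability signals is legitimate---indeed with the lower-quantile convention the stated equivalence can fail for experiments with vanishing entries---but addressing it is not something the paper's own proof attempts.
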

        
        \begin{proof} [Proof of Lemma \ref{lem:Lehmann-geometry}]
        	For any $x \in [0,1]$, the $y$-coordinate of the lower boundary points for $\Ss_i (f)$ and $\Ss_i (g)$ are $\tilde{F} ( \tilde{F}^{-1}( x |\omega_i )  |\omega_{i+1}) $ and $\tilde{G} ( \tilde{G}^{-1} (x |\omega_i) |\omega_{i+1}) $. Therefore, $\Ss_i (g) \subseteq \Ss_i (f)$ is equivalent to $\tilde{G}(\tilde{G}^{-1}(x|\omega_i)|\omega_{i+1}) \ge \tilde{F}(\tilde{F}^{-1}(x|\omega_i)|\omega_{i+1})$
        	for all $x \in [0,1]$. Substituting in $x = \tilde{G}(y|\omega_i)$ and applying $\tilde{F}^{-1}(\cdot|\omega_{i+1})$ to both sides yields $\tilde{F}^{-1} ( \tilde{G} ( y |\omega_{i+1} )  | \omega_{i+1}) \ge \tilde{F}^{-1} ( \tilde{G} ( y |\omega_{i} )  | \omega_{i})$. Therefore, $\Ss_i(g) \subseteq \Ss_i(f)$ for all $1 \le i\le n-1$ is equivalent to $\tilde{F}^{-1} ( \tilde{G} ( y |\omega_{i} )  | \omega_{i})$ is increasing in $i$, i.e., $f$ is Lehmann more informative than $g$. 
        \end{proof}
        
        \begin{proof}[Proof of Lemma \ref{lem:Lehmann-marginal-general}]
            Suppose that $f \in \EMLRP_m$ and $f_{l}^{j}f_{l+1}^{j+1} > f_{l+1}^{j}f_{l}^{j+1}$ for some $l$ and $j$. Let $\epsilon' = \frac{ f^j_{l} f^{j+1}_{l+1} - f^{j+1}_{l} f^j_{l+1} }{  f^j_l (f^{j+1}_{l+1} + f^j_{l+1}) } \in (0,1]$, where we $f^j_{n+1}=0$ and $f^{j+1}_{n+1}=1$. Then for all $\epsilon \in [0, \epsilon']$, $f' = f + \epsilon f^{j \rightarrow j+1}_{\le l } \in \EMLRP_m$. To see this, notice $\epsilon'$ is chosen such that the MLRP holds with equality for the $l$-th and $(l+1)$-th row for $f + \epsilon' f^{j \rightarrow j+1}_{\le l}$, and the MLRP for all other pairs of states and signals are preserved under this operation. 

            We next show $f \succeq_{L} f + \epsilon f^{j \rightarrow j+1}_{\le l }$. Note that for any $i > l$, $f_i = f'_i$. Therefore, $\tilde{F}^{-1} ( \tilde{F'} (y|\omega_i) |\omega_i) = y$ for all $i > l$ and $y \in [0,m]$. Next, consider the case with $i \le l$. 
            
            If $y \le j - 1 $ or $y \ge j + 1 $, $ \tilde{F'}(y|\omega_i) = \tilde{F}(y|\omega_i) $, or equivalently, $\tilde{F}^{-1} ( \tilde{F'}(y|\omega_i) | \omega_i ) = y $, thus, $ \tilde{F}^{-1} ( \tilde{F'}(y|\omega_i) | \omega_i ) $ is constant across $i$ in this case. 
        		
            If $j-1 < y \le j$, $ \tilde{F'}(y|\omega_i) = \tilde{F}(j-1|\omega_i)+ (y-j+1) (1-\epsilon)f^j_i $, thus, $\tilde{F}^{-1}(\tilde{F'}(y|\omega_i)|\omega_i) = (j-1) \epsilon + y (1-\epsilon) < y$, i.e., $ \tilde{F}^{-1} ( \tilde{F'}(y|\omega_i) | \omega_i ) $ is increasing in $i$. 
        		
            If $j < y < j+1 $, $\tilde{F'}(y|\omega_i) = \tilde{F}(j-1|\omega_i)+ (1-\epsilon)f^j_i + (y-j) (\epsilon f^j_i + f^{j+1}_i)$. Define $L_i = f^{j+1}_i/f^j_i$, then $L_i$ is increasing in $i$. 
        		\begin{equation*}
        			\tilde{F}^{-1}(\tilde{F'}(y|\omega_i)|\omega_i) = 
        			\begin{cases}
        				j - \epsilon  +(y-j) (\epsilon + L_i ) ,		& \text{if } j < y \le j + \frac{\epsilon}{\epsilon + L_i },\\
        				y - (j+1-y) \frac{\epsilon}{L_i} ,
        				& \text{if }  j + \frac{\epsilon}{\epsilon + L_i } < y < j+1,
        			\end{cases}
        		\end{equation*}
        		or equivalently, $ \tilde{F}^{-1}(\tilde{F'}(y|\omega_i)|\omega_i) =  \max \{ j - \epsilon  +(y-j) (\epsilon + L_i ) , y - (j+1-y) \frac{\epsilon}{L_i} \}$.
        		Since both functions are increasing in $L_i$, $\tilde{F}^{-1}(\tilde{F'}(y|\omega_i)|\omega_i)$ is increasing for $i \le l$. Additionally, for $j < y \le j + \frac{\epsilon}{\epsilon + L_i }$, $y > j  \ge j-\epsilon + (y-j) (\epsilon + L_i) $, and for $j + \frac{\epsilon }{\epsilon + L_i} < y < j+1 $,  $y > y - (j+1 - y) \frac{\epsilon}{L_i} $. Since $\tilde{F}^{-1}(\tilde{F'}(y|\omega_i)|\omega_i) = y$ for $i > l$, $\tilde{F}^{-1}(\tilde{F'}(y|\omega_i)|\omega_i) $ is (weakly) increasing in $i$.			
        	
        	The case with $f' = f + \epsilon f^{j \rightarrow j-1}_{\ge l } $ where $\epsilon'' = \frac{  f^{j-1}_{l-1} f^j_l - f^{j-1}_{l} f^j_{l-1}  }{f^j_l  (f^{j-1}_{l-1} + f^j_{l-1} )  } \in (0,1]$ is analogous. 
        \end{proof}
        
        
        \subsubsection{Proof of Theorem \ref{thm:LM-general}}
        
        \begin{lemma}\label{lem:removal} 
            Suppose that $C:\EMLRP \rightarrow \mathbb{R}_+$ is differentiable, split invariant, and decreasing in reverse signal replacement. 
            For any $f \in \EMLRP_m $, $1 \le i \le n$ and $1\le j +1  < k \le m $, there exists $\tilde{f} \in \EMLRP$ such that 
            (i) $C(f) \ge C(\tilde{f})$; 
            (ii) $\Ss_{i'}( f ) = \Ss_{i'} ( {\tilde{f}} ) $ for all $i' \neq i$; and
            (iii) $\Ss_i( {\tilde{f}} )$ is obtained from $\Ss_i( f ) $ by removing the region below the line segment connecting the points $ F^{j}_{i,i+1} $ and $ F^{k}_{i,i+1} $. 
        \end{lemma}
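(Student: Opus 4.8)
The plan is to build $\tilde f$ together with an explicit, MLRP-preserving, cost-nonincreasing path from $f$ (after harmless splits) to $\tilde f$, reading properties (ii) and (iii) directly off the PP plots via Lemma~\ref{lem:Lehmann-geometry}. I would first settle the case $k=j+2$, which already contains all of the ideas, and then reduce general $k$ to it by induction on $k-j$.

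For $k=j+2$, write $\sigma_1=f^{j+1}_{i+1}/f^{j+1}_i$ and $\sigma_2=f^{j+2}_{i+1}/f^{j+2}_i$ for the slopes of the two PP-plot segments over $[F^{j}_{i,i+1},F^{j+2}_{i,i+1}]$, and $\sigma=(f^{j+1}_{i+1}+f^{j+2}_{i+1})/(f^{j+1}_i+f^{j+2}_i)\in[\sigma_1,\sigma_2]$ for the chord slope. By split invariance I insert a zero column between $f^{j+1}$ and $f^{j+2}$ at no cost, then move a fraction $a$ of the upper block $f^{j+1}_{\le i}$ and a fraction $b$ of the lower block $f^{j+2}_{\ge i+1}$ into it; these are exactly the reverse replacements $f^{(j+1)\rightarrow(j+2)}_{\le i}$ and $f^{(j+3)\rightarrow(j+2)}_{\ge i+1}$ in the re-indexed experiment. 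Tracking cumulatives shows that for each $i'\neq i$ the states $i',i'+1$ lie on the same side of the cut between $i$ and $i+1$, so the operation merely splits a single original column and leaves $\Ss_{i'}$ fixed; for the pair $(i,i+1)$ the two interior dots sit at $(F^j_i+(1-a)f^{j+1}_i,\,F^{j+1}_{i+1})$ and $(F^{j+1}_i,\,F^{j+1}_{i+1}+bf^{j+2}_{i+1})$. Taking $a=1-\sigma_1/\sigma$ and $b=1-\sigma/\sigma_2$ places both dots on the chord, which yields (ii) and (iii).

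To get (i) I would run this transfer continuously, letting the common slope $\lambda$ of the first two segments rise from $\sigma_1$ to $\sigma$ with $a(\lambda)=1-\sigma_1/\lambda$ and $b(\lambda)=(\lambda-\sigma_1)f^{j+1}_i/f^{j+2}_{i+1}$ fixed by keeping $F^{j}_{i,i+1}$ and the two interior dots collinear (one checks $b(\sigma)=1-\sigma/\sigma_2$). Both $a$ and $b$ increase in $\lambda$, so the velocity is a nonnegative combination of the two reverse-replacement directions; by differentiability and decreasing in reverse signal replacement (Definition~\ref{def:decreasing_reverse_signal_general}) the cost is nonincreasing, where at states with a tight MLRP the inequalities \eqref{ineq:Lehmann-finite-1}–\eqref{ineq:Lehmann-finite-2} pass to the limit by continuity. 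Meanwhile the first two segments share slope $\lambda\le\sigma$ while the third has slope $\ge\sigma$, so every intermediate PP plot stays convex and the path remains in $\EMLRP$; hence $C(f)\ge C(\tilde f)$.

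For general $k$ I induct on $k-j$. After flattening $F^{j}_{i,i+1}$ to $F^{k-1}_{i,i+1}$ by the inductive hypothesis, the plot over $[F^j_{i,i+1},F^k_{i,i+1}]$ is two linear pieces meeting at $F^{k-1}_{i,i+1}$: a chord of slope $\sigma'$ made of several collinear columns, followed by the single segment $f^k$ of slope $\sigma_2\ge\sigma'$. The same rotation applies: raise the whole first piece about $F^j_{i,i+1}$ by having each of its columns shed a common fraction of its upper block to the right, lower $f^k$ by shedding its lower block to the left, and route the shed mass into new slope-$\sigma$ columns whose upper source lies in the first piece and whose lower source is $f^k$. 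Ordering the new columns by the chord parameter keeps both source indices monotone, so MLRP holds for every state pair, the moves stay nonnegative combinations of neighbor reverse replacements, and the two pieces merge into the chord $F^j_{i,i+1}$–$F^k_{i,i+1}$, giving exactly the removal in (iii). The main obstacle I anticipate is precisely this multi-column rotation: verifying that raising a linear piece built from several columns can be realized entirely by neighbor reverse-signal-replacements while every intermediate PP plot stays convex—the non-convexity of $\EMLRP$ forbids the shortcut of interpolating linearly between $f$ and $\tilde f$.
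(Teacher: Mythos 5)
Your case $k=j+2$ is correct and is, in fact, identical to the paper's own argument: your $a(\lambda)=1-\sigma_1/\lambda$ and $b(\lambda)=(\lambda-\sigma_1)f^{j+1}_i/f^{j+2}_{i+1}$ are exactly the paper's $a_t$ and $b_t$ with $l_t$ in place of $\lambda$, and the verification of MLRP along the path and of cost monotonicity via nonnegative combinations of the two neighbor reverse replacements matches the paper's computation. The problem is the inductive step for general $k$, and it is precisely the obstacle you flag at the end: that step is not proved, and the mechanism you sketch for it is wrong in two concrete ways.

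First, the ``common fraction'' prescription is incompatible with neighbor-only transfers. Write $u_q$ for the state-$i$ mass of the $q$-th column of the flattened first piece and $U_p=\sum_{q\le p}u_q$. A new column inserted to the right of an interior column $c_p$ can only draw its lower (states $\ge i+1$) mass from its own right neighbor $c_{p+1}$, so $c_{p+1}$ must shed lower mass leftward as well as upper mass rightward; keeping its dot on the rotating line of slope $\lambda$ then forces its upper-shed fraction $\alpha_{p+1}$ to exceed $\alpha_1=1-\sigma'/\lambda$, and the bookkeeping cascades to $\alpha_p u_p=(1-\sigma'/\lambda)U_p$ rather than a common fraction. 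If instead you insist, as written, that every new column's lower source be $f^k$, that mass must transit through all intermediate columns to reach the left end, and the intermediate snapshots no longer have the ratios you assert. Second, and more fundamentally, the cascade hits exhaustion events: $\alpha_p\le 1$ requires $\lambda \le \sigma' U_p/U_{p-1}$, i.e.\ the rotating line must not pass the corner point $\bigl(F^j_i+U_{p-1},\,F^j_{i+1}+\sigma'U_p\bigr)$, and nothing guarantees this before $\lambda$ reaches the chord slope $\sigma$. This is exactly the paper's subcase (ii) (the condition $S_{i+1}(2)/S_i(1)<S_{i+1}(3)/S_i(3)$): when a column's upper block is used up mid-rotation, the pairing of upper and lower sources must be reorganized, and the paper supplies a second, genuinely different construction $\hat{\phi}$ in \eqref{eq:psi-k-def-2}, in which mass flows \emph{through} intermediate columns (upper mass $j+1\to j+2\to j+3$, lower mass $j+5\to j+4\to j+3$) while every snapshot remains in $\EMLRP$ and every velocity remains a nonnegative combination of neighbor reverse replacements. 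Your proposal contains no mechanism for these events, and your induction on $k-j$ does not remove them, since they occur even when the first piece is already flattened. So the claim that ``the moves stay nonnegative combinations of neighbor reverse replacements'' while ``every intermediate PP plot stays convex'' is unestablished exactly where the lemma is hard.
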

        
        \begin{figure}
            \centering
            \begin{subfigure}[b]{0.46\textwidth}
                \centering
                \begin{tikzpicture}[scale=5]
                    
                    \tikzset{arrow line/.style={
                            postaction={decorate},
                            decoration={
                                markings,
                                mark=between positions 0.4 and 3 step 0.25 with {\arrow[scale=0.55]{Latex}}
                            }
                    }}
                    
                    \tikzset{arrow linem/.style={
                            postaction={decorate},
                            decoration={
                                markings,
                                mark=between positions 0.8 and .9 step 0.25 with {\arrow[scale=0.55]{Latex}}
                            }
                    }}
                    
                    \def \xa {.15};
                    \def \ya {.1};
                    \def \xb {.6};
                    \def \yb {.25};
                    \def \xc {.85};
                    \def \yc {.6};
                    \def \xd {.95};
                    \def \yd {.9};
                    
                    \draw[->] (0,0) -- (1.05,0) node[below] {\small $i$};
                    \draw[->] (0,0) -- (0,1.05) node[left]  {\small $i+1$};
                    
                    \draw[black] 		  (0,.07) -- (\xa, \ya) -- (\xb,\yb) -- (\xc,\yc) -- (\xd,\yd) -- (.98,1);
                    
                    \node at (0.3, 0.7) {$\Ss_i (f)$} ;
                    \fill  (\xa,\ya) circle[radius=0.01]  ;
                    \node [anchor = west] at (\xa,.07) {\footnotesize $ F^{j}_{i,i+1}  $} ;
                    \fill  (\xb,\yb) circle[radius=0.01] node [anchor = west] {\footnotesize $ F^{j+1}_{i,i+1}  $} ;
                    \fill  (\xc,\yc) circle[radius=0.01] node [anchor = west] {\footnotesize $ F^{j+2}_{i,i+1} $} ;
                    \fill  (\xd,\yd) circle[radius=0.01] node [anchor = west] {\footnotesize $ F^{j+3}_{i,i+1} $} ;
                    \draw[blue] (\xa,\ya) -- (\xd, \yd) ;
                    \draw[arrow line,thin ] (\xb,\yb)--( {\xa + (\xc-\xa)*((\yb-\ya)/(\yc-\ya))} , \yb );
                    \draw[arrow line, thin] (\xb,\yb)--( \xb , {\ya+ (\yc-\ya)*( (\xb-\xa) /(\xc-\xa))} );
                    
                    \draw[dotted] (\xa,\ya)--(\xc,\yc);
                    
                    \draw[arrow linem, thin,red] ( {\xa + (\xc-\xa)*((\yb-\ya)/(\yc-\ya))} , \yb )--( {\xa + (\xd-\xa)*((\yb-\ya)/(\yd-\ya))} , \yb );
                    \draw[arrow line, thin,red] ( \xb , {\ya+ (\yc-\ya)*( (\xb-\xa) /(\xc-\xa))} )--( \xb , {\ya+ (\yd-\ya)*( (\xb-\xa) /(\xd-\xa))} );
                    
                    \draw[arrow line, thin,red] ( \xc , \yc )--( {\xa + (\xd-\xa)*((\yc-\ya)/(\yd-\ya))} , \yc );
                    \draw[arrow line, thin,red]  ( \xc , \yc )--( \xc , {\ya+ (\yd-\ya)*( (\xc-\xa) /(\xd-\xa))} );
                    
                    \fill  (\xc,\yc) circle[radius=0.01] node [anchor = west] {\footnotesize $ F^{j+2}_{i,i+1} $} ;
                    
                    \fill  ( {\xa + (\xc-\xa)*((\yb-\ya)/(\yc-\ya))} , \yb ) circle[radius=0.008]  ;
                    \fill  ( \xb , {\ya+ (\yc-\ya)*( (\xb-\xa) /(\xc-\xa))} ) circle[radius=0.008]  ;
                    
                    \fill [red] ( {\xa + (\xd-\xa)*((\yb-\ya)/(\yd-\ya))} , \yb ) circle[radius=0.008]  ;
                    \fill [red] ( \xb , {\ya+ (\yd-\ya)*( (\xb-\xa) /(\xd-\xa))} ) circle[radius=0.008]  ;
                    \fill [red] ( {\xa + (\xd-\xa)*((\yc-\ya)/(\yd-\ya))} , \yc ) circle[radius=0.008]  ;
                    \fill [red] ( \xc , {\ya+ (\yd-\ya)*( (\xc-\xa) /(\xd-\xa))} ) circle[radius=0.008]  ;

                \end{tikzpicture}
                \caption{$ \frac{S_{i+1}(2)}{S_i(1)} \ge \frac{S_{i+1}(3)}{S_i(3)} $}
                \label{figure:lem-removal-a}
            \end{subfigure}	
            \hspace*{15pt}
            \begin{subfigure}[b]{0.46\textwidth}
                \centering
                \begin{tikzpicture}[scale=5]
                    
                    \tikzset{arrow line/.style={
                            postaction={decorate},
                            decoration={
                                markings,
                                mark=between positions 0.4 and 3 step 0.25 with {\arrow[scale=0.55]{Latex}}
                            }
                    }}
                    
                    \tikzset{arrow linem/.style={
                            postaction={decorate},
                            decoration={
                                markings,
                                mark=between positions 0.5 and .8 step 0.25 with {\arrow[scale=0.55]{Latex}}
                            }
                    }}
                    
                    \def \xa {.2};
                    \def \ya {.05};
                    \def \xb {.7};
                    \def \yb {.3};
                    \def \xc {.8};
                    \def \yc {.45};
                    \def \xd {.9};
                    \def \yd {.9};
                    
                    \draw[->] (0,0) -- (1.05,0) node[below] {\small $i$};
                    \draw[->] (0,0) -- (0,1.05) node[left]  {\small $i+1$};
                    
                    \draw[black] 		  (0,0) -- (\xa,\ya) -- (\xb,\yb) -- (\xc,\yc) -- (\xd,\yd) -- (.91,1);
                    
                    \node at (0.3, 0.7) {$\Ss_i (f)$} ;
                    \fill  (\xa,\ya) circle[radius=0.01]  ;
                    \node [anchor = west] at (\xa,\ya) {\footnotesize $ F^{j}_{i,i+1}  $} ;
                    \fill  (\xb, \yb) circle[radius=0.01] node [anchor = west] {\footnotesize $ F^{j+1}_{i,i+1}  $} ;
                    \fill  (\xc, \yc) circle[radius=0.01] node [anchor = west] {\footnotesize $ F^{j+2}_{i,i+1} $} ;
                    \fill  (\xd, \yd) circle[radius=0.01] node [anchor = west] {\footnotesize $ F^{j+3}_{i,i+1} $} ;
                    
                    \draw[arrow line,thin ] (\xb,\yb)--( {\xa + (\xc-\xa)*((\yb-\ya)/(\yc-\ya))} , \yb );
                    \draw[arrow line, thin] (\xb,\yb)--( \xb , {\ya+ (\yc-\ya)*( (\xb-\xa) /(\xc-\xa))} );
                    \draw[dotted] (\xa,\ya) -- (\xc, \yc);
                    
                    \draw[dotted] (\xa,\ya) -- (\xd, { \ya + (\yc-\ya)*( (\xd-\xa)/(\xb-\xa) ) } );
                    
                    \draw[arrow linem, thin,red] ( {\xa + (\xc-\xa)*((\yb-\ya)/(\yc-\ya))} , \yb )--( {\xa + (\xb-\xa)*((\yb-\ya)/(\yc-\ya))} , \yb );
                    \draw[arrow linem, thin,red] ( \xb , {\ya+ (\yc-\ya)*( (\xb-\xa) /(\xc-\xa))} )--( \xb , {\ya+ (\yc-\ya)*( (\xb-\xa) /(\xb-\xa))} );
                    
                    \draw[arrow linem, thin,red] ( \xc , \yc )--( {\xa + (\xb-\xa)*((\yc-\ya)/(\yc-\ya))} , \yc );
                    \draw[arrow linem, thin,red]  ( \xc , \yc )--( \xc , {\ya+ (\yc-\ya)*( (\xc-\xa) /(\xb-\xa))} );
                    
                    \draw[arrow line, thin,blue] ( {\xa + (\xb-\xa)*((\yb-\ya)/(\yc-\ya))} , \yb )  -- ( {\xa + (\xd-\xa)*((\yb-\ya)/(\yd-\ya))} , \yb ) ;
                    \draw[arrow line, thin,blue] ( \xb , {\ya+ (\yc-\ya)*( (\xb-\xa) /(\xb-\xa))} ) -- ( \xb , {\ya+ (\yd-\ya)*( (\xb-\xa) /(\xd-\xa))} );
                    
                    \draw[arrow line, thin,blue] ( {\xa + (\xb-\xa)*((\yc-\ya)/(\yc-\ya))} , \yc )  -- ( {\xa + (\xd-\xa)*((\yc-\ya)/(\yd-\ya))} , \yc );
                    \draw[arrow line, thin,blue] ( \xc , {\ya+ (\yc-\ya)*( (\xc-\xa) /(\xb-\xa))} ) -- ( \xc , {\ya+ (\yd-\ya)*( (\xc-\xa) /(\xd-\xa))} );

                    \draw[blue] (\xa,\ya) -- (\xd, \yd) ;
                    
                    \fill  ( {\xa + (\xc-\xa)*((\yb-\ya)/(\yc-\ya))} , \yb ) circle[radius=0.008]  ;
                    \fill  ( \xb , {\ya+ (\yc-\ya)*( (\xb-\xa) /(\xc-\xa))} ) circle[radius=0.008]  ;
                    
                    \fill [red] ( {\xa + (\xb-\xa)*((\yb-\ya)/(\yc-\ya))} , \yb ) circle[radius=0.008]  ;
                    \fill [red] ( \xb , {\ya+ (\yc-\ya)*( (\xb-\xa) /(\xb-\xa))} ) circle[radius=0.008]  ;
                    \fill [red] ( {\xa + (\xb-\xa)*((\yc-\ya)/(\yc-\ya))} , \yc ) circle[radius=0.008]  ;
                    \fill [red] ( \xc , {\ya+ (\yc-\ya)*( (\xc-\xa) /(\xb-\xa))} ) circle[radius=0.008]  ;

                    \fill [blue] ( {\xa + (\xd-\xa)*((\yb-\ya)/(\yd-\ya))} , \yb ) circle[radius=0.008]  ;
                    \fill [blue] ( \xb , {\ya+ (\yd-\ya)*( (\xb-\xa) /(\xd-\xa))} ) circle[radius=0.008]  ;
                    \fill [blue] ( {\xa + (\xd-\xa)*((\yc-\ya)/(\yd-\ya))} , \yc ) circle[radius=0.008]  ;
                    \fill [blue] ( \xc , {\ya+ (\yd-\ya)*( (\xc-\xa) /(\xd-\xa))} ) circle[radius=0.008]  ;
                    
                    \fill  (\xc, \yc) circle[radius=0.01] ;
                    
                \end{tikzpicture}
                \caption{$ \frac{S_{i+1}(2)}{S_i(1)} < \frac{S_{i+1}(3)}{S_i(3)} $}
                \label{figure:lem-removal-b}
            \end{subfigure}

            \caption{Operations of shrinking $\Ss_i(f)$}
            \label{figure:lem-removal}
        \end{figure}
        
        \begin{proof}[Proof of Lemma \ref{lem:removal}]
            
            Define $S_i (p) \equiv  \sum_{s=1}^{p} f^{j+s}_i$ and $S_{i+1} (p) \equiv  \sum_{s=1}^{p} f^{j+s}_{i+1}$.  
            First, consider the case where $k = j+2 $. Assume that $\frac{S_{i+1}(2)}{S_{i}(2) } > \frac{S_{i+1}(1) }{S_i (1)} $; otherwise, $F^j_{i,i+1}$, $F^{j+1}_{i,i+1}$ and $F^{j+2}_{i,i+1}$ already aligned, making this case trivial.  
            Our goal is to construct a series of MLRP experiments, obtained through combinations of reverse signal replacements, that ultimately remove the region below the segment connecting $F^{j}_{i,i+1}$ and $F^{j+2}_{i,i+1}$, as shown by the black arrows in Figure \ref{figure:lem-removal}. 
            
            Define $l_t \equiv (1-t) \cdot \frac{S_{i+1}(1)}{S_i (1)} + t \cdot \frac{S_{i+1}(2)}{S_i (2) }$. Note that the slopes of $\overline{F^{j}_{i,i+1} F^{j+1}_{i,i+1}}$ and $\overline{F^{j}_{i,i+1} F^{j+2}_{i,i+2}}$ correspond to $l_0$ and $l_1$, respectively. Define $\phi: [0,1] \rightarrow \mathcal{E}_{m+1}$ as follows: 
            \begin{equation*}
                \phi(t) \equiv \left [ \cdots ,f^{j}, 
                \begin{array}{c}
                    (1-a_t) f_{\le i}^{j+1}\\
                    + f_{\ge i+1 }^{j+1}
                \end{array} , 
                \begin{array}{c}
                    a_t  f_{\le i}^{j+1} \\
                    +  b_t  f^{j+2}_{\ge i+1}
                \end{array} ,
                \begin{array}{c}
                    f^{j+2}_{\le i} +  \\
                    \left (  1 - b_t \right ) f^{j+2}_{\ge i+1}
                \end{array} 
                ,  f^{j+3}, \cdots  \right ]
            \end{equation*}
            where $a_t \equiv  1 - \frac{f^{j+1}_{i+1}/ f^{j+1}_i}{l_t}$ and $b_t \equiv \frac{l_t}{f^{j+2}_{i+1}/f^{j+1}_i} - \frac{f^{j+1}_{i+1}}{f^{j+2}_{i+1}}$. Let $\Phi(t)$ denote the cumulative distributions of $\phi(t)$: $ \Phi(t)^l = \sum_{r=1}^{l} \phi(t) ^r  $. Note that $\Phi(t)^{j} = F^{j}$ and $\Phi(t)^{j+3} = F^{j+2}$.
        
            The transformation $\phi (t)$ splits signal $j+1$ in states $i' \le i$, and splits signal $j+2$ in states $i'>i$. 
            Thus, for any $i' \neq i$,  $\Ss_{i'}(\phi(t)) = \Ss_{i'}(f)$. It remains to analyze how $\Ss_i(\phi(t))$ changes with $t $. 
        
            Intuitively speaking, $\phi(t)$ represents a rotation of $\overline{F^j_{i,i+1} F^{j+1}_{i,i+1}}$ around $F^{j}_{i,i+1}$, with the slope being $l_t$. In the process, it splits the signal $j+1$ into two parts. More formally, observe that $a_0=b_0 =0$, thus, $\phi(0) = [f^1, \cdots, f^{j+1}, \mathbf{0}, f^{j+2}, \cdots, f^m] \simeq_B f$. 

            $\Phi(t)^{j+1}_{i,i+1}$ is the intersection of the line through $F^{j}_{i,i+1} $ with slope $l_t$ and the horizontal line through $F^{j+1}_{i,i+1}$, while $\Phi(t)^{j+2}_{i,i+1}$ is its intersection with the vertical line through $F^{j+1}_{i,i+1}$.
            That is, the likelihood ratios for $\phi(t)^{j+1}_{i,i+1}$ and  $\phi(t)^{j+2}_{i,i+1}$ are equal to $l_t$, while that for $\phi(t)^{j+3}_{i,i+1} $ is greater than or equal to $l_t$. Therefore, $\phi(t) \in \EMLRP_{m+1} $ for all $t \in [0,1]$. Last, when $t =1$, $l_t$ corresponds to the slope of $\overline{F^j_{i,i+1} F^{j+2}_{i,i+1}}$, i.e., 
            $\Ss_i(\phi(1))$ is obtained from $\Ss_i (f) $ by removing the region below $ \overline{F^{j}_{i,i+1} F^{j+2}_{i,i+1} }$. 
        
            We now show that $C(f) \ge C(\phi(1)) $. Observe that $\dot{a}_t = \frac{f^{j+1}_{i+1}/f^{j+1}_i}{l_t^2 } \cdot \left ( \frac{S_{i+1}(2)}{S_{i}(2) } - \frac{S_{i+1}(1) }{S_i(1)} \right ) >0$ and $\dot{b}_t = \frac{ f^{j+1}_i }{ f^{j+2}_{i+1} } \cdot \left ( \frac{S_{i+1}(2)}{S_{i}(2) } - \frac{S_{i+1}(1) }{S_i(1)} \right ) >0$. Then, by decreasing in reverse signal replacement, 
            \begin{align*}
                \frac{d}{dt}C_{m+1}(\phi(t)) = &  
                \sum_{i'=1}^{i} \dot{a}_t \left ( - \frac{\partial C_{m+1}}{\partial h^{j+1}_{i'}}  + \frac{\partial C_{m+1}}{\partial h^{j+2}_{i'}} \right ) f^{j+1}_{i'}  
                + \sum_{i'=i+1}^{n} \dot{b}_t \left ( - \frac{\partial C_{m+1}}{\partial h^{j+3}_{i'}} +  \frac{\partial C_{m+1}}{\partial h^{j+2}_{i'}} \right ) f^{j+2}_{i'}  \\
                = &  \langle \nabla C_{m+1}(\phi(t)), \frac{\dot{a}_t }{1-a_t} \phi(t)^{j+1 \rightarrow j+2 }_{\le i } \rangle +  \langle \nabla C_{m+1}(\phi(t)), \frac{ \dot{b}_t }{1-b_t} \phi(t)^{ j+3 \rightarrow j+2 }_{\ge i+1 }  \rangle \le 0. ~ \footnotemark
            \end{align*}
            \footnotetext{Here, we use $h$ to distinguish from the original experiment $f$.}	
            Therefore, $C(\phi(1)) = C_{m+1}(\phi(0)) + \int_{0}^{1} \frac{d}{dt}C_{m+1}(\phi(t)) dt \le C_{m+1}(\phi(0)) = C(f)$. 
        
                
            Next, we prove the case for $k = j+3$. Using the argument from the $k =j+2$ case, we can remove the region below $\overline{F^j_{i,i+1}F^{j+2}_{i,i+1}}$ and let $f' \in \EMLRP_{m+1} $ denote the corresponding experiment.	
            We consider two subcases: (i) $\frac{S_{i+1}(2)}{S_i (1)} \ge  \frac{S_{i+1}(3)}{S_i (3)} $; (ii) $ \frac{S_{i+1}(2)}{S_i (1)} < \frac{S_{i+1} (3)}{S_i (3)} $. 
            
            In the first case, define $\tilde{l}_t \equiv (1-t) \cdot \frac{S_{i+1}(2)}{S_i(2)} + t \cdot \frac{S_{i+1 } (3) }{S_i (3)}$. Also define $\phi:[0,1] \rightarrow \EMLRP_{m+2}$:{\small
            \begin{equation}
                \phi^k(t) \equiv 
                \begin{cases}
                    f^k ,																		&	\text{if } k \le j, 		\\
                    (1-a^1_t) f^{j+1}_{\le i} + f^{j+1}_{\ge i+1}, 								&	\text{if } k = j+1, 	\\
                    a^1_t f^{j+1}_{\le i} + b^1_t f^{j+2}_{\ge i+1}	, 							&	\text{if } k = j+2, 	\\
                    (1-a^2_t) f^{j+2}_{\le i} + (1-b^1_t) f^{j+2}_{\ge i+1}, 					& 	\text{if } k = j+3, 	\\
                    a^2_t f^{j+2}_{\le i} + b^2_t f^{j+3}_{\ge i+1}, 							&	\text{if } k = j+4, 	\\
                    f^{j+3}_{\le i} + (1-b^2_t) f^{j+3}_{\ge i+1}, 								& 	\text{if } k = j+5, 	\\
                    f^{k-2}	,																	&	\text{if } k = j+6,
                \end{cases}
                \label{eq:psi-k-def-1}
            \end{equation}}
            where $a^p_t  \equiv  1- \frac{1}{f^{j+p}_i} \cdot \left ( \frac{S_{i+1}(p)}{\tilde{l}_t} - S_i(p-1) \right ) $ and $ b^p_t \equiv \frac{1}{f^{j+p+1}_{i+1}} \cdot \left ( \tilde{l}_t \cdot S_i (p) - S_{i+1}(p) \right )$ for all $p \in \{1,2\}$. 
            This series of experiments is illustrated by the red arrows in Figure \ref{figure:lem-removal-a}. $\phi(t)$ is constructed by rotating the black dashed line through $F^j_{i,i+1}$ with slope $\tilde{l}_t$, the MLRP holds. 
        
            Note that $\phi(0)$ is Blackwell equivalent to $f'$ and $\phi(1)$ corresponds to the experiment after removing the region below $\overline{F^{j}_{i,i+1}F^{j+3}_{i,i+1}}$, i.e., $\phi(1) = \tilde{f}$. Using the MLRP of $f$ and $\frac{S_{i+1}(2)}{S_i(1)} \ge \frac{S_{i+1}(3)}{S_i(3)}$, $1 \ge a^p_t \ge 0 $ and $1 \ge b^p_t \ge 0$. Additionally, it can be shown that both $ \dot{a}_t $ and $\dot{b}_t$ are nonnegative. 
            Then, we apply the same argument as in $k = j+2$ to establish that $C(f) \ge C(f') \ge C(\phi(1)) = C(\tilde{f})$. 
            
            
            Now we consider the case where $ \frac{S_{i+1}(2)}{S_i (1)} < \frac{S_{i+1} (3) }{S_i (3)}$. The operation here takes two steps. We first keep rotating the black dashed line in Figure \ref{figure:lem-removal-b} until it reaches the point where the slope is equal to $\frac{S_{i+1}(2)}{S_i(1)}$, i.e., the intersection of the two red arrows. Formally, define $\tilde{l}_t \equiv (1-t) \cdot \frac{S_{i+1}(2)}{S_i(2)} + t \cdot \frac{S_{i+1}(2)}{S_i(1)}$ and $\phi: [0,1] \rightarrow \EMLRP_{m+2}$ using \eqref{eq:psi-k-def-1} and the definitions of $a^p_t$ and $b^p_t$ with modified $\tilde{l}$.  
            With the same argument as in the previous cases, it can be shown that $C(f) \ge C(\phi(1)) $. 
            
            Next, we keep rotating the black dashed line as indicated by the blue arrows in Figure \ref{figure:lem-removal-b}. Formally, define $\hat{l}_t \equiv (1-t) \cdot \frac{S_{i+1}(2)}{S_i(1)} + t \cdot \frac{S_{i+1}(3)}{S_i(3)}$. Then, consider a function $\hat{\phi}: [0,1] \rightarrow \EMLRP_{m+2}$ as follows: {\small
            \begin{equation}
                {\hat{\phi}}^k(t) \equiv 
                \begin{cases}
                    f^k ,																			&	\text{if } k \le j, 		\\
                    (1-{a}^1_t) f^{j+1}_{\le i} 			+ f^{j+1}_{\ge i+1}, 				&	\text{if } k = j+1, 	\\
                    ({a}^1_t - {a}^2_t) f^{j+1}_{\le i} + f^{j+2}_{\ge i+1}	, 				&	\text{if } k = j+2, 	\\
                     {a}^2_t f^{j+1}_{\le i} 				+ {b}^2_t f^{j+3}_{\ge i+1}, 	& 	\text{if } k = j+3, 	\\
                    f^{j+2}_{\le i} 		+ ({b}^3_t - {b}^2_t) f^{j+3}_{\ge i+1}, 		&	\text{if } k = j+4, 	\\
                    f^{j+3}_{\le i} 		+ (1-{b}^3_t) f^{j+3}_{\ge i+1}, 					& 	\text{if } k = j+5, 	\\
                    f^{k-2},																		&	\text{if } k = j+6,
                \end{cases}
                \label{eq:psi-k-def-2}
            \end{equation}}
            where $a^p_t \equiv  1- \frac{S_{i+1}(p) / S_i(1)}{\hat{l}_t}$ and $b^p_t \equiv \frac{\hat{l}_t \cdot S_i(p-1) - S_{i+1}(2)}{f^{j+3}_{i+1}}$. Since $\hat{\phi}(t)$ is constructed using the line through $F^j_{i,i+1}$ with slope $\hat{l}_t$, the MLRP holds. Additionally, $\hat{\phi}(0)$ is Blackwell equivalent to $\phi(1)$ and $\hat{\phi}(1)$ corresponds to $\tilde{f}$---the experiment such that $\Ss_i(\tilde{f})$ is obtained from $\Ss_i(f)$ by removing the region below the line segment connecting $\overline{F^{j}_{i,i+1} F^{j+3}_{i,i+1}}$. 
            
            Note that $1 > a^1_t > a^2_t \ge 0  $, $ 1 > b^3_t > b^2_t \ge 0 $ and $\dot{a}^p_t, \dot{b}^p_t >0$. 
            Again, by decreasing in reverse signal replacement, we have
            \begin{align*}
                \frac{d}{dt}C_{m+2}(\hat{\phi}(t)) = 
                & \sum_{i'=1}^{i} 	\dot{a}^1_t \left ( - \frac{\partial C_{m+2}}{\partial h^{j+1}_{i'}}  + \frac{\partial C_{m+2}}{\partial h^{j+2}_{i'}} \right ) f^{j+1}_{i'}  +\sum_{i'=1}^{i} 	\dot{a}^2_t \left ( - \frac{\partial C_{m+2}}{\partial h^{j+2}_{i'}}  + \frac{\partial C_{m+2}}{\partial h^{j+3}_{i'}} \right ) f^{j+1}_{i'}	\\ 
                &+ \sum_{i'=i+1}^{n} \dot{b}^2_t \left ( - \frac{\partial C_{m+2}}{\partial h^{j+4}_{i'}} +  \frac{\partial C_{m+2}}{\partial h^{j+3}_{i'}} \right ) f^{j+3}_{i'}  + \sum_{i'=i+1}^{n} \dot{b}^3_t \left ( - \frac{\partial C_{m+2}}{\partial h^{j+5}_{i'}} +  \frac{\partial C_{m+2}}{\partial h^{j+4}_{i'}} \right ) f^{j+3}_{i'}  \\
                = &  \langle \nabla C_{m+2}, \frac{\dot{a}^1_t }{1-a^1_t} \hat{\phi}(t)^{j+1 \rightarrow j+2 }_{\le i } \rangle +\langle \nabla C_{m+2}, \frac{\dot{a}^2_t }{a^1_t-a^2_t} \hat{\phi}(t)^{j+2 \rightarrow j+3 }_{\le i } \rangle \\
                  & + \langle \nabla C_{m+2}, \frac{ \dot{b}^2_t }{b^3_t-b^2_t} \hat{\phi}(t)^{ j+4 \rightarrow j+3 }_{\ge i+1 }  \rangle
                    + \langle \nabla C_{m+2}, \frac{ \dot{b}^3_t }{1 - b^3_t } \hat{\phi}(t)^{ j+5 \rightarrow j+4 }_{\ge i+1 }  \rangle \le 0.
            \end{align*}
            Using FTC, we have $C(\tilde{f}) = C(\hat{\phi}(1)) \le C(\hat{\phi}(0)) = C(\phi(1)) \le C(f') \le C(f)$. 
        
        
            Finally, we argue that these arguments apply to all $k > j+3$. As illustrated, the key is to construct a series of MLRP experiments by rotating the line through $F^{j}_{i,i+1}$ until it reaches a point like $F_{i, i+1}^{j+1}$ (as in case (i) above), or $(F^{j+1}_i, F^{j+2}_{i+1} )$ (as in case (ii) above). Whenever such a point is encountered, we have shown ways to further split that point into two parts, thus allowing us to continue the rotation until it reaches the point $F^k_{i,i+1}$. At the end of this process, we obtain a new experiment $\tilde{f} \in \EMLRP$ such that $\Ss_i(\tilde{f})$ is obtained from $\Ss_i(f)$ by removing the region below the line segment connecting $F^{j}_{i,i+1}$ and $F^k_{i,i+1}$.	
        \end{proof}
        
                
        \begin{proof}[Proof of Theorem \ref{thm:LM-general}]
        	
        	Consider two experiments $f, \ g \in \EMLRP$ with $f \succeq_{L} g$, that is, $\Ss_i(g) \subseteq \Ss_i(f)$ for all $1 \le i \le n-1$. 
        	Our goal is to show that $C(f) \ge C(g)$. 
        
        	We begin by extending the line segment $\overline{G^0_{1,2}G^1_{1,2}}$ in $\Ss_1(g)$ until it intersects the boundary of $\Ss_1(f)$. We can construct an experiment $f'$ that is Blackwell equivalent to $f$ and includes this intersection. Specifically, if the intersection lies on $\overline{F^{k-1}_{1,2}F^k_{1,2}}$, we can split the $k$-th signal of $f$ so that the new $k$-th cumulative point $(F')^k_{1,2}$ matches the intersection. Then, by Lemma \ref{lem:removal}, we can remove the region below $ \overline{G^0_{1,2}G^1_{1,2}}$ from $\Ss_1(f)$, and reduce the cost at the same time. 
        	
        	This procedure can be applied iteratively along each segment $\overline{G^j_{1,2}G^{j+1}_{1,2}}$ for $1 \le j \le m-1$. The resulting experiment, $\hat{f}$, has a strictly lower cost and satisfies $\Ss_1(\hat{f}) = \Ss_1(g)$. Repeating this for $\Ss_2, \cdots, \Ss_n$, we can obtain an experiment $\overline{f}$ that has a lower cost than $f$ and has the same $\Ss_i$ as $g$ for all $1 \le i \le n-1$. Then by Lemma \ref{lem:Lehmann-geometry}, we have $\overline{f} \simeq_L g$. 
        
            Finally, we show that $\overline{f}$ and $g$ can be made identical by splitting some of their signals. Suppose for some signal $j$ and $s$ such that $\overline{F}^{j} = G^{s}$ (which always exists as one start with $j = s = 0$). Suppose that for some $i$, $\overline{f}_{i}^{j+1} > g_{i}^{s+1}$ (if no such $i$ exists, we move on to $j+1$ and $s+1$). Because $\Ss_{i}(\overline{f}) = \Ss_{i}(g)$, it must hold that $\frac{\overline{f}_{i+1}^{j+1} }{\overline{f}_{i}^{j+1} } = \frac{g_{i+1}^{s+1} }{g_{i}^{s+1}}$, which further implies $\overline{f}_{i+1}^{j+1} \geq g_{i+1}^{s+1}$, where equality holds only if both are zero. By applying the same reasoning inductively, we conclude the same holds for all $i$. Thus, a splitting of $\overline{f}^{j+1}$ would yield a signal $\hat{j}$ such that $\overline{F}^{\hat{j}} = G^{s+1}$. Repeating this argument inductively for the remaining signals establishes that $\overline{f}$ and $g$ can be made identical. Therefore, splitting invariance implies $C(g) = C(\overline{f}) \leq C(f)$.
        \end{proof}

        \section{Proof for Section \ref{sec:application}}
               
        \begin{proof}[Proof of Proposition \ref{prop:likelihood-separable}]
        	Sufficiency of (i) is proved in the main text. For necessity, suppose $C$ is likelihood separable. First, fix any $\hat{f} \in [0,1]^{n}$. For any $k \in \N$, consider the following experiments, 
        	\begin{equation*}
        		f = \begin{bmatrix}
        			\hat{f} & 0 & \cdots & 0 & 1- \hat{f} 
        		\end{bmatrix} \in \mathcal{E}_{k + 1},
        	\end{equation*}
        	and 
        	\begin{equation*}
        		g = \begin{bmatrix}
        			\frac{1}{k} \hat{f} & \cdots & \frac{1}{k} \hat{f} & 1 - \hat{f}
        		\end{bmatrix} \in \mathcal{E}_{k+1}.
        	\end{equation*}
        	Observe that
        	\begin{equation*}
        		f  \begin{bmatrix}
        			1/k & \cdots & 1/k & 0 \\
        			\vdots & \ddots & \vdots & 0 \\
        			1/k & \cdots & 1/k & 0 \\
        			0 & \cdots & 0 & 1 \\		
        		\end{bmatrix} = g \quad \text{and} \quad  g   \begin{bmatrix}
        			1 & 0 &\cdots & 0 & 0 \\
        			\vdots &\vdots & \ddots & \vdots & 0 \\
        			1 & 0&\cdots & 0 & 0 \\
        			0 & 0& \cdots & 0 & 1 \\		
        		\end{bmatrix}= f,
        	\end{equation*}
        	that is, $f \succeq_{B} g \succeq_{B} f$. Thus, Blackwell monotonicity implies that $C(f) = C(g)$. Then it holds that $\psi\left(\frac{1}{k} \hat{f} \right) = \frac{1}{k} \psi(\hat{f}).$
        	
        	Next, for any $\ell \in \N$ such that $\ell \hat{f} \in [0,1]^{n}$. Consider the following experiments,
        	\begin{equation*}
        		f = \begin{bmatrix}
        			\ell\hat{f} & \mathbf{1}- \ell\hat{f}
        		\end{bmatrix} \in \mathcal{E}_{2},
        	\end{equation*}
        	and
        	\begin{equation*}
        		g = \begin{bmatrix}
        			\hat{f} & \cdots & \hat{f} & \mathbf{1}- \ell\hat{f}
        		\end{bmatrix} \in \mathcal{E}_{\ell+1}.
        	\end{equation*}
        	By the same argument, Blackwell monotonicity implies that $C(f) = C(g)$, thus, $\psi(\ell \hat{f}) = \ell \psi(\hat{f}). $
        	Together it implies that, for all $\hat{f} \in [0,1]^{n}$, for all $z \in \mathbb{Q}$ such that $z \hat{f} \in [0,1]^{n}$, $\psi(z \hat{f}) = z \psi(\hat{f}).$
        	By the density of $\mathbb{Q}$ in $\R$ and the continuity of $\psi(\cdot)$, we have positive homogeneity of $\psi$ over $[0,1]^{n}$. 
        	
        	Next, we show subadditivity, i.e., for any $\hat{f}, \hat{g} \in [0,1]^{n}$ such that $\hat{f} + \hat{g} \in [0,1]^{n}$, then 
        	\begin{equation*}
        		\psi(\hat{f} + \hat{g}) \leq \psi(\hat{f}) + \psi(\hat{g}).
        	\end{equation*}
        	Consider the following experiments, 
        	\begin{equation*}
        		f = \begin{bmatrix}
        			\hat{f} & \hat{g} & 1 - \hat{f} - \hat{g}
        		\end{bmatrix} \in \mathcal{E}_{3},
        	\end{equation*}
        	and 
        	\begin{equation*}
        		g = \begin{bmatrix}
        			\hat{f} + \hat{g}& 1 - \hat{f} - \hat{g}
        		\end{bmatrix} \in \mathcal{E}_{2}.
        	\end{equation*}
        	As $g$ is obtained by merging the first two signals in $f$, we have $f \succeq_B g$. Thus, Blackwell monotonicity implies that $C(f) \geq C(g)$, and thus sublinearity of $\psi$ holds. 
            
            For (ii), take any $f \in \EMLRP_m$ and $j$, notice $f^{j}$ and $f^{j+1}$ satisfy the condition for $h$ and $h'$, thus decreasing in reverse signal replacement holds. 
        \end{proof}
        
        \begin{proof}[Proof of Proposition \ref{prop:likelihood-separable-p-norm}]
        	\textbf{(i)} Consider the function $\psi(h) = \sqrt{h^{\intercal}Ah}$. It is sublinear when $A$ is symmetric and positive semi-definite. Consider the following $A$: 
        	\begin{equation*}
        		A = \begin{bmatrix}
        			10 & 10 & 10\\
        			10 & 20 & 10\\
        			10 & 10 & 20
        		\end{bmatrix}.
        	\end{equation*}
        	Its eigenvalues are approximately $(37.3, 10, 2.7)$, thus it is positive definite. Let 
        	\begin{equation*}
        		f = \begin{bmatrix}
        			.78 & .1 & .1 & .02\\
        			.2 & .3 & .4 & .1\\
        			.05 & .1 & .3 & .55 
        		\end{bmatrix} \in \EMLRP_{4}.
        	\end{equation*}
        	Then $\langle \nabla C(f), f^{2 \rightarrow 3}_{\le 1} \rangle = \left( - \frac{\partial \psi(f^{2})}{\partial f_{1}^{2}} + \frac{\partial \psi(f^{3})}{\partial f_{1}^{3}}\right)f^{2}_{1} > 0.0008 > 0$, i.e., the likelihood separable cost function with this $\psi$ does not satisfy decreasing in reverse signal replacement, and thus is not Lehmann monotone.
        	
        	
            \noindent\textbf{(ii)} 
        	Consider $\psi(h) = \left(\sum_{i=1}^{n}w_{i}h_{i}^{p}\right)^{1/p}$, notice that we have
        	\begin{equation*}
        		\dfrac{\partial \psi(h)}{\partial h_{i}} = \frac{w_{i}h_{i}^{p-1}}{\left(\sum_{i=1}^{n}w_{i}h_{i}^{p}\right)^{(p-1)/p}}.
        	\end{equation*}
        	Then for any $h \leq_{MLRP} h'$ and for any $l \in \{1, \ldots, n\}$, we have
        	\begin{align*}
        		&\sum_{i=1}^{l}\left(\dfrac{\partial \psi(h')}{\partial h'_{i}} - \dfrac{\partial \psi(h)}{\partial h_{i}}\right)h_{i} = \sum_{i=1}^{l}\left(\frac{w_{i}(h'_{i})^{p-1}}{\left(\sum_{i=1}^{n}w_{i}(h'_{i})^{p}\right)^{(p-1)/p}} - \frac{w_{i}h_{i}^{p-1}}{\left(\sum_{i=1}^{n} w_{i} h_{i}^{p} \right)^{(p-1)/p}}\right)h_{i}\\
        		& = \frac{\sum_{i=1}^{l}w_{i}(h'_{i})^{p-1}h_{i}}{\left(\sum_{i=1}^{n}w_{i}(h'_{i})^{p}\right)^{(p-1)/p}} - \frac{\sum_{i=1}^{l}w_{i}h_{i}^{p}}{\left(\sum_{i=1}^{n}w_{i} h_{i}^{p}\right)^{(p-1)/p}}.
        	\end{align*}
        	To show that this is negative, it is equivalent to showing that
        	\begin{align*}
        		\left(\sum_{i=1}^{l}w_{i}(h'_{i})^{p-1}h_{i}\right)^{p}\left(\sum_{i=1}^{n}w_{i} h_{i}^{p}\right)^{p-1} \leq \left(\sum_{i=1}^{l}w_{i}h_{i}^{p} \right)^{p}\left(\sum_{i=1}^{n}w_{i}(h'_{i})^{p}\right)^{p-1}.
        	\end{align*}
        	Then by H\"older's inequality with weights (Theorem 4.7.2. in \cite{casella2002statistical}), 
        	we have
        	\begin{equation*}
        		\left(\sum_{i=1}^{l}w_{i}(h'_{i})^{p-1}h_{i}\right)^{p} \leq \left(\sum_{i=1}^{l}w_{i}(h'_{i})^{p}\right)^{p-1}\left(\sum_{i=1}^{l}w_{i}h_{i}^{p}\right).
        	\end{equation*}
        	Thus, it suffices to show that
        	\begin{equation*}
        		\left(\sum_{i=1}^{l}w_{i}(h'_{i})^{p}\right)\cdot \left(\sum_{i=1}^{n}w_{i} h_{i}^{p}\right) \leq 
        		\left(\sum_{i=1}^{l}w_{i}h_{i}^{p} \right)\cdot \left(\sum_{i=1}^{n}w_{i}(h'_{i})^{p}\right).
        	\end{equation*}
        	By subtracting the common term $\left(\sum_{i=1}^{l}w_{i}(h'_{i})^{p}\right)\left(\sum_{i=1}^{l}w_{i}(h_{i})^{p}\right)$, we can rewrite the above inequality as
        	\begin{equation*}
        		\left(\sum_{i=1}^{l}w_{i}(h'_{i})^{p}\right)\left(\sum_{i=l+1}^{n}w_{i} h_{i}^{p}\right) \leq \left(\sum_{i=1}^{l}w_{i}h_{i}^{p} \right) \left(\sum_{i=l+1}^{n}w_{i}(h'_{i})^{p}\right).
        	\end{equation*}
        	This inequality holds because $h \leq_{MLRP} h'$ implies that for any $k \leq l$ and $k' \geq l+1$, we have
        	\begin{equation*}
        		h'_{k}h_{k'} \leq h_{k}h'_{k'}.
        	\end{equation*}
        	The other condition can be shown similarly.	
        \end{proof}
  
        \begin{proof}[Proof of Proposition \ref{prop:posterior-separable-cost-Lehmann-monotone}]
        	Fix any full-support prior $\mu$ and notice that if $f \in \EMLRP$ then $q^{j} \leq_{FOSD} q^{j+1}$ for all $j$ \citep{Milgrom1981good}. Thus, it suffices to show that the condition in Proposition \ref{prop:posterior-separable-cost-Lehmann-monotone} implies decreasing in reverse signal replacement. 
        	
        	For a posterior separable cost $C_{\mu}$ given some $H$, we have
        	\begin{equation*}
        		\dfrac{\partial C_{\mu}(f)}{\partial f_{i}^{j}} = -\mu_{i}H(q^{j}) - \tau^{j}\sum_{i'} \dfrac{\partial H(q^{j})}{\partial q_{i'}^{j}} \cdot \dfrac{\partial q_{i'}^{j}}{\partial f_{i}^{j}}.
        	\end{equation*}
        	Then we can further derive: 
        	\begin{align*}
        		&\dfrac{\partial q_{i}^{j}}{\partial f_{i}^{j}} = \frac{\mu_{i}}{\tau^{j}} - \mu_{i}\frac{\mu_{i}f_{i}^{j}}{(\tau^{j})^{2}}, \quad \dfrac{\partial q_{i}^{j}}{\partial f_{s}^{j}} = - \mu_{s}\frac{\mu_{i}f_{i}^{j}}{(\tau^{j})^{2}},
        	\end{align*}
        	and that 
        	\begin{align*}
        		\tau^{j}f_{i}^{j}\sum_{i'} \dfrac{\partial H(q^{j})}{\partial q_{i'}^{j}} \cdot \dfrac{\partial q_{i'}^{j}}{\partial f_{i}^{j}} & =  \tau^{j}f_{i}^{j} \left(\dfrac{\partial H(q^{j})}{\partial q_{i}^{j}}\left(\frac{\mu_{i}}{\tau^{j}} - \mu_{i}\frac{\mu_{i}f_{i}^{j}}{(\tau^{j})^{2}} \right) + \sum_{i' \neq i} \dfrac{\partial H(q^{j})}{\partial q_{i'}^{j}}\left(- \mu_{i}\frac{\mu_{i'}f_{i'}^{j}}{(\tau^{j})^{2}} \right)\right)\\
        		& = \mu_{i}f_{i}^{j}\dfrac{\partial H(q^{j})}{\partial q_{i}^{j}} - \mu_{i}f_{i}^{j}\sum_{i'} \dfrac{\partial H(q^{j})}{\partial q_{i'}^{j}}q_{i'}^{j}.
        	\end{align*}
            Thus, for any $j$ and $l$, substituting in the above expressions yields
            \begin{align*}
            \sum_{i=1}^{l}\left(\dfrac{\partial C_{\mu}(f)}{\partial f_{i}^{j+1}} - \dfrac{\partial C_{\mu}(f)}{\partial f_{i}^{j}} \right)f_{i}^{j} & = \sum_{i=1}^{l} \left(H(q^{j}) - H(q^{j+1})- \sum_{i'} \dfrac{\partial H(q^{j+1})}{\partial q_{i'}^{j+1}}(q_{i'}^{j} - q_{i'}^{j+1}) \right)\mu_{i}f_{i}^{j} \\
            &  + \tau^{j}\sum_{i=1}^{l} \left(\dfrac{\partial H(q^{j})}{\partial q_{i}^{j}} - \dfrac{\partial H(q^{j+1})}{\partial q_{i}^{j+1}} \right)q^{j}_{i}.
            \end{align*}
            
            Note that the first term is negative from concavity of $H$ and the second term is negative from the condition in the proposition. Thus, \eqref{ineq:Lehmann-finite-1} holds for any $j$ and $l$. A similar argument shows \eqref{ineq:Lehmann-finite-2} also holds for all $j$ and $l$.
        \end{proof}
        
        \begin{proof}[Proof of Proposition \ref{prop:entropy-cost-Lehmann-monotone}]
        	For the entropy cost, we have 
        	\begin{equation*}
        		H(q^{j}) = -\sum_{i=1}^{n}q_{i}^{j}\log q_{i}^{j}, \text{ and } \dfrac{\partial H(q^{j})}{\partial q_{i}^{j}} = -\log q_{i}^{j} - 1. 
        	\end{equation*}
        	Then for any $l$ and $q^{j} \leq_{FOSD} q^{j+1}$, we have 
        	\begin{align*}
        		&\sum_{i=1}^{l}q^{j}_{i}\left(\dfrac{\partial H(q^{j})}{\partial q_{i}^{j}} - \dfrac{\partial H(q^{j+1})}{\partial q_{i}^{j+1}} \right)  = \sum_{i=1}^{l}q^{j}_{i}\left(\log q_{i}^{j+1} -\log q_{i}^{j} \right)\\
        		& = \sum_{i=1}^{l}q^{j}_{i}\left(\log q_{i}^{j+1} -\log q_{i}^{j} \right) + \left(\sum_{i=l+1}^{n}q^{j}_{i}\right)\left(\log \sum_{i=l+1}^{n}q^{j+1}_{i} -\log \sum_{i=l+1}^{n}q^{j}_{i}\right) \\
        		&\quad  + \left(\sum_{i=l+1}^{n}q^{j}_{i}\right)\left( \log \sum_{i=l+1}^{n}q^{j}_{i} - \log \sum_{i=l+1}^{n}q^{j+1}_{i} \right).
        	\end{align*}
            Observe that the sum of the first and second term is negative because
        	\begin{equation*}
        		\sum_{i=1}^{l} q^j_i \log \frac{q^{j+1}_i}{q^j_i} + \left ( \sum_{i=l+1}^n q^j_i \right ) \log \frac{\sum_{i= l+1}^n q^j_{i+1}}{\sum_{i=l+1}^n q^j_i} \le \log  \sum_{i=1}^{n} q^{j+1}_i = 0 
        	\end{equation*}
        	by $\sum_{i=1}^{n} q^j_i = \sum_{i=1}^{n} q^{j+1}_i = 1$ and the concavity of log function. 
        	The third term is also negative from $\sum_{i=l+1}^{n}q^{j+1}_{i} \geq \sum_{i=l+1}^{n}q^{j}_{i}$ as $q^{j} \leq_{FOSD} q^{j+1}$. The other condition can be shown similarly.
        \end{proof}

        \section{Blackwell Monotonicity with Limited Signals \label{sec:Blackwell-qcx} }
        
        \subsection{Blackwell Monotonicity under Quasiconvexity}
        
        In this section, we explore Blackwell monotonicity in the case where experiments are limited to a maximum number of signals. Specifically, we consider a cost function $C:\mathcal{E}_m \rightarrow \mathbb{R}_+$ for a given $m$, where experiments with fewer than $m$ signals are embedded in $\mathcal{E}_m$ by adding zero columns.  
        
        Under this restriction, split invariance is no longer relevant, as splitting an experiment would place it outside of the domain. 
        However, permutation invariance and decreasing in signal replacement remain necessary conditions for Blackwell monotonicity. 
        
        The key step in establishing sufficiency for Blackwell monotonicity in Theorem \ref{thm:binary-blackwell-monotone} and \ref{thm:BM-general} is to construct a decreasing path connecting any $f \succeq_{B} g$. 
        For $\mathcal{E}_2$, we were able to construct such a path within $\mathcal{E}_2$ itself, whereas for $\mathcal{E}$, we relied on split invariance and introduced additional signals to construct the path. 
        
        However, when the number of signals is limited to some $m > 2$, such a path within the space $\mathcal{E}_{m}$ does not always exist, as shown by the following proposition.
        
        \begin{proposition} \label{prop:no_path_in_3dim}
        	Suppose that $n=m=3$ and let
        	\begin{equation*}
                I_{3}= \begin{bmatrix}
        			1 & 0 & 0 \\
        			0 & 1 & 0 \\
        			0 & 0 & 1
        		\end{bmatrix}
                \succeq_{B} 
        		g = \begin{bmatrix}
        			4/5 & 1/5 & 0 \\
        			0 & 4/5 & 1/5 \\
        			1/5 & 0 & 4/5
        		\end{bmatrix} \in \mathcal{E}_3.
        	\end{equation*}
        	If $f \in \mathcal{E}_3$ is Blackwell more informative than $g$, then $f$ is a permutation of $I_{3}$ or $g$. 
        \end{proposition}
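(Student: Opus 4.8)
The plan is to reduce the claim to a purely two-dimensional rigidity statement about triangles inscribed in a triangle, and then read off the two families of solutions.

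First I would set up the reduction. Since $f,g\in\mathcal E_3$, the relation $f\succeq_B g$ means there is a $3\times 3$ stochastic matrix $M$ with $fM=g$. Because $\det g = 13/25\neq 0$, the identity $\det f\cdot\det M=\det g$ forces both $f$ and $M$ to be invertible, so no separate degenerate case can arise. Viewing each row of $g$, $f$, $M$ as a point of the two-dimensional probability simplex $\Delta\subseteq\mathbb R^3$, the equation $fM=g$ says that each row $g_i$ is the convex combination, with weights given by row $f_i$, of the three rows $\mu_1,\mu_2,\mu_3$ of $M$. Invertibility of $M$ makes $\mu_1,\mu_2,\mu_3$ affinely independent, i.e. the vertices of a nondegenerate triangle $T=\conv(\mu_1,\mu_2,\mu_3)\subseteq\Delta$ containing $g_1,g_2,g_3$, and then $f$ is exactly the matrix of barycentric coordinates of $g_1,g_2,g_3$ with respect to $T$. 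Conversely $T$ determines $f$. So everything reduces to classifying the triangles $T$ with $\conv(g_1,g_2,g_3)\subseteq T\subseteq\Delta$.

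The heart of the argument is the geometric lemma that the only such $T$ are $\Delta$ itself and $\Delta':=\conv(g_1,g_2,g_3)$. The decisive feature of the given $g$ is that, in barycentric coordinates, $g_1,g_2,g_3$ each lie in the relative interior of a \emph{distinct} edge of $\Delta$ and on no other edge (e.g. $g_1=(4/5,1/5,0)$ lies strictly inside the edge $\{x_3=0\}$). I would argue as follows. Each $g_i$ must lie on $\partial T$, since a point of $\partial\Delta$ cannot be interior to $T\subseteq\Delta$. For a fixed $g_i$, either it is a vertex of $T$, or it lies in the relative interior of an edge of $T$; in the latter case the line carrying that edge must coincide with the edge of $\Delta$ through $g_i$, because any other line through $g_i$ crosses $\partial\Delta$ transversally at $g_i$ and hence leaves $\Delta$ on one side, pushing part of the $T$-edge outside $\Delta$. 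A short case analysis on the number of the $g_i$ that are vertices of $T$ then closes the lemma: if none are vertices, the three edges of $T$ lie along the three distinct edges of $\Delta$, forcing $T=\Delta$; if all three are vertices, $T=\Delta'$; and the cases with exactly one or two vertices are impossible, since each non-vertex $g_i$ forces an edge of $T$ to lie along the $\Delta$-edge through $g_i$, and these forced edges cannot accommodate the remaining $g_j$ as vertices of $T$ because those $g_j$ do not lie on the edges in question.

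Finally I would translate the two triangles back into matrices. If $T=\Delta$ then $\{\mu_1,\mu_2,\mu_3\}=\{e_1,e_2,e_3\}$, so $M$ is a permutation matrix $P$ and $f=gP^{-1}$ is a relabeling of the columns of $g$, i.e. a permutation of $g$. If $T=\Delta'$ then $\{\mu_1,\mu_2,\mu_3\}=\{g_1,g_2,g_3\}$, so $M=Pg$ for a permutation matrix $P$, whence $fPg=g$ and, using invertibility of $g$, $f=P^{-1}$ is a permutation of $I_3$. This yields exactly the two asserted families. The main obstacle is the geometric rigidity lemma: one must rule out the continuum of enclosing triangles that naively seems available ``between'' $\Delta'$ and $\Delta$, and it is the supporting-line/boundary argument above—together with the fact that the $g_i$ sit strictly inside three distinct edges—that forces the set of admissible $T$ to collapse to just the two extreme triangles.
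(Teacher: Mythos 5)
Your proof is correct. A point of context first: the paper states Proposition~\ref{prop:no_path_in_3dim} in Appendix~\ref{sec:Blackwell-qcx} but never supplies a proof for it (it is used only to motivate the quasiconvexity assumption in Theorem~\ref{thm:blackwell-monotone-characterization}), so there is no paper argument to compare yours against; your write-up stands as a self-contained proof. The reduction is sound: $\det g = 13/25 \neq 0$ forces $f$ and $M$ to be invertible, the rows of $M$ are affinely independent points of the simplex $\Delta$, and $fM=g$ says precisely that each row $g_i$ lies in the triangle $T=\conv(\mu_1,\mu_2,\mu_3)$ with barycentric coordinates given by the $i$-th row of $f$; classifying the triangles $\Delta'=\conv(g_1,g_2,g_3)\subseteq T\subseteq\Delta$ is then exactly the right statement. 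The rigidity lemma is also correct, and its engine is the right one: each $g_i$ sits in the relative interior of a distinct coordinate edge of $\Delta$, so it must lie on $\partial T$, and any $T$-edge containing $g_i$ in its relative interior must lie on the coordinate line through $g_i$ (otherwise the edge crosses into $\{x_k<0\}$). The only step you leave compressed is the exclusion of the one- and two-vertex cases; the clean way to finish it is to note that a forced edge has both endpoints (vertices of $T$) on the relevant coordinate line, while the $g_j$'s that are vertices of $T$ lie on none of the forced lines, so there are too few remaining vertices to serve as endpoints (in the one-vertex case the two forced edges would both have to be the edge joining the two non-$g$ vertices, collapsing them to $e_3$). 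With that detail filled in, the translation back---$T=\Delta$ gives $M=P$ and $f=gP^{-1}$, while $T=\Delta'$ gives $M=Pg$ and $f=P^{-1}$ by invertibility of $g$---delivers exactly the two families in the statement.
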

        
        Proposition \ref{prop:no_path_in_3dim} suggests that there is no continuous path in $\mathcal{E}_3$ connecting $I_3$ and $g$ along which Blackwell informativeness decreases. 
        Because if such a path existed, there would have to be an experiment, other than permutations of $I_3$ or $g$, that is more informative than $g$ but less informative than $I_3$, which is impossible according to the proposition. 
        
        We overcome the issue by imposing quasiconvexity on the cost function.
        Let $C \in \mathcal{C}_{m}$ be defined as \emph{quasiconvex} if for any $f, g \in \mathcal{E}_{m}$ and $\lambda \in [0,1]$, 
        \begin{equation*}
        C(\lambda f + (1-\lambda)g) \leq \max\{C(f), C(g)\}.
        \end{equation*}
        In other words, a mixture of two experiments cannot be more costly than both of them.
        \footnote{To make sense of this property, notice a mixture of two experiments, $\lambda f + (1-\lambda) g$, can be replicated by running experiment $f$ with probability $\lambda$, and experiment $g$ with probability $1-\lambda$, then reporting the realized signal without indicating which experiment was conducted. Thus, if the cost of $\lambda f+ (1-\lambda) g$ is higher than $\max \{ C(f), C(g)  \}$, one could make an arbitrage from this replication.}
        
        To see how quasiconvexity is able to address the difficulty raised in Proposition \ref{prop:no_path_in_3dim}, observe that $g = \frac{4}{5} I_3 + \frac{1}{5} I_{3}P$ for some permutation matrix $P$. When $C$ is quasiconvex, $C(g) \leq \max\{C(I_3), C(I_{3}P)\} = C(I_{3})$. Our next result shows that, under quasiconvexity, permutation invariance and decreasing in signal invariance serve as necessary and sufficient conditions for Blackwell monotonicity over $\mathcal{E}_m$.
        
        \begin{theorem}\label{thm:blackwell-monotone-characterization}
            Suppose $C \in \mathcal{C}_{m}$ is absolutely continuous and quasiconvex. Then, $C$ is Blackwell monotone if and only if $C$ is permutation invariant and decreasing in signal replacement. 
        \end{theorem}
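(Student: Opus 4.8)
The plan is to prove necessity directly and to reduce sufficiency to the reachability of the \emph{vertices} of the sublevel set, closing the argument with quasiconvexity. Necessity is immediate: since $f + \epsilon f^{j \rightarrow k}$ is a garbling of $f$ for every $\epsilon \in [0,1]$ (it equals $fM$ for the row-stochastic $M = I + \epsilon(E_{jk}-E_{jj})$), Blackwell monotonicity gives $C(f + \epsilon f^{j \rightarrow k}) \le C(f)$, and letting $\epsilon \downarrow 0$ yields $D^{+}C(f; f^{j \rightarrow k}) \le 0$, i.e.\ decreasing in signal replacement (interpreted via the one-sided directional derivative in the non-differentiable case, as in Appendix~\ref{appendix:binary}); likewise $f \simeq_B fP$ for any permutation matrix $P$ forces $C(f) = C(fP)$.

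For sufficiency, fix $f \succeq_B g$. Writing $\mathcal{M}_m$ for the $m \times m$ row-stochastic matrices, the sublevel set $S_B(f) = \{fM : M \in \mathcal{M}_m\}$ equals $\conv\{fM_\sigma : \sigma\}$: indeed $\mathcal{M}_m = \conv\{M_\sigma\}$ since each row ranges independently over a simplex, and $M \mapsto fM$ is linear. Here $\sigma : \{1,\dots,m\} \to \{1,\dots,m\}$, the matrix $M_\sigma$ has its $j$-th row equal to the $\sigma(j)$-th standard basis vector, and $fM_\sigma$ is the experiment whose $k$-th column is $\sum_{j : \sigma(j)=k} f^j$. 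Consequently any $g \in S_B(f)$ admits a finite convex representation $g = \sum_\sigma \lambda_\sigma\, fM_\sigma$ over these deterministic garblings, which contain the vertices of $S_B(f)$.

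The core step is to show $C(fM_\sigma) \le C(f)$ for every $\sigma$ by realizing $fM_\sigma$ from $f$ through finitely many complete single-signal replacements followed by one permutation. For each fiber $\sigma^{-1}(k) = \{j_1 < \cdots < j_p\}$ I would apply the replacements $h \mapsto h + h^{j_r \rightarrow j_1}$ (that is, $\epsilon = 1$) for $r = 2,\dots,p$, emptying columns $j_2,\dots,j_p$ and accumulating the fiber sum in column $j_1$; performing this over all (disjoint) fibers leaves the fiber sums sitting at the columns $\{j_1^{(k)}\}_{k \in \operatorname{Im}\sigma}$. Since $k \mapsto j_1^{(k)}$ is injective, there is a permutation $P$ with $P(j_1^{(k)}) = k$ for each $k$, and applying it lands exactly on $fM_\sigma$. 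Along each replacement segment $t \mapsto h_t = h + t\,h^{j \rightarrow k}$, $t \in [0,1]$, absolute continuity and the FTC give $C(h_1) - C(h_0) = \int_0^1 D^{+}C(h_t; h^{j \rightarrow k})\,dt$; because directional derivatives are positively homogeneous and $h_t^{j \rightarrow k} = (1-t)h^{j \rightarrow k}$, decreasing in signal replacement yields $D^{+}C(h_t; h^{j \rightarrow k}) = (1-t)^{-1} D^{+}C(h_t; h_t^{j \rightarrow k}) \le 0$ for almost every $t$, so the cost is non-increasing along the segment; the closing permutation leaves it unchanged by permutation invariance. Chaining these steps gives $C(fM_\sigma) \le C(f)$.

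Finally, extending the quasiconvexity inequality to finite convex combinations by induction gives $C(g) = C\!\left(\sum_\sigma \lambda_\sigma\, fM_\sigma\right) \le \max_\sigma C(fM_\sigma) \le C(f)$, completing the proof. The main obstacle is exactly the one flagged by Proposition~\ref{prop:no_path_in_3dim}: in general there is no monotone path inside $\mathcal{E}_m$ connecting $f$ to an interior point $g$ of $S_B(f)$, so the binary/embedding path-construction cannot be mimicked directly. The resolution is to route only to the finitely many vertices $fM_\sigma$—each reachable by the explicit replacement-plus-permutation construction above—and to let quasiconvexity absorb the interior point $g$ as a mixture of these vertices.
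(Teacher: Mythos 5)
Your proposal is correct and follows essentially the same route as the paper: reduce sufficiency via quasiconvexity to the extreme points of $\mathcal{M}_m$ (the deterministic garblings), then reach each $fM_\sigma$ from $f$ by a chain of complete single-signal replacement segments—each shown cost-decreasing via the FTC, positive homogeneity of $D^{+}C(\cdot;\cdot)$, and decreasing in signal replacement—closed off by permutation invariance. The only difference is organizational: the paper's Lemma \ref{lem: extreme-points-monotonicity} runs an induction on the rank of the extreme point (finding, for each rank-$(k-1)$ extreme point, a rank-$k$ one above it in both cost and the Blackwell order), whereas you build the same replacement chain explicitly by merging the fibers of $\sigma$; the substance of the argument is identical.
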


        \paragraph{Proof of Theorem \ref{thm:blackwell-monotone-characterization}}
            Necessity is established in the main text. 
            For sufficiency, we begin by considering the set of all $m \times m$ stochastic matrix, denoted by $\mathcal{M}_m$. 
            Then, for any $f, g \in \mathcal{E}_m $, $f \succeq_B g$ if and only if there exists $M \in \mathcal{M}_m$ such that $g = fM $. 
            
            Notice $\mathcal{M}_{m}$ is a convex subset of $\R_+^{m \times m}$ and its extreme points are given by the matrices with exactly one non-zero entry in each row (see e.g., \citet{Cao2022}). Let $\textbf{ext}(\mathcal{M}_{m}) = \{E_{1}, \cdots, E_{m^{m}}\} $ denote the set of all extreme points of $\mathcal{M}_{m}$. Then, since any $M \in \mathcal{M}_m$ can be written as a convex combination of elements in $\textbf{ext} (\mathcal{M}_m)  $, the quasiconvexity of $C$ implies that 
            \begin{equation*}
        		C(g) \leq \max \{C(fE): E \in \textbf{ext}(\mathcal{M}_{m}) \} ,
        	\end{equation*}
            Therefore, it suffices to show that $C(f) \ge C(fE) $ for all $E \in \textbf{ext} (\mathcal{M}_m) $. 
            
            For any $k \leq m$, let $\textbf{ext}_{k}(\mathcal{M}_{m})$ denote extreme-point matrices with rank $k$. 
            Note that $\textbf{ext}_m (\mathcal{M}_m)$ corresponds to the set of all permutation matrices. Thus, permutation invariance implies that $C(f) = C(f E) $ for all $E \in \textbf{ext}_m (\mathcal{M}_m) $. 
            The following lemma shows that decreasing in signal replacement implies that for any $1 \le k \le m$ and  $E \in \textbf{ext}_{k-1} (\mathcal{M}_m)$, there exists $E'  \in \textbf{ext}_k (\mathcal{M}_m) $ such that $f E'$ is at least as costly as $f E$. By repeatedly applying this, we can see that for any $E \in \textbf{ext} (\mathcal{M}_m) $, there exists $ \tilde{E} \in \textbf{ext}_m (\mathcal{M}_m) $ such that $C(f E) \le C(f \tilde{E}) = C(f)$, which completes the proof. 
        
            \begin{lemma}\label{lem: extreme-points-monotonicity}
        	Suppose $C \in \mathcal{C}_{m}$ is absolutely continuous and decreasing in signal replacement. Then for any $ 1 \leq k \leq m $ and $E \in \textbf{ext}_{k-1}(\mathcal{M}_{m}) $, there exists $E' \in \textbf{ext}_k (\mathcal{M}_{m})$ such that for all $\lambda \in [0,1]$,
        	\begin{equation}
        		fE' \succeq_{B} (1-\lambda) fE' + \lambda fE \succeq_{B} fE. \label{eq_induction}
        	\end{equation}
        	And it further implies $C(fE') \geq C(fE)$.
        \end{lemma}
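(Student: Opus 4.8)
The plan is to build $E'$ by \emph{splitting} a single merged target signal of $E$, to recognize the segment from $fE'$ to $fE$ as a \emph{partial} version of exactly that merge (so that the Blackwell chain \eqref{eq_induction} holds with one common idempotent garbling), and then to read off the cost inequality because the velocity of the segment is, at every point, a positive multiple of one admissible signal replacement.

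First I would construct $E'$. Each extreme point of $\mathcal{M}_m$ is a $0$-$1$ matrix that assigns every source column a unique target column, and its rank equals the number of distinct targets it uses. Since $E \in \textbf{ext}_{k-1}(\mathcal{M}_m)$ uses only $k-1 < m$ targets, by pigeonhole some target $c$ receives at least two sources, and at least one target $c'$ is unused. I define $E'$ to agree with $E$ except that it redirects a nonempty proper subset of the sources of $c$ to $c'$; this raises the number of distinct targets to $k$, so $E' \in \textbf{ext}_k(\mathcal{M}_m)$. Writing $h \equiv fE'$, the columns of $fE$ coincide with those of $h$ except that $(fE)^c = h^c + h^{c'}$ and $(fE)^{c'} = \mathbf{0}$; equivalently $fE = hG$, where $G$ is the stochastic matrix equal to the identity except in its $c'$-th row, which places unit mass on coordinate $c$. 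A key elementary fact is that $G$ is idempotent, $G^2 = G$.

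Next I would verify the Blackwell chain. The segment is $\phi(\lambda) \equiv (1-\lambda)fE' + \lambda fE = h\big((1-\lambda)I + \lambda G\big)$. Because $(1-\lambda)I + \lambda G$ is a convex combination of stochastic matrices, it is stochastic, giving $fE' \succeq_B \phi(\lambda)$. For the reverse comparison, idempotency yields $\phi(\lambda)G = (1-\lambda)hG + \lambda hG^2 = hG = fE$, so the single garbling $G$ collapses $\phi(\lambda)$ onto $fE$ and hence $\phi(\lambda) \succeq_B fE$. This establishes \eqref{eq_induction} for all $\lambda \in [0,1]$. For the cost, note the velocity is constant, $\phi'(\lambda) = hG - h$, supported on columns $c,c'$, equal to $+h^{c'}$ in column $c$ and $-h^{c'}$ in column $c'$. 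Since column $c'$ of $\phi(\lambda)$ equals $(1-\lambda)h^{c'}$, the signal replacement at the current point satisfies $\phi(\lambda)^{c'\rightarrow c} = (1-\lambda)(hG - h) = (1-\lambda)\phi'(\lambda)$, so $\phi'(\lambda) = \tfrac{1}{1-\lambda}\,\phi(\lambda)^{c'\rightarrow c}$ is a positive multiple of one valid signal replacement. Absolute continuity makes $\varphi(\lambda) \equiv C(\phi(\lambda))$ absolutely continuous, hence differentiable almost everywhere with $\varphi'(\lambda) = D^+C\big(\phi(\lambda);\phi'(\lambda)\big) = \tfrac{1}{1-\lambda}\,D^+C\big(\phi(\lambda);\phi(\lambda)^{c'\rightarrow c}\big) \le 0$ by decreasing in signal replacement, and the FTC gives $C(fE) = \varphi(1) \le \varphi(0) = C(fE')$.

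The main obstacle is precisely the step linking the segment's velocity to a single admissible signal replacement: $\phi'(\lambda)$ must be proportional, with a positive factor, to $\phi(\lambda)^{c'\rightarrow c}$ evaluated at the \emph{current} experiment rather than at an endpoint, and this hinges on keeping the moving mass entirely inside column $c'$ and on $G^2 = G$ so that both Blackwell comparisons are implemented by the same garbling. A minor care point is that the factor $\tfrac{1}{1-\lambda}$ blows up as $\lambda \to 1$, but only its sign enters the argument, so integrability of $\varphi'$ (guaranteed by absolute continuity) is unaffected; non-differentiable $C$ is handled throughout by reading the inequality in its one-sided directional-derivative form.
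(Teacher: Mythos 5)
Your proof is correct and follows essentially the same route as the paper: your idempotent merge matrix $G$ is exactly the paper's $I_m + B$ (with $B^2 = -B$ playing the role of $G^2 = G$), your splitting construction of $E'$ matches the paper's redirection of a source row from a multiply-hit column to an empty one, and the final step—rescaling the constant velocity by $\tfrac{1}{1-\lambda}$ to identify it with the signal replacement $\phi(\lambda)^{c'\rightarrow c}$ at the current point, then applying one-sided directional derivatives and the FTC—is the paper's argument verbatim. No gaps.
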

        
        \begin{proof}[Proof of Lemma \ref{lem: extreme-points-monotonicity}]
        	Since $E$ is not a full rank matrix, there exists a column $e^i $ such that at least two entries are equal to $1$. Let $e_{j}^i = e_{j'}^i = 1$. Additionally, there are $m-k+1 $ columns such that all the entries are equal to zero. Let one of such columns be $e^{i'}$. Let $E' $ be a matrix such that ${e'}_{j'}^{i'}=1$, ${e'}_{j'}^{i}=0$ and all other entries are same as $E$. Note that $E'$ has exactly $m-k$ empty columns, i.e., $E' \in \textbf{ext}_k (\mathcal{M}_{m})$.
            
        	Let $B$ denote an $m \times m$ matrix such that $b_{i'}^{i'} = -1$, $b_{i'}^i = 1$, and all other entries are equal to zero.
            Observe that for any $g \in \mathcal{E}_m$,
            when $l \neq i, i'$, $l$-th column of $g B_{i}^{i'}$ is equal to $\mathbf{0}$. Additionally, $i$-th column of $g B_{i}^{i'}$ is $-g^i$ and $i'$-th column of $gB_{i}^{i'}$ is $g^i$. Thus, $g B_{i}^{i'} $ is equal to $g^{i \rightarrow i'}$.

            Note that when $I_m$ is the identity matrix of size $m$,  $I_m + \lambda B$ is a stochastic matrix for any $\lambda \in [0,1]$. Observe that $B^2 = - B$ and $(I_m + \lambda B) \cdot (I_m + B) = I_m+ B$. Additionally, $E' (I_m+B) = E$ and $E'(I_m + \lambda B) = (1-\lambda) E' + \lambda E$. Therefore, we have 
        	\begin{align*}
        		&(1-\lambda)fE' + \lambda f E = f E'(I_m + \lambda B), \\
        		&fE = f E' (I_m + B) = fE'(I_m + \lambda B) \cdot  (I_m+B).
        	\end{align*}
        	Since $I_m + \lambda B$ and $I_m + B$ are stochastic matrices, \eqref{eq_induction} holds. 
        
            Next, from $g B = g^{i' \rightarrow i}$ and  decreasing in signal replacement, we have that for all $\lambda \in [0,1]$, 
        	\begin{align}\label{equ:thm_decreasing}
        		      &D^{+}  C( (1-\lambda)fE' + \lambda fE ; fE - ((1-\lambda)fE' + \lambda fE) ) \nonumber \\
        		=   &D^{+}  C( (1-\lambda)fE' + \lambda fE ; ((1-\lambda)fE' + \lambda fE)B) ~~ \leq 0.
        	\end{align}
        	
        	Finally, we show for such $E$ and $E'$, 
        	\begin{equation*}
        		C(fE') \geq C(fE).
        	\end{equation*}
        	For $\lambda \in [0,1]$, define the function $\varphi(\lambda) = C((1-\lambda)fE' + \lambda fE)$. By absolute continuity, $\varphi$ is differentiable almost everywhere on $[0,1]$ and satisfy 
        	\begin{equation*}
        		\varphi'(\lambda) = D^{+}C((1-\lambda)fE' + \lambda fE; fE - fE').
        	\end{equation*}
        	Then, the FTC implies that
            {\small
        	\begin{align*}
        		C(fE) - C(fE') & = \varphi(1) - \varphi(0) 
        		= \int_{0}^{1} \varphi'(\lambda) d\lambda 
        		 =\int_{0}^{1} D^{+}C((1-\lambda)fE' + \lambda fE; fE - fE') d\lambda \\
        		& = \int_{0}^{1} \frac{1}{1-\lambda} D^{+}C (  (1-\lambda)fE' + \lambda fE;  fE - ((1-\lambda)fE' + \lambda fE)) d\lambda  \le 0 ,
        	\end{align*}}
        	where the second last equality uses positive homogeneity of $D^{+}C(f;\cdot)$ and the last inequality follows from that \eqref{equ:thm_decreasing} holds for all $\lambda \in [0,1]$.
        \end{proof}
        
        
        \subsection{Remarks}
        
        \begin{remark}
            Theorem \ref{thm:blackwell-monotone-characterization} characterizes necessary and sufficient conditions for Blackwell monotonicity under the presence of quasiconvexity. This raises the question of whether quasiconvexity is necessary for Blackwell monotonicity. 
        The following example illustrates a cost function over binary experiments that is Blackwell monotone but not quasiconvex.
        \end{remark}
        
        \begin{example}\label{exp:non-necessity}
        	Suppose $n = m = 2$. Denote any experiment $f \in \mathcal{E}_{2}$ by $f = [f_{1}, f_{2}]^{\intercal}$. As before, we restrict attention to the set $\hat{\mathcal{E}}_2 = \{ (f_1, f_2) \ : 0 \leq f_1 \leq f_2 \leq 1 \}$. Consider $C: \hat{\mathcal{E}}_2 \rightarrow \R_{+}$ defined by 
        	\begin{equation*}
        		C(f) = \min \left\{ \frac{f_{2}}{f_{1}}, \frac{1-f_{1}}{1-f_{2}} \right\}.
        	\end{equation*}
        	By using \eqref{Blackwell_monotone_condition_binary_slope}, we can easily see that $f \succeq_B g$ implies $C(f) \ge C(g) $, i.e., $C$ is Blackwell monotone. 
        	
        	Consider $f = [0, 1/2]^{\intercal}$ and $g = [1/2, 1]^{\intercal}$ with costs $C(f) = C(g) = 2$. For the one-half mixture of them, given by $h = [1/4, 3/4]^{\intercal}$, the cost is $C(h) = 3 > C(f) = C(g)$.
        	Hence, this cost function is not quasiconvex. 
        \end{example}
        
        
        \begin{remark}
            Quasiconvexity is not needed in establishing Blackwell monotonicity over binary experiments. However, when quasiconvexity is imposed in this case, it is almost sufficient for Blackwell monotonicity. 
        
        Recall that any binary experiment can be represented by $f = [f_{1}, \cdots, f_{n}]^{\intercal} \in [0,1]^{n}$, and $\mathbf{0}$ and $\mathbf{1}$ are completely uninformative experiments. Let $C$ be \textit{non-null} if for any $f \in [0,1]^n$, $C(f) \ge C(\mathbf{1}) =C(\mathbf{0})$. 
        
        \end{remark}
        
        \begin{proposition} \label{prop:binary_qcx}
        	If $C \in \mathcal{C}_{2}$ is quasiconvex, permutation invariant, and non-null, then $C $ is Blackwell monotone. 
        \end{proposition}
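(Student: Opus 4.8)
The plan is to reduce the claim to the geometric description of the Blackwell order for binary experiments together with the convex-combination form of quasiconvexity. By Lemma~\ref{lem:geometric_binary_experiment}, $f \succeq_B g$ holds if and only if $g \in \pl(f, \mathbf{1}-f) = \{af + b(\mathbf{1}-f) : a,b \in [0,1]\}$, so it suffices to prove $C(g) \le C(f)$ for every $g$ in this parallelogram hull.

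First I would pin down the vertices of $\pl(f,\mathbf{1}-f)$. The map $(a,b)\mapsto af+b(\mathbf{1}-f)$ is linear and sends the unit square $[0,1]^2$ onto the parallelogram hull; since linear maps carry convex hulls to convex hulls, $\pl(f,\mathbf{1}-f)=\conv\{\mathbf{0},\,f,\,\mathbf{1}-f,\,\mathbf{1}\}$, the images of the four corners $(0,0),(1,0),(0,1),(1,1)$. Consequently, any experiment $g$ with $f\succeq_B g$ is a convex combination of the four experiments $\mathbf{0},f,\mathbf{1}-f,\mathbf{1}$.

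Next I would apply quasiconvexity in its finite form. Because quasiconvexity is equivalent to convexity of every sublevel set of $C$, it yields $C(g)\le\max\{C(\mathbf{0}),C(f),C(\mathbf{1}-f),C(\mathbf{1})\}$ for any convex combination $g$ of these points. It then remains only to evaluate this maximum: permutation invariance gives $C(\mathbf{1}-f)=C(f)$ (swapping the two signal labels of $f$ produces $\mathbf{1}-f$), while the non-null assumption gives $C(\mathbf{0})=C(\mathbf{1})\le C(f)$. Hence the maximum equals $C(f)$, so $C(g)\le C(f)$, which is exactly Blackwell monotonicity.

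I do not expect a serious technical obstacle: neither absolute continuity, differentiability, nor any path construction is needed, in contrast to Theorems~\ref{thm:binary-blackwell-monotone} and~\ref{thm:BM-general}. The only point deserving care is the step from the pairwise inequality stated in the definition of quasiconvexity to the bound for a convex combination of the four vertices; this is immediate from convexity of sublevel sets but should be recorded explicitly. The conceptual heart of the argument is simply that all four vertices of the parallelogram hull cost at most $C(f)$---two of them by permutation invariance and two by non-nullity---so quasiconvexity forces every point inside the hull, i.e., every Blackwell-less-informative experiment, to cost no more than $C(f)$.
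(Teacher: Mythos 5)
Your proof is correct and follows essentially the same route as the paper's: both invoke Lemma~\ref{lem:geometric_binary_experiment} to write any $g$ with $f \succeq_B g$ as a convex combination of the vertices $\mathbf{0}$, $f$, $\mathbf{1}-f$, $\mathbf{1}$, and then combine quasiconvexity with permutation invariance ($C(\mathbf{1}-f)=C(f)$) and non-nullness ($C(\mathbf{0})=C(\mathbf{1})\le C(f)$) to conclude $C(g)\le C(f)$. The only cosmetic difference is that the paper splits into the cases $a\ge b$ and $a<b$ to express $g$ as a combination of just three vertices each time, whereas you use all four at once; both variants rely on the same standard extension of pairwise quasiconvexity to finite convex combinations.
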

        
        
        \begin{remark}
            We provide a weaker version of quasiconvexity, which can also serve as a necessary condition for Blackwell monotonicity. 
        \end{remark}
        
        \begin{definition}\label{def:weak_quasiconvexity}
        	$C \in \mathcal{C}_{m}$ is \textbf{garbling-quasiconvex} if for all $f \in \mathcal{E}$, any finite collection of its garblings, $\{g_{1}, \cdots, g_{n}\}$, and $\lambda_{0}, \cdots, \lambda_{n} \in [0,1]$ with $\sum_{i=0}^{n} \lambda_{i} = 1$,
        	\begin{equation*}
        		C\left(\lambda_{0} f + \sum_{i=1}^{n} \lambda_{i}g_{i} \right) \leq \max\{C(f), C(g_{1}), \cdots, C(g_{n})\}.
        	\end{equation*}
        \end{definition}
        
        \begin{proposition}\label{thm:blackwell-monotone-characterization-weak-quasiconvexity}
        	Suppose $C \in \mathcal{C}_{m}$ is absolutely continuous. Then, $C$ is Blackwell monotone if and only if $C$ is permutation invariant, garbling-quasiconvex and decreasing in signal replacement. 
        \end{proposition}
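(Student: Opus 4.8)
The plan is to adapt the proof of Theorem~\ref{thm:blackwell-monotone-characterization}, observing that garbling-quasiconvexity is precisely the weakening of quasiconvexity that the extreme-point argument actually uses.

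For necessity, permutation invariance and decreasing in signal replacement are already established as consequences of Blackwell monotonicity in the main text, so only garbling-quasiconvexity remains. I would take any $f \in \mathcal{E}_m$, any garblings $g_1, \ldots, g_n$ of $f$ (so $g_i = f M_i$ for stochastic matrices $M_i$), and weights $\lambda_0, \ldots, \lambda_n \in [0,1]$ summing to one. The convex combination factors as $\lambda_0 f + \sum_{i=1}^{n} \lambda_i g_i = f\bigl(\lambda_0 I + \sum_{i=1}^{n} \lambda_i M_i\bigr)$, and since $I$ and each $M_i$ are stochastic and the weights sum to one, the bracketed matrix is stochastic. Hence the combination is itself a garbling of $f$, so Blackwell monotonicity yields $C(\lambda_0 f + \sum_{i} \lambda_i g_i) \le C(f) \le \max\{C(f), C(g_1), \ldots, C(g_n)\}$, which is exactly garbling-quasiconvexity.

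For sufficiency, I would follow the proof of Theorem~\ref{thm:blackwell-monotone-characterization} verbatim except at the single step that invokes quasiconvexity. Given $f \succeq_B g$, write $g = fM$ with $M \in \mathcal{M}_m$, and decompose $M = \sum_k \mu_k E_k$ as a convex combination of the extreme points $E_k \in \textbf{ext}(\mathcal{M}_m)$. Then $g = \sum_k \mu_k (f E_k)$, and crucially each $f E_k$ is a garbling of $f$ since $E_k$ is stochastic. Taking $\lambda_0 = 0$ in Definition~\ref{def:weak_quasiconvexity} and treating the $f E_k$ as the garblings, garbling-quasiconvexity applies directly and gives $C(g) \le \max\{C(f), C(f E_k) : E_k \in \textbf{ext}(\mathcal{M}_m)\}$. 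The remainder is unchanged: Lemma~\ref{lem: extreme-points-monotonicity}---which requires only absolute continuity and decreasing in signal replacement---shows that from any rank-$(k-1)$ extreme point one can pass to a rank-$k$ extreme point without decreasing cost, and iterating to the full-rank (permutation) case together with permutation invariance gives $C(f E_k) \le C(f)$ for every extreme point. Hence the displayed maximum equals $C(f)$, and $C(g) \le C(f)$.

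The only point requiring care is that garbling-quasiconvexity, unlike ordinary quasiconvexity, constrains only convex combinations whose terms are garblings of one common experiment; the main step is therefore to verify that the extreme-point decomposition produces exactly such a combination, namely $\{f E_k\}_k$ relative to $f$. Since this holds and Lemma~\ref{lem: extreme-points-monotonicity} does not rely on any convexity of $C$, no further machinery is needed, and the weaker hypothesis suffices in place of quasiconvexity.
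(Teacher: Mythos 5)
Your proof is correct and takes essentially the same route as the paper's: necessity from the observation that $\lambda_0 f + \sum_i \lambda_i g_i$ is itself a garbling of $f$, and sufficiency by decomposing the garbling matrix over the extreme points of $\mathcal{M}_m$, applying Lemma~\ref{lem: extreme-points-monotonicity} with permutation invariance to get $C(fE_k) \le C(f)$, and then invoking garbling-quasiconvexity. The only (cosmetic) difference is that the paper applies garbling-quasiconvexity to the perturbation $g_\epsilon = \epsilon f + (1-\epsilon)g$ and sends $\epsilon \to 0$ using continuity, whereas you invoke the definition directly with $\lambda_0 = 0$; since the stated definition allows $\lambda_0 = 0$, your more direct application is valid and dispenses with the limiting step.
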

        
        The necessity of garbling-quasiconvexity follows from the fact that $f \succeq_{B} \lambda_{0} f + \sum_{i} \lambda_{i}g_{i}$ for all such configurations. For sufficiency, the proof proceeds almost identically to that of Theorem \ref{thm:blackwell-monotone-characterization}, with additional steps required to show that garbling-quasiconvexity, together with continuity, is sufficient to establish the final step. 
        
        
        \subsection{Proofs}

        \subsubsection{Proof of Proposition \ref{prop:binary_qcx}}
        
        \begin{proof}[Proof of Proposition \ref{prop:binary_qcx}]
        	By Lemma \ref{lem:geometric_binary_experiment}, $f \succeq_B g$ if and only if $g = a f + b(\mathbf{1} -f)$ for $(a, b) \in [0,1]^2$. If $a \ge b$, $g= (1-a) \cdot \mathbf{0} +  (a-b) \cdot f + b \cdot \mathbf{1} $; and if $a < b$, $g = (1-b) \cdot \mathbf{0} + (b-a) \cdot (\mathbf{1} - f) + a \cdot \mathbf{1}$. From quasiconvexity and non-nullness, we have $C(f) \ge C(g)$ or $C(\mathbf{1}-f) \ge C(g)$. Then, by permutation invariance, $C(f) = C(\mathbf{1} -f) $, thus, $C(f) \ge C(g)$. 
        \end{proof}
        
        \subsubsection{Proof of Proposition \ref{thm:blackwell-monotone-characterization-weak-quasiconvexity}}
        
        \begin{proof}[Proof of Proposition \ref{thm:blackwell-monotone-characterization-weak-quasiconvexity}]
        	The necessity is already addressed in the main text. 
        	
        	For sufficiency, take any $f \succeq_{B} g$. By the same argument as in the proof of Theorem \ref{thm:blackwell-monotone-characterization}, all extreme points of $S_{B}(f)$ are in $S_{C}(f)$. By convexity of $S_{B}(f)$, $g$ can be written as a convex combination of these extreme points, denoted by $g = \sum_{i = 1}^{n} \lambda_{i} g_{i}$. Moreover, for all $\epsilon > 0$, $g_{\epsilon} \equiv \epsilon f + (1-\epsilon) g \in S_{B}(f)$. By garbling-quasiconvexity, 
        	\begin{equation*}
        		C(g_{\epsilon}) \leq \max\{C(f), C(g_{1}), \cdots, C(g_{n})\} \leq C(f),
        	\end{equation*}
        	for all $\epsilon > 0$, where the last inequality follows because $g_{i}$'s are extreme points of $S_{B}(f)$. Taking the limit as $\epsilon \to 0$, by continuity, we have $C(f) \geq C(g)$, and thus $C$ is Blackwell monotone.
        \end{proof}

        \section{Additional Materials}
        
        \subsection{Examples for the Binary-Binary Case  \label{Appendix:binarybinary}}
        
        In this section, we provide examples of information costs in the binary-binary case. 
        We begin by examining examples using Proposition \ref{prop:binary-increasing-main-text}. 
        
        \begin{example} \label{ex:BM1}
        	Consider two information cost functions defined over $\hat{\mathcal{E}}_2  $:
        	\begin{equation*}
        		C_1(f_1,f_2) \equiv  \left( \dfrac{f_2}{f_1} - 1 \right)^{2}\left( 1- \dfrac{1-f_2}{1-f_1} \right), \hspace*{30pt}
        		C_2(f_1,f_2) \equiv  \dfrac{f_2 (1-f_2) }{f_1 (1-f_1)} - 1. 
        	\end{equation*}
        	By using the definition of $\tilde{C}$, we have 
        	\begin{equation*}
        		\tilde{C}_1(\alpha, \beta) \equiv  (\alpha-1)^{2} \left(1-\frac{1}{\beta} \right), \hspace*{40pt}
        		\tilde{C}_2(\alpha, \beta) \equiv  \dfrac{ \alpha }{ \beta } - 1.\hspace*{40pt} 
        	\end{equation*}
        	Then, from $\alpha , \beta \ge 1$,  $\tilde{C}_1$ is increasing in both $\alpha$ and $\beta$, whereas $\tilde{C}_2$ is not increasing in $\beta$. Therefore, it follows that $C_1$ is Blackwell monotone, but $C_2$ is not. 
        \end{example}
        
        
        Next, imagine that in Figure \ref{fig:BM_cost_Binary}, we draw a curve passing through the point $A$ to illustrate a potential isocost curve, indicating the same information cost of a smooth cost function.
        When $C$ is differentiable at $f$, 
        \eqref{ineq:MCLH} and \eqref{ineq:MCHL} can be rewritten as follows:
        \begin{equation}
        	\underbrace{\dfrac{f_2}{f_1}}_{\begin{subarray}{c}
        			\text{the slope } 
        			\text{of } \overline{AB}
        	\end{subarray}} \ge 
        	\underbrace{-\dfrac{\partial C/\partial f_1 }{\partial C/\partial f_2}}_{\begin{subarray}{c}
        			\text{the slope of} \\
        			\text{the isocost curve} 
        	\end{subarray}} \ge
        	\underbrace{\dfrac{1-f_2}{1-f_1}}_{\begin{subarray}{c}
        			\text{the slope } 
        			\text{of } \overline{AD}
        	\end{subarray}}.\footnote{With some algebra, we can show that $f_2 \ge f_1$ and \eqref{ineq:MCLH} and \eqref{ineq:MCHL} imply $\frac{\partial C}{\partial f_2} \ge 0 \ge \frac{\partial C}{ \partial f_1} $. }	\label{Blackwell_monotone_condition_binary_slope}
        \end{equation}
        The slope of the isocost curve can be considered as the \textit{marginal rate of information transformation (MRIT)}. Thus, this inequality says that the MRIT of a Blackwell monotone cost function should fall between the two likelihood ratios provided by the experiment.
        
        We first provide examples of verifying Blackwell monotonicity using the marginal rate of information transformation described in \eqref{Blackwell_monotone_condition_binary_slope}. 
        
        
        \begin{example} \label{ex:BM2}
        	Consider two information cost functions defined over $\hat{\mathcal{E}}_2  $:
        	\begin{equation*}
        		C_3(f_1,f_2) \equiv  (f_2 - f_1)^2 , \hspace*{30pt}
        		C_4(f_1,f_2) \equiv  f_2 - 2 f_1 . 
        	\end{equation*}
        	Notice that 
        	\begin{equation*}
        		MRIT_3 \equiv - \dfrac{\partial C_3/\partial f_1}{\partial C_3/\partial f_2 } = 1, \hspace*{30pt} MRIT_4 \equiv - \dfrac{\partial C_4/\partial f_1}{\partial C_4/\partial f_2 } = 2. 
        	\end{equation*}
        	Then, for $C_3$, \eqref{Blackwell_monotone_condition_binary_slope} holds all $(f_1,f_2) \in \hat{\mathcal{E}}_2$, but not so for $C_4$, e.g., when $f_1 = .5$ and $f_2 = .6$. Therefore, we can conclude that $C_3$ is Blackwell monotone, but $C_4$ is not. 
        \end{example}
        
        
        \subsection{Non-Convexity of the MLRP set \label{OA:nonconvex-MLRP-set}}
        
        In this section, we show that the set of MLRP experiments, $\EMLRP$, is not convex by providing a pair of experiments such that both experiments satisfy the MLRP, while a convex combination of them does not. 
        
        Let $n=3$ and define $f, g \in \EMLRP_3$:
        \[  f = \begin{bmatrix} 
                    0.04 & 0.36 & 0.60 \\ 
                    0.02 & 0.18 & 0.80 \\
                    0.02 & 0.18 & 0.80 
                \end{bmatrix} \quad \text{ and } \quad 
            g = \begin{bmatrix} 
                    0.60 & 0.04 & 0.36 \\ 
                    0.40 & 0.06 & 0.54 \\
                    0.40 & 0.06 & 0.54
                \end{bmatrix}    .
        \]
        Observe that $f$ and $g$ satisfy the MLRP. 
        
        Let $h$ be the average of $f$ and $g$:
        \[
            h = \frac{1}{2} f + \frac{1}{2}g = 
            \begin{bmatrix} 
                0.32 & 0.20 & 0.48 \\ 
                0.21 & 0.12 & 0.67 \\ 
                0.21 & 0.12 & 0.67 
            \end{bmatrix}
        \]
        The combined experiment $h$ fails to satisfy the MLRP. To see this, consider the likelihood ratio for $\omega_2 $ vs. $\omega_1$. 
        \[  \frac{.21}{.32} > \frac{.12}{.2} < \frac{.67}{.48}.
        \]
        Therefore, the set of MLRP experiments is not convex. 
    
    \bibliographystyle{econ}
	\bibliography{references.bib}

\end{document}